\theoremstyle{definition}
\newcommand{\defeq}{\vcentcolon=}
\newtheorem{defn}{Definition}
\newcommand{\vol}{\text{Vol}}
\newcommand{\norm}{\text{Norm}}
\newcommand{\trace}{\text{Trace}}
\titlerunning{Subfield Algorithms for Ideal- and Module-SVP}
\begin{document}
\title{Subfield Algorithms for Ideal- and Module-SVP Based on the Decomposition Group}
%
%
\author{Christian Porter \inst{1} \and Andrew Mendelsohn \inst{2} \and  Cong Ling \inst{3}}
\authorrunning{C. Porter et al.}
%
\institute{Department of EEE, Imperial College London, London, SW7 2AZ, United Kingdom.\\ \href{c.porter17@imperial.ac.uk}{c.porter17@imperial.ac.uk} \and Department of EEE, Imperial College London, London, SW7 2AZ, United Kingdom. \\ \href{andrew.mendelsohn18@imperial.ac.uk}{andrew.mendelsohn18@imperial.ac.uk} \and Department of EEE, Imperial College London, London, SW7 2AZ, United Kingdom. \\ \href{c.ling@imperial.ac.uk}{c.ling@imperial.ac.uk}}
\maketitle              
\begin{abstract}
Whilst lattice-based cryptosystems are believed to be resistant to quantum attack, they are often forced to pay for that security with inefficiencies in implementation. This problem is overcome by ring- and module-based schemes such as Ring-LWE or Module-LWE, whose keysize can be reduced by exploiting its algebraic structure, allowing for faster computations. Many rings may be chosen to define such cryptoschemes, but cyclotomic rings, due to their cyclic nature allowing for easy multiplication, are the community standard. However, there is still much uncertainty as to whether this structure may be exploited to an adversary's benefit. In this paper, we show that the decomposition group of a cyclotomic ring of arbitrary conductor can be utilised to significantly decrease the dimension of the ideal (or module) lattice required to solve a given instance of SVP. Moreover, we show that there exist a large number of rational primes for which, if the prime ideal factors of an ideal lie over primes of this form, give rise to an ``easy'' instance of SVP.

It is important to note that the work on ideal SVP does not break Ring-LWE, since its security reduction is from worst case ideal SVP to average case Ring-LWE, and is one way.

\keywords{Ideal Lattice  \and Module Lattice \and Ring-LWE \and Module-LWE \and Shortest Vector Problem.}
\end{abstract}
\section{Introduction}
Cryptosystems based on lattices are one of the leading alternatives to RSA and ECC that are conjectured to be resistant to quantum attacks. Considered to be the genesis of the study of lattice cryptosystems. In 1996, Ajtai constructed a one-way function, and proved the average-case security related to the worst-case complexity of lattice problems \cite{ajtai}. Later, in 2005, Regev proposed the computational problem known as ``Learning With Errors'' (LWE) and showed that LWE is as hard to solve as several worst-case lattice problems \cite{LWE}. Whilst these problems are believed to be difficult to crack even given access to a quantum computer, their main drawback is their impracticality to implement in cryptosystems due to the large key sizes required to define them.
\\ \indent For this reason, lattices with algebraic structure are often favoured to define cryptosystems over conventional lattices. In particular, cryptosystems often employ the use cyclotomic polynomials as their cyclic nature allows for less cumbersome computations. Such lattices come in two main varieties: ideal lattices, whose structures are formed entirely by embedding an ideal of a number field into real or complex space. Module lattices, which are free modules defined over an algebraic ring and can be thought of as a compromise between classical and ideal lattices. 
\\
Perhaps the most well-known cryptosystem based on algebraic structure is the NTRU cryptosystem. Developed in 1996 by Hoffstein, Pipher and Silverman \cite{NTRU}, the NTRU cryptosystem uses elements of the convolution ring $\mathbb{Z}[x]/(x^p-1)$ and offers efficient encryption and decryption of messages, making it one of the most popular lattice-based cryptosystems even to this day. In \cite{securentru}, noting that the ring $\mathbb{Z}[x]/(x^p-1)$ could be deemed insecure due to the fact that $x^p-1$ is reducible, Stehl\'e and Steinfeld updated NTRU to incorporate a cyclotomic ring in place of the aforementioned ring. The computational problem involved in breaking NTRU can be thought of as a rank 2 module problem over a cyclotomic ring. In the same vein of ring-based cryptosystems, an algebraic variant of LWE called Ring-LWE (RLWE) was developed by Lyubashevsky, Peikert and Regev in 2010 \cite{RLWE}. It has been shown that the security of this scheme relies heavily on the hardness of ideal lattice reduction \cite{RLWE-hardness}. Moreover, the work by Ajtai was also generalised to the ring case by Micciancio in 2004 \cite{knapsack}. Using an arbitrary ring in place of a classical lattice, he managed to show that obtaining a solution to the ring-based alternative to the knapsack problem on the average was at least as hard as the worst-case instance of various approximation problems over cyclic lattices, even for rings with relatively small degree over $\mathbb{Z}$. Whilst there are a myriad of other schemes that make use of algebras to define a cryptosystem (see for example \cite{newhope}, \cite{kyber}, \cite{falcon}), concerns have been raised regarding the security of such schemes. Whilst the algebraic structure might allow for easier computations, the additional structure exhibited by an algebra could be exploited to allow for easier reduction of lattices based on algebras.
\\ \indent As we have already mentioned, cyclotomic polynomials exhibit many properties that are desirable in cryptography. In particular, power-of-two cyclotomic rings, that is, cyclotomic rings with conductor $N=2^n$ for some integer $n$, have found many applications. This is largely a consequence of a few properties exhibited by power-of-two cyclotomic rings: for example, $N/2$ is also a power of two, and arithmetic in the ring can be performed with ease using the $N$-dimensional FFT. However, restricting cryptosystems to only using power-of-two cyclotomic rings has its drawbacks. The most obvious of these drawbacks is the increase in dimension of the ideal lattice when moving from one power-of-two cyclotomic ring to the next, which doubles with each successive ring. The cryptosystem may require a lattice of intermediate security: for example, ideal lattices of the cyclotomic ring of conductor $1024$ have dimension $512$, but the next power-of-two cyclotomic ring has conductor $2048$. Hence, ideal lattices defined in this ring have dimension $1024$, which is a significant jump. For this reason amongst others, cryptographers have begun to move away from power-of-two cyclotomic rings to cyclotomic rings of more general conductor. However, this migration from power-of-two cyclotomics is relatively novel, and as such literature regarding the reduction of ideal lattices based on cyclotomic rings of general conductor is still lacking.
\subsection{Previous Works}
There have been a variety of studies into the shortest vector problem (SVP) in lattices generated by ideals. In 2016, Cramer, Ducas, Peikert and Regev published a paper detailing an attack on ideals generated by principal ideals of prime-power cyclotomic rings \cite{shortgen}. They presented a technique involving the use of the log-unit lattice, and showed that there is a polynomial time classical reduction from the problem of finding a short generator of a principal ideal (SPIP) to the problem of finding a generator of a principal ideal (PIP).  Moreover, in \cite{biasset}, it was shown how to solve PIP in polynomial time with a quantum computer by Biasse and Song. These results show that there is a polynomial time quantum algorithm for SPIP. We note that SPIP corresponds to finding approximate short vectors in principal ideals. Later, Cramer, Ducas, and Wesolowski gave a quantum (heuristic) reduction of the general approx-SVP algorithm to SPIP \cite{stickelb}. Therefore a quantum polynomial algorithm for approx-SVP follows from these papers.\\
\indent Simultaneously, largely inspired by Bernstein's work on subfield attacks against ideal lattices \cite{subfieldlogarithm}, Albrecht, Bai and Ducas proposed a different method to attack the NTRU cryptosystem with overstretched parameters - that is, the NTRU encryption scheme with much larger modulus \cite{subfield}. Their method entailed an attack on the NTRU cryptosystem by focusing on a sublattice, defined by a public key attained by ``norming down'' the public key of the original lattice to a subfield, and then ``lifting'' a solution on the sublattice to a solution for the original cryptosystem. Provided the solution is sufficiently good in the sublattice, it may yield a short lattice vector in the full lattice. Indeed, there are many examples of previous works which detail lattice attacks against ideal lattices. For a detailed list regarding previous research into the reduction of ideal lattices, we refer the reader to the ``previous works'' section of \cite{idealrandomprime}.
\\ \indent Recently, Pan, Xu, Wadleigh and Cheng pioneered a remarkable technique to approach the problem of SVP in prime and general ideal lattices, obtained from power-of-two cyclotomics \cite{idealrandomprime}. Their method involves manipulating the decomposition group of prime ideals in order to significantly reduce the dimension of the lattice required to solve the SVP. Their primary contributions splits into two parts. The first part considers a number field $L$ that is Galois over $\mathbb{Q}$ and shows that, given a prime ideal of its ring of integers $\mathcal{O}_L$, if the Hermite-SVP can be solved for a certain factor in a sublattice generated by a subideal, this yields a solution for the Hermite-SVP in the original ideal lattice with a larger factor, where the factor's increase depends only on the square root of the degree of $L$ over $\mathbb{Q}$ divided by the size of the decomposition group. 
\\ The second part of their paper is dedicated to ideals over the ring of integers of cyclotomic fields of conductor $N=2^{n+1}$. Under the so-called coefficient embedding, they showed that using a subgroup of the decomposition group of a prime ideal, the shortest vector in the ideal is equivalent to the shortest vector in a subideal constructed in the paper, and so solving SVP over such ideals is easy given an oracle to solve SVP in lattices of dimension equivalent to the dimension of the ideal lattice generated by the subideal. Moreover, if such a prime ideal lies above a rational prime $p$ of the form $p \equiv \pm 3 \mod 8$ then the shortest vector is of length $\sqrt{p}$, and is very easy to determine. In the final section, they showed that their method also worked for general ideals by considering the prime decomposition of an ideal.
\subsection{Our Results}
In this paper, we generalise the results of Pan et. al., both in their work on the Hermite-SVP for prime ideal lattices and solving SVP exactly for prime ideals in cyclotomic ideals. The first half of the paper is dedicated to the Hermite-SVP in ideal lattices. Whilst Pan et. al. only covered the case for lattices based on prime ideals lying above unramified primes, we extend their result to the case of general ideals whose prime ideal factors all lie over unramified primes, showing that by solving the Hermite-SVP on a subideal with some factor $\gamma$, the solution may be lifted to yield a solution for the Hermite-SVP in the original lattice with factor $\gamma^{\prime}$, where $\gamma^{\prime}/\gamma$ depends only on the factor given by Pan et. al. multiplied by a value determined by certain properties of the ideal and its decomposition group. We take this notion even more generally, and consider a module over the ring of integers of a Galois field and provide a method where a solution to the Hermite-SVP in a submodule generating a lattice of lower dimension may be lifted to a solution in the original module lattice for the Hermite-SVP with an upper bound, where the new constant is given in terms of the old factor multiplied by some factor dependent only on the ideals used to describe the module in the pseudo-basis representation.
\\ \indent The second half of the paper focuses on prime ideals of cyclotomic rings. Our work extends the results of Pan et. al. to prime ideal lattices constructed from cyclotomic rings with more general conductors, covering the cases of a general composite conductor $N=s2^{n+1}$ and $s^{\prime} p^{n+1}$ for some odd prime $p$, odd integer $s \geq 3$ and integer $s^{\prime}$, $\gcd(s^{\prime},p)=1$, which, combined with the work of Pan et. al., covers the case for any conductor $N$. In particular, our work shows that if the prime ideal in question lies above certain primes, then the dimension in which we have to solve SVP decreases significantly.
\begin{theorem}
Let $N=s2^{n+1}$, where $n$ is a positive integer and $s \geq 3$ is an odd integer. Let $\mathfrak{p}$ be a prime ideal in the ring $\mathbb{Z}[\zeta_N]$ and suppose that $\mathfrak{p}$ contains a rational prime $\rho$, where $\rho^{\phi(s)} \equiv 3 \mod 4$. Then, given an oracle that can solve SVP for $\phi(s)$-dimensional lattices, a shortest nonzero vector in $\mathfrak{p}$ can be found in time $\text{poly}(\phi(N),\log_2 \rho)$ under the canonical embedding.
\end{theorem}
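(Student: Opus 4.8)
The plan is to set the problem up as a subfield descent along $\mathbb{Q}\subseteq\mathbb{Q}(\zeta_s)\subseteq\mathbb{Q}(\zeta_N)$, exploiting that $[\mathbb{Q}(\zeta_N):\mathbb{Q}(\zeta_s)]=\phi(N)/\phi(s)=2^n$ and that $\mathbb{Z}[\zeta_N]=\mathbb{Z}[\zeta_s]\otimes_{\mathbb{Z}}\mathbb{Z}[\zeta_{2^{n+1}}]$ is free of rank $2^n$ over $\mathbb{Z}[\zeta_s]$. We may assume $\gcd(\rho,N)=1$, so $\rho$ is unramified and the decomposition group $D$ of $\mathfrak{p}$ over $\rho$ is cyclic, generated by the Frobenius, which under $\operatorname{Gal}(\mathbb{Q}(\zeta_N)/\mathbb{Q})\cong(\mathbb{Z}/N\mathbb{Z})^{\times}\cong(\mathbb{Z}/s\mathbb{Z})^{\times}\times(\mathbb{Z}/2^{n+1}\mathbb{Z})^{\times}$ is the class of $\rho$ (if $\rho\mid s$ one carries the inertia subgroup along, and the argument is analogous). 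The lower-dimensional object is $\mathfrak{q}:=\mathfrak{p}\cap\mathbb{Z}[\zeta_s]$, a prime of $\mathbb{Z}[\zeta_s]$ over $\rho$, whose canonical embedding is a lattice of dimension $\phi(s)$. Since each embedding $\mathbb{Q}(\zeta_s)\hookrightarrow\mathbb{C}$ extends in exactly $2^n$ ways, for $x\in\mathbb{Q}(\zeta_s)$ we have $\lVert x\rVert_{\mathbb{Q}(\zeta_N)}=\sqrt{2^n}\,\lVert x\rVert_{\mathbb{Q}(\zeta_s)}$, so the ``lift'' that reads a vector of $\mathfrak{q}$ as an element of $\mathfrak{p}$ is an isometry up to the scalar $\sqrt{2^n}$, and it, a $\mathbb{Z}$-basis of $\mathfrak{q}$, and the map $\mathfrak{p}\mapsto\mathfrak{q}$ are all computable in time $\operatorname{poly}(\phi(N),\log_2\rho)$.

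The algorithm is then immediate: compute $\mathfrak{q}$ and a basis of its canonical embedding, call the oracle to obtain a shortest nonzero $v\in\mathfrak{q}$, and output the lift of $v$ to $\mathfrak{p}$. Correctness amounts to the identity
\[
\lambda_1(\mathfrak{p})=\sqrt{2^n}\,\lambda_1(\mathfrak{q}),
\]
equivalently, that a shortest vector of $\mathfrak{p}$ may be taken inside $\mathbb{Q}(\zeta_s)\otimes\mathbb{R}\subseteq\mathbb{Q}(\zeta_N)\otimes\mathbb{R}$, whence it lies in $\mathfrak{p}\cap\mathbb{Q}(\zeta_s)=\mathfrak{q}$. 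The inequality ``$\le$'' is immediate from $\mathfrak{q}\subseteq\mathfrak{p}$ and the length scaling above.

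For the reverse inequality I would bring in the decomposition group. By Euler's theorem $\rho^{\phi(s)}\equiv 1\pmod s$, so $\operatorname{Frob}_{\mathfrak{p}}^{\phi(s)}$ lies in $\operatorname{Gal}(\mathbb{Q}(\zeta_N)/\mathbb{Q}(\zeta_s))$, fixing $\mathbb{Q}(\zeta_s)$ pointwise, while the hypothesis $\rho^{\phi(s)}\equiv 3\pmod 4$ pins down its action on the $2$-power layer (in particular it acts as complex conjugation on $\mathbb{Q}(i)\subseteq\mathbb{Q}(\zeta_N)$). Using the $D$-invariance of $\mathfrak{p}$ together with this control of the $2$-part — the analogue, relative to the base ring $\mathbb{Z}[\zeta_s]$ rather than $\mathbb{Z}$, of Pan et al.'s exact-SVP analysis in the power-of-two case, and consistent with the subfield machinery of the first half of the paper (whose lifting constant here degenerates to $1$) — one concludes that no vector of $\mathfrak{p}$ lying outside $\mathbb{Q}(\zeta_s)\otimes\mathbb{R}$ can be shorter than the lift of the shortest vector of $\mathfrak{q}$. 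Combining the two inequalities gives the displayed equality, so the output is a shortest nonzero vector of $\mathfrak{p}$, and the total running time is $\operatorname{poly}(\phi(N),\log_2\rho)$ plus one oracle call in dimension $\phi(s)$.

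The hard part will be exactly the inequality $\lambda_1(\mathfrak{p})\ge\sqrt{2^n}\,\lambda_1(\mathfrak{q})$: ruling out that the $2$-power part of $\mathfrak{p}$ produces, after rescaling, a vector shorter than anything visible over $\mathbb{Q}(\zeta_s)$. This is precisely where the congruence condition is used, as the generalisation of Pan et al.'s ``$p\equiv\pm 3\pmod 8$'' hypothesis, and I would establish it by induction up the tower of quadratic extensions $\mathbb{Q}(\zeta_s)\subset\mathbb{Q}(\zeta_{4s})\subset\cdots\subset\mathbb{Q}(\zeta_{2^{n+1}s})$: at each step the dimension doubles so the minimum can inflate by at most $\sqrt{2}$, and the condition on $\rho$ — through the splitting behaviour of the contraction of $\mathfrak{p}$ and the associated residue-field towers over $\mathbb{F}_{\rho}$ — forces the inflation to be exactly $\sqrt{2}$, the base case being the power-of-two analysis over $\mathbb{Z}[\zeta_s]$.
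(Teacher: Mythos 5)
Your setup is the right one and matches the paper's: for $\rho^{\phi(s)}\equiv 3\bmod 4$ the relevant subfield is indeed $K=\mathbb{Q}(\zeta_{2s})=\mathbb{Q}(\zeta_s)$, the relevant sublattice is $\mathfrak{c}=\mathfrak{p}\cap\mathcal{O}_K$, and the whole theorem reduces to the identity $\lambda_1(\mathfrak{p})=\sqrt{2^n}\,\lambda_1(\mathfrak{c})$ (the paper obtains this as the case $A=1$, $r=\min\{A-1,n\}=0$ of its Theorem~\ref{s2svp}). But your proposal stops exactly where the proof has to start: the inequality $\lambda_1(\mathfrak{p})\ge\sqrt{2^n}\,\lambda_1(\mathfrak{c})$ is asserted three times in different words ("one concludes that no vector of $\mathfrak{p}$ outside $\mathbb{Q}(\zeta_s)\otimes\mathbb{R}$ can be shorter", "the condition on $\rho$ forces the inflation to be exactly $\sqrt2$") but never derived. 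Your proposed induction up the quadratic tower does not close this gap, because its inductive step is literally the statement to be proven; note also that without the arithmetic input the step is false in general (if $\rho$ splits completely, $\lambda_1$ does not inflate at all as one climbs the tower, which is why the reduction fails for such primes).

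The two missing ingredients, which are the actual content of the paper's argument, are the following. First, the explicit factorisation result of Wang--Wang (Theorem~\ref{s2factorisation}): writing $\rho^{\phi(s)}=m2^A+1$ with $m$ odd, the hypothesis $\rho^{\phi(s)}\equiv 3\bmod 4$ says precisely $A=1$, and then every irreducible factor $f(x)$ of $\Phi_{s2^{n+1}}(x)$ over $\mathbb{F}_\rho$ has the form $g(x^{2^n})$ with $g$ an irreducible factor of $\Phi_{2s}(x)$. Hence $\mathfrak{p}=\langle\rho,f(\zeta_{s2^{n+1}})\rangle=\langle\rho,g(\zeta_{2s})\rangle$ is generated by elements of $\mathcal{O}_K$ (equivalently, $\mathfrak{c}$ is inert in $L/K$), and a direct computation with the $\mathbb{Z}$-bases gives the module decomposition $\mathfrak{p}=\bigoplus_{k=0}^{2^n-1}\zeta_{s2^{n+1}}^k\,\mathfrak{c}$. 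This is where the congruence condition enters, and it is a statement about residue degrees, not about the action of $\mathrm{Frob}^{\phi(s)}$ on $\mathbb{Q}(i)$ alone. Second, an orthogonality computation: for $x=\sum_{k=0}^{2^n-1}x_k\zeta_{s2^{n+1}}^k$ with $x_k\in\mathfrak{c}$ one shows
\begin{align*}
\trace_{L/\mathbb{Q}}(x\overline{x})=2^{n}\sum_{k=0}^{2^{n}-1}\trace_{K/\mathbb{Q}}(x_k\overline{x_k}),
\end{align*}
using $\sum_{j}\zeta_{2^{n}}^{j(k-l)}=0$ for $k\neq l$; only then does the direct sum become an orthogonal sum under the canonical embedding, so that the minimum of $\mathfrak{p}$ is attained with a single nonzero component $x_k\in\mathfrak{c}$. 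Without these two steps your argument is a restatement of the theorem rather than a proof of it; with them, your "reverse inequality" follows immediately and no induction up the tower is needed.
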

\begin{theorem}
Let $N=sp^{n+1}$, where $n$ is a positive integer, $p$ is an odd prime and $s$ is a positive integer such that $\gcd(s,p)=1$. Let $\mathfrak{p}$ be a prime ideal in the ring $\mathbb{Z}[\zeta_N]$ and suppose that $\mathfrak{p}$ contains a rational prime $\rho$, where $\rho^{\phi(s)} =lp + a$ for some integers $l,a$, $\gcd(l,p)=\gcd(a,p)=1$. Then, given an oracle that can solve SVP for $(p-1)\phi(s)$-dimensional lattices, a shortest nonzero vector in $\mathfrak{p}$ can be found in time $\text{poly}(\phi(N),\log_2 \rho)$ under the canonical embedding.
\end{theorem}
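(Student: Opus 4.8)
The plan is to realise the ideal lattice of $\mathfrak{p}$ under the canonical embedding as an orthogonal direct sum of copies of the ideal lattice of a prime of the subfield $K=\mathbb{Q}(\zeta_{sp})$, following Pan et al., and then to call the oracle on that smaller lattice. Write $L=\mathbb{Q}(\zeta_N)$, so $[K:\mathbb{Q}]=\phi(sp)=(p-1)\phi(s)$. Since $\gcd(s,p)=1$ we identify $\mathrm{Gal}(L/\mathbb{Q})\cong(\mathbb{Z}/s\mathbb{Z})^{\times}\times(\mathbb{Z}/p^{n+1}\mathbb{Z})^{\times}$, under which $\mathrm{Gal}(L/K)=\{1\}\times P$, where $P=\{v:v\equiv1\bmod p\}$ is the cyclic Sylow-$p$ subgroup of $(\mathbb{Z}/p^{n+1}\mathbb{Z})^{\times}$, of order $p^{n}=\phi(N)/\phi(sp)$. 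The first step, which I expect to be the main obstacle, is a number-theoretic lemma: the hypothesis $\rho^{\phi(s)}=lp+a$ with $\gcd(l,p)=\gcd(a,p)=1$ is exactly what is needed to force the decomposition group $D$ of $\mathfrak{p}$ over $\rho$ to contain $\mathrm{Gal}(L/K)$. Since $L/\mathbb{Q}$ is abelian and $\gcd(\rho,N)=1$ (the condition $\gcd(a,p)=1$ forces $\rho\neq p$; the ramified case $\rho\mid s$ is handled similarly, working with $D/I_{\mathfrak{p}}$, or excluded), we have $D=\langle\mathrm{Frob}_{\rho}\rangle$ with $\mathrm{Frob}_{\rho}=(\rho\bmod s,\rho\bmod p^{n+1})$, and $\mathrm{Frob}_{\rho}^{\phi(s)}=(1,\rho^{\phi(s)}\bmod p^{n+1})$. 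One must then show that the stated conditions on $l$ and $a$ force the order of $\rho^{\phi(s)}$ modulo $p^{n+1}$ to be divisible by $p^{n}$, so that $\langle\mathrm{Frob}_{\rho}^{\phi(s)}\rangle\supseteq\{1\}\times P$, i.e., $D\supseteq\mathrm{Gal}(L/K)$. Pinning down this implication — in particular handling the interaction with $\mathrm{ord}_s(\rho)$ when it is a proper divisor of $\phi(s)$, and with the $p$-part of $\phi(s)$ — is the delicate part.

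Granting $D\supseteq\mathrm{Gal}(L/K)$, set $\mathfrak{q}=\mathfrak{p}\cap\mathcal{O}_K$. Then, $\rho$ being unramified and $\mathrm{Gal}(L/K)$ acting transitively on the primes of $\mathcal{O}_L$ above $\mathfrak{q}$ while fixing $\mathfrak{p}$, the prime $\mathfrak{p}$ is the unique prime of $\mathcal{O}_L$ above $\mathfrak{q}$ and $\mathfrak{p}=\mathfrak{q}\mathcal{O}_L$. Next I would use that $\mathcal{O}_L=\mathbb{Z}[\zeta_N]=\mathbb{Z}[\zeta_{sp},\zeta_{p^{n+1}}]=\mathcal{O}_K[\zeta_{p^{n+1}}]$ is free over $\mathcal{O}_K$ on the root-of-unity power basis $1,\zeta_{p^{n+1}},\dots,\zeta_{p^{n+1}}^{\,p^{n}-1}$ (the minimal polynomial of $\zeta_{p^{n+1}}$ over $K$ being $x^{p^{n}}-\zeta_p$), so that $\mathfrak{p}=\bigoplus_{j=0}^{p^{n}-1}\mathfrak{q}\,\zeta_{p^{n+1}}^{\,j}$ as $\mathcal{O}_K$-modules.

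The exactness of the resulting SVP reduction comes from a Parseval-type identity for the canonical embedding, using that multiplication by a root of unity is an isometry there. Each embedding $\tau\colon K\hookrightarrow\mathbb{C}$ has exactly $p^{n}$ extensions to $L$, sending $\zeta_{p^{n+1}}$ to the $p^{n}$ complex $p^{n}$-th roots of $\tau(\zeta_p)$, each of modulus $1$; for $x=\sum_{j}x_j\,\zeta_{p^{n+1}}^{\,j}$ with $x_j\in K$, summing $\bigl|\sum_j\tau(x_j)\omega^{j}\bigr|^{2}$ over these roots $\omega$ collapses, by orthogonality of the characters of $\mathbb{Z}/p^{n}\mathbb{Z}$, to $p^{n}\sum_j|\tau(x_j)|^{2}$, whence $\lVert x\rVert^{2}=p^{n}\sum_{j}\lVert x_j\rVert^{2}$ for the canonical-embedding norms of $L$ and $K$. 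Hence the canonical-embedding lattice of $\mathfrak{p}=\bigoplus_j\mathfrak{q}\,\zeta_{p^{n+1}}^{\,j}$ is the orthogonal direct sum of $p^{n}$ rescaled copies of that of $\mathfrak{q}$, so $\lambda_1(\mathfrak{p})=\sqrt{p^{n}}\,\lambda_1(\mathfrak{q})$ and a shortest vector of $\mathfrak{p}$ can be taken to be a shortest vector of $\mathfrak{q}$, viewed inside $\mathcal{O}_L$. The algorithm is then: compute $K$ and $\mathfrak{q}=\mathfrak{p}\cap\mathcal{O}_K$ by Hermite-normal-form / linear algebra, build a $\mathbb{Z}$-basis of the $(p-1)\phi(s)$-dimensional lattice $\sigma_K(\mathfrak{q})$, invoke the oracle to obtain a shortest $v\in\mathfrak{q}$, and return $v$ viewed inside $\mathfrak{p}$; every step is $\mathrm{poly}(\phi(N),\log_2\rho)$. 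Besides the number-theoretic lemma above, the only remaining points needing care are the identity $\mathcal{O}_L=\mathcal{O}_K[\zeta_{p^{n+1}}]$ and the verification that the root-of-unity power basis diagonalises the canonical Gram form, so that the splitting is genuinely orthogonal rather than merely a filtration.
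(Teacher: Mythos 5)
Your overall strategy coincides with the paper's: the paper obtains this theorem by specialising Theorem \ref{spsvp} to the case $A=1$ (so $r=0$), where the relevant subfield collapses to $K=\mathbb{Q}(\zeta_{sp^{n+1}}^{p^{n}})=\mathbb{Q}(\zeta_{sp})$ of degree $(p-1)\phi(s)$; it writes $\mathfrak{p}=\bigoplus_{k}\zeta_{N}^{k}\mathfrak{c}$ with $\mathfrak{c}=\mathfrak{p}\cap\mathcal{O}_{K}$ and checks by a trace computation that the summands are orthogonal under the canonical embedding, so that a shortest vector of $\mathfrak{c}$ is a shortest vector of $\mathfrak{p}$. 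Your direct sum over the power basis $1,\zeta_{p^{n+1}},\dots,\zeta_{p^{n+1}}^{p^{n}-1}$ of $\mathcal{O}_L$ over $\mathcal{O}_K$ and your Parseval identity $\|x\|^{2}=p^{n}\sum_{j}\|x_{j}\|^{2}$ are a correct (and somewhat cleaner) rendering of exactly those steps, and your Frobenius formulation of the decomposition-group condition is equivalent to the paper's formulation via irreducible factors of $\Phi_{sp^{n+1}}$ modulo $\rho$ (the paper's Theorem \ref{spfactorisation}, which asserts $f(x)=g(x^{p^{n}})$ with $g\mid\Phi_{sp}$, is precisely the statement that the inertia degree grows by the factor $p^{n}$ from $K$ to $L$).

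The genuine gap is the step you yourself flag and defer: that the hypothesis on $\rho^{\phi(s)}$ forces $D\supseteq\mathrm{Gal}(L/K)$, equivalently $\mathrm{ord}_{sp^{n+1}}(\rho)=p^{n}\cdot\mathrm{ord}_{sp}(\rho)$. This is not a side technicality; it is the entire arithmetic content of the theorem (everything after it works for any abelian $L/K$ once the containment holds), and the paper devotes Theorem \ref{spfactorisation}, Lemma \ref{pqdivide} and the appendix to it. Your instinct that the implication is delicate is correct, and in fact more is true: what is actually needed is that $\rho^{\phi(s)d}\not\equiv 1\pmod{p^{2}}$ for $d=\mathrm{ord}_{p}(\rho^{\phi(s)})$, and when $a\not\equiv 1\pmod{p}$ this does not follow from the stated condition $p\nmid l$ alone. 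For instance, with $p=5$, $s=1$, $\rho=7=1\cdot 5+2$ one has $\gcd(l,p)=\gcd(a,p)=1$, yet $7^{4}\equiv 1\pmod{25}$, so $\mathrm{ord}_{25}(7)=\mathrm{ord}_{5}(7)=4$, the decomposition group of a prime above $7$ in $\mathbb{Z}[\zeta_{25}]$ meets $\mathrm{Gal}(L/\mathbb{Q}(\zeta_{5}))$ trivially, and $\mathfrak{p}\neq\bigoplus_{j}\zeta_{25}^{j}(\mathfrak{p}\cap\mathbb{Z}[\zeta_{5}])$ (the right-hand side is $7\mathbb{Z}[\zeta_{25}]$, of the wrong norm). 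So to close the gap you must either restrict to $a\equiv\pm 1\pmod{p}$ (as the paper's Algorithm 2 input line in fact does), where a lifting-the-exponent argument gives the order growth immediately, or impose the non-degeneracy condition at the level of $\rho^{\phi(s)d}$; as written, your proposal leaves the decisive lemma unestablished, and no choice of the remaining machinery can substitute for it.
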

\noindent As with the case for conductor $N=2^{n+1}$, we must ask whether the ``average case'' of prime ideal SVP over such cyclotomic rings is easy. This question in itself is ill-defined, and depends on how we define the distribution from which we choose the prime ideal. As we show in section 7, if we were to pick our ideal by uniformly choosing an ideal from the set of prime ideals whose rational prime lies below a certain bound, the probability of choosing an easily solvable ideal lattice is non-negligible. However, if we are to uniformly choose from the distribution of ideals of norm less than a certain bound, the probability of choosing an easily solvable ideal lattice is negligible.
\\ \indent In the last few sections, we also cover the case of general cyclotomic ideals and modules defined over a pseudo basis of ideals and vectors. For the case of general ideals, in a similar fashion to that in Pan et. al.'s work, we analyse SVP by studying the prime decomposition of ideals, and show that the shortest nonzero vector in a general ideal can be found by finding the shortest nonzero vector in a subideal of a smaller dimension. Moreover, the algorithm used to tackle SVP in such a lattice does not use the prime decomposition of the ideal, which is the most computationally complex step after SVP. In the module case, using sufficient generalisations of the decomposition group applied to modules, we show that SVP in the original module can be solved by finding the shortest nonzero vector in a submodule which has smaller dimension as a lattice after canonically embedding, and provide an algorithm by which we can solve SVP in such structures.
\\ \indent Whilst the work on ideal-SVP may initially appear to destabilise the security of cryptosystems based on cyclotomic ideals, we must point out that this work does not break Ring-LWE. Though Ideal-SVP underpins the security of Ring-LWE, our work does not directly impact the security of these schemes, since the worst-case to average-case security reduction is one-way. However, our results on module SVP may cause concern for the security of MLWE, and hence RLWE. Unlike ideal SVP and RLWE, module SVP and MLWE are known to be (polynomially) equivalent, as are MLWE and RLWE. In section 6, we offer an algorithm that can be used to find the shortest vector in a general module lattice over a cyclotomic ring by solving SVP in a submodule defined over a cyclotomic ring of lesser degree, which decreases the dimension of the lattice required to perform SVP under the canonical embedding significantly in some cases.
\subsection{Paper organisation}
The paper is organised as follows. Section 2 covers the mathematical preliminaries, including a definition of lattices and their various properties, some basic algebraic number theory and ideal lattices. The preliminary section ends with some useful lemmas regarding the factorisation of polynomials over finite fields. Section 3 covers a reduction Hermite-SVP in ideal lattices and module lattices based over a Galois extension of $\mathbb{Q}$. Section 4 presents a reduction of SVP for prime ideals of cyclotomic rings of general conductor, and show that some special cases of prime ideals are much easier to perform SVP for than others. In section 5, it is shown that our method for prime ideal lattices may be lifted to the case of general ideal lattices of cyclotomic rings. In section 6, we show that modules over cyclotomic rings may be subject to a similar reduction of SVP by studying the decomposition group of the module, and provide an algorithm which can be used to solve SVP in module lattices over cyclotomic rings. In section 7, we discuss the average-case hardness of SVP for ideal and module lattices of cyclotomic rings. 
\section{Mathematical Preliminaries}
\subsection{Lattices and the Shortest Vector Problem}
A lattice is a discrete additive subgroup of $\mathbb{R}^D$. A lattice $L$ has a basis $B=\{\mathbf{b}_1,\dots,\mathbf{b}_d\}, \mathbf{b}_i \in \mathbb{R}^D$ for some integer $d \leq D$, and every lattice point may be represented by the linear sum of basis vectors over the integers, that is,
\begin{align*}
    L=L(B)=\left\{\sum_{i=1}^dx_i \mathbf{b}_i: x_i \in \mathbb{Z}\right\}.
\end{align*}
We say that $L$ is full-rank if $d=D$. The determinant of $L$, $\det(L)$, is the square root of the volume of the fundamental parallelopiped generated by the lattice basis. If $L$ is full-rank, then $\det(L)=|\det(B)|$.
\\
In cryptography, the security of a lattice-based cryptosystem in most cases boils down to the computational hardness of the shortest vector problem (SVP). The problem can be stated as follows. Given a lattice $L$ with basis $B$, find the shortest nonzero vector in $L$ with respect to the Euclidean (or otherwise specified) norm. Most cryptosystems, however, loosen the requirement of finding the shortest nonzero vector, and require the assailant to find a nonzero vector within some range of the shortest vector. One such problem is known as the Hermite-SVP, and goes as follows.
\begin{definition}
Let $L$ be a rank $N$ lattice. The $\gamma$-Hermite-SVP is to find a nonzero lattice vector $\mathbf{v} \in L$ that satisfies
\begin{align*}
    \|\mathbf{v}\| \leq \gamma \det(L)^{1/N},
\end{align*}
for some approximation factor $\gamma \geq 1$.
\end{definition}
\noindent As opposed to the shortest lattice vector, the determinant of a lattice is well-defined and can be verified easily. Moreover, as discussed in \cite{idealrandomprime}, a solution to the Hermite-SVP can be lifted to a solution for a variety of different SVP-related problems.
\subsection{Algebraic Number Theory}
An algebraic number field $L$ is a finite extension of $\mathbb{Q}$ by some algebraic integer $\alpha$, that is, the solution to a polynomial in $\mathbb{Z}[x]$. The degree of $L$ over $\mathbb{Q}$ is equivalent to the degree of the minimal polynomial of $\alpha$ in $\mathbb{Q}(x)$. We denote by $\mathcal{O}_L$ the ring of integers of $L$, which is the maximal order of $L$. It is well known in algebraic number theory that any algebraic number field $L$ is a $\mathbb{Q}$-vector space over the power basis $\{1,\theta,\theta^2,\dots, \theta^{N-1}\}$ for some $\theta \in L$, and similarly the ring of integers $\mathcal{O}_L$ may be expressed as a $\mathbb{Z}$-module over a power basis $\{1,\theta^{\prime},\dots, {\theta^{\prime}}^{N-1}\}$ for some $\theta^{\prime} \in \mathcal{O}_K$ \cite{algnumbertheory}.
\\ \indent For a positive integer $N$, the cyclotomic polynomial $\Phi_N(x)$ is the polynomial given by
\begin{align*}
    \Phi_{N}(x)=\prod_{k=1: \gcd(k,N)=1}^N\left(x-\exp\left(\frac{2\pi i k}{N}\right)\right),
\end{align*}
or in other words, the polynomial whose roots are all the primitive $N$th roots of unity. For ease of notation, we generally let $\zeta_N=\exp\left(\frac{2\pi i}{N}\right)$ denote the $N$th root of unity. The field $L=\mathbb{Q}(\zeta_N)$ obtained by appending $\zeta_N$ to $\mathbb{Q}$ is called the cyclotomic field of conductor $N$, and such a field is of degree $\phi(N)$ over $\mathbb{Q}$, where
\begin{align*}
    \phi(N) = N\prod_{p \mid N: p\hspace{0.5mm} \text{prime}}\left(1-\frac{1}{p}\right)
\end{align*}
is Euler's totient function, which measures the number of integers less than or equal to $N$ which are coprime to $N$.
\\ The embeddings $\sigma$ of a number field $L$ are the injective homomorphisms from $L$ to $\mathbb{C}$ which fix $\mathbb{Q}$. The number of distinct embeddings is equivalent to the degree of $L$ over $\mathbb{Q}$, and an embedding $\sigma$ is said to be a real embedding if $\sigma(L) \subset \mathbb{R}$, and is said to be a complex embedding if $\sigma(L) \not\subset \mathbb{R}$. We define the \emph{canonical embedding} $\Sigma_L$ from a number field $L$ of degree $N$ to $\mathbb{C}^N$ by
\begin{align*}
    \Sigma_L: L \to \mathbb{C}^N \hspace{2mm} a \mapsto (\sigma_1(a),\sigma_2(a),\dots, \sigma_N(a)).
\end{align*}
Moreover, we respectively define the trace and norm of an element in $L$ by
\begin{align*}
    \trace_{L/\mathbb{Q}}(a) \defeq \sum_{i=1}^N\sigma_i(a), \hspace{2mm} \norm_{L/\mathbb{Q}}(a)=\prod_{i=1}^N\sigma_i(a).
\end{align*}
Defining by $\overline{a}$ the complex conjugate of an element $a \in L$, note that $\beta(x,y) \defeq \trace_{L/\mathbb{Q}}(x\overline{y})$ for all $x,y \in L$ defines a positive-definite bilinear form on the $\mathbb{Q}$-vector space generated by $L$. 
\\ \indent Another way of embedding a number field $L$ into $\mathbb{C}^N$ is using the so-called \emph{coefficient embedding}. By expressing $L$ by a $\mathbb{Q}$-vector space over a power basis $\{1,\alpha,\dots, \alpha^{N-1}\}$, we may take any element $a=\sum_{i=0}^{N-1}a_i\alpha^i$ where $a_i \in \mathbb{Q}$ and define the embedding map
\begin{align*}
    a \mapsto (a_0,a_1,\dots, a_{N-1}).
\end{align*}
It is important to note that the coefficient embedding depends on the choice of basis for $L$. Also, if $\mathcal{O}_K=\mathbb{Z}[\alpha]$ then $\mathcal{O}_K$ maps to $\mathbb{Z}^N$ under the coefficient embedding.
\subsection{Ideal Lattices}
An ideal $\mathcal{I}$ is a subring of the ring of integers $\mathcal{O}_L$. We say that an ideal $\mathfrak{p}$ is prime if, for any $ab \in \mathfrak{p}$ for $a,b \in \mathcal{O}_L$, then either $a$ or $b$ is an element of $\mathfrak{p}$. Under the canonical embedding, an ideal forms a lattice in $\mathbb{R}^N$, and we call lattices constructed in this way \emph{ideal lattices}. The volume of an ideal lattice $\mathcal{I}$ in $\mathbb{R}^N$ is $\norm_{L/\mathbb{Q}}(\mathcal{I})disc(L/\mathbb{Q})$, where $\norm_{L/\mathbb{Q}}(\mathcal{I})$ is the norm of the ideal $\mathcal{I}$ and is equivalent to the cardinality of $\mathcal{O}_L/\mathcal{I}$ (roughly speaking, the ``density'' of the ideal in $\mathcal{O}_L$), and $disc(L/\mathbb{Q})$ is the discriminant of $L$ over $\mathbb{Q}$, which is equivalent to the volume of the lattice generated after embedding $\mathcal{O}_L$ via the canonical embedding.
\\ \indent In lattice-based cryptography, the cyclotomic number field $L=\mathbb{Q}(\zeta_N)$ is frequently used to define an ideal lattice. The ring of integers of $L$ is $\mathcal{O}_L=\mathbb{Z}[\zeta_N]$ for any conductor $N$ \cite{washington}. Suppose that the cyclotomic polynomial $\Phi_N(x)$ factors in the finite field as
\begin{align*}
    \Phi_N(x)=\prod_{i=1}^g f_i(x)^e \mod p
\end{align*}
for some rational prime $p$, where $f_i(x)$ are irreducible mod $p$. Then the ideal $p\mathcal{O}_L$ factors as
\begin{align*}
    p\mathcal{O}_L=(\mathfrak{p}_1\mathfrak{p}_2\dots \mathfrak{p}_g)^e,
\end{align*}
where each $\mathfrak{p}_i=\langle p,f(\zeta_N) \rangle$ are prime ideals. We say the ideal $\mathfrak{p}_i$ \emph{lies over} $p$. If $e>1$, then $p$ is said to be \emph{ramified} in $\mathcal{O}_L$, and otherwise ($e=1$) $p$ is \emph{unramified} in $\mathcal{O}_L$. As such, we are motivated to study the factorisation of cyclotomic polynomials over finite fields in order to better study the structure of prime ideals. However, before we delve into more technical details regarding the factorisation of polynomials over finite fields, we introduce the following definition, which will be a recurring theme throughout the paper, and will be a powerful tool to help tackle SVP in ideal lattices.
\begin{definition}
Let $L/\mathbb{Q}$ be a finite Galois extension of degree $N$, and let $G$ be the Galois group of $L$ over $\mathbb{Q}$. The \emph{decomposition group} $D$ of a prime ideal $\mathfrak{p}$ is a subgroup of $G$ satisfying
\begin{align*}
    D=\{\sigma \in G: \sigma(\mathfrak{p})=\mathfrak{p}\},
\end{align*}
that is, the embeddings of $L$ that fix $\mathfrak{p}$. Then the decomposition field $K$ of $\mathfrak{p}$ is defined by
\begin{align*}
    K=\{x \in L: \forall \sigma \in D, \sigma(x)=x\},
\end{align*}
that is, the subfield of $L$ that is fixed by the decomposition group.
\end{definition}
\subsection{Factorisation of Cyclotomic Polynomials over Finite Fields}
The following lemmas will be used throughout section 4 onwards. Lemmas \ref{factornumber}-\ref{order} are standard in the study of finite fields, and we point the reader to \cite{finitefields} for more details, and also for any terminology regarding finite fields. Lemma \ref{pqdivide} is stated and proved in \cite{factorpn}.
\begin{lemma}\label{factornumber}
Let $q$ be a power of a prime and $N$ be a positive integer such that $\gcd(q,N)=1$. Then the $N$th cyclotomic polynomial $\Phi_{N}(x)$ can be factorised into $\phi(N)/m$ distinct monic irreducible polynomials of the same degree $m$ over $\mathbb{F}_q$, where $m$ is the least positive integer such that $q^m \equiv 1 \mod N$.
\end{lemma}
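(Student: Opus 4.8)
The plan is to pass to an algebraic closure $\overline{\mathbb{F}_q}$ and analyse the roots of $\Phi_N$ together with the action of the Frobenius map $\phi\colon x\mapsto x^q$. First I would use the hypothesis $\gcd(q,N)=1$ to note that $x^N-1$ is separable over $\mathbb{F}_q$: its formal derivative $Nx^{N-1}$ is a nonzero unit times $x^{N-1}$, which is coprime to $x^N-1$. Since the identity $x^N-1=\prod_{d\mid N}\Phi_d(x)$ holds over $\mathbb{Z}$, it holds over $\mathbb{F}_q$, and separability of $x^N-1$ forces the factors $\Phi_d$ (for $d\mid N$) to be pairwise coprime and squarefree in $\mathbb{F}_q[x]$. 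Hence the roots of $\Phi_N$ in $\overline{\mathbb{F}_q}$ are exactly the $N$th roots of unity that are not roots of any $\Phi_d$ with $d\mid N$, $d<N$, i.e. exactly the elements of multiplicative order precisely $N$, and they are all simple. This already establishes that the irreducible factors of $\Phi_N$ over $\mathbb{F}_q$ occur with multiplicity one (the ``distinct'' claim), and it records that $\Phi_N$ has $\phi(N)=\deg\Phi_N$ distinct roots.

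Next I would compute the degree of an arbitrary monic irreducible factor. Such a factor is the minimal polynomial over $\mathbb{F}_q$ of some primitive $N$th root of unity $\zeta$, so its degree is $d=[\mathbb{F}_q(\zeta):\mathbb{F}_q]$. Because a finite field is determined by its cardinality, $\mathbb{F}_q(\zeta)=\mathbb{F}_{q^d}$, and for any positive integer $e$ one has $\zeta\in\mathbb{F}_{q^e}$ if and only if $\zeta^{q^e}=\zeta$, equivalently $\zeta^{q^e-1}=1$. Since $\zeta$ has multiplicative order \emph{exactly} $N$, this last condition is equivalent to $N\mid q^e-1$, i.e. $q^e\equiv 1\pmod{N}$. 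Therefore $d$ is the least positive integer with $q^d\equiv 1\pmod N$, which is exactly $m$, and in particular $d$ does not depend on the choice of primitive root $\zeta$. The one point I would take care over is this use of \emph{primitivity}: if $\zeta$ had order a proper divisor of $N$, the equivalence ``$\zeta^{q^e-1}=1 \iff N\mid q^e-1$'' would fail, so it is precisely the exact-order property that pins every irreducible factor to the same degree $m$; everything else is the standard dictionary between subfields of $\overline{\mathbb{F}_q}$ and the orbits of Frobenius.

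Finally I would assemble the pieces: $\Phi_N$ (already monic) factors over $\mathbb{F}_q$ into monic irreducible polynomials, each of degree $m$ by the previous paragraph and each appearing once by separability, so counting degrees gives exactly $\phi(N)/m$ of them; in particular $m\mid\phi(N)$, consistent with $m=\mathrm{ord}_N(q)$ dividing $|(\mathbb{Z}/N\mathbb{Z})^\times|=\phi(N)$. I do not anticipate a genuine obstacle here — the result is a direct consequence of the structure theory of finite fields — the only delicate bookkeeping being the exact-order argument flagged above; for a fully rigorous write-up one would simply cite the standard facts about $\mathbb{F}_{q^e}=\{x\in\overline{\mathbb{F}_q}: x^{q^e}=x\}$ and the correspondence between $[\mathbb{F}_q(\zeta):\mathbb{F}_q]$ and the size of the Frobenius orbit of $\zeta$, e.g. from \cite{finitefields}.
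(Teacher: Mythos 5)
Your proof is correct and complete: the separability of $x^N-1$ under $\gcd(q,N)=1$, the reduction of $x^N-1=\prod_{d\mid N}\Phi_d(x)$ to $\mathbb{F}_q$, and the identification of the degree of each irreducible factor with $\mathrm{ord}_N(q)$ via the Frobenius/subfield correspondence is exactly the standard argument. The paper itself offers no proof of this lemma --- it is stated as standard and deferred to \cite{finitefields} (Lidl--Niederreiter), where the proof given is essentially the one you wrote out.
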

\begin{lemma}\label{tmonic}
Let $f_1(x),f_2(x),\dots,f_N(x)$ be distinct monic irreducible polynomials over $\mathbb{F}_q$ of degree $m$ and order $e$, and let $t \geq 2$ be an integer whose prime factors divide $e$ but not $\frac{q^m-1}{e}$. Assume that $q^m \equiv 1 \mod 4$ if $t \equiv 0 \mod 4$. Then $f_1(x^t),f_2(x^t),\dots, f_N(x^t)$ are all distinct monic irreducible polynomials of degree $mt$ and order $et$.
\end{lemma}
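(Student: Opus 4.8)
The plan is to deduce everything from the classical irreducibility criterion for the binomial $x^{t}-a$ over a finite field (see \cite{finitefields}), but applied over the splitting field $\mathbb{F}_{q^{m}}$ of the $f_{i}$ rather than over $\mathbb{F}_{q}$ itself. Fix $f=f_{i}$ and a root $\alpha\in\mathbb{F}_{q^{m}}$ of $f$; since $f$ has order $e$, the element $\alpha$ has multiplicative order $e$ in $\mathbb{F}_{q^{m}}^{\times}$, so in particular $e\mid q^{m}-1$, and (because the prime factors of $t$ divide $e$) $\gcd(t,q)=1$, so the orders of the polynomials $f_{i}(x^{t})$ are well defined. Note also that $f(x^{t})$ is monic of degree $mt$ and that $f(0)\neq 0$ forces $f(x^{t})$ to have nonzero constant term.

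The core step is irreducibility. Let $\beta\in\overline{\mathbb{F}_{q}}$ be a root of $x^{t}-\alpha$; then $f(\beta^{t})=f(\alpha)=0$, so $\beta$ is a root of $f(x^{t})$, and it suffices to show $[\mathbb{F}_{q}(\beta):\mathbb{F}_{q}]=mt$. First I would work over $\mathbb{F}_{q^{m}}$: the hypotheses on $t$ state exactly that every prime factor of $t$ divides $e=\mathrm{ord}(\alpha)$ but not the index $(q^{m}-1)/e$ of $\langle\alpha\rangle$ in $\mathbb{F}_{q^{m}}^{\times}$, and that $q^{m}\equiv 1 \mod 4$ whenever $4\mid t$; these are precisely the conditions of the binomial irreducibility theorem over $\mathbb{F}_{q^{m}}$, so $x^{t}-\alpha$ is irreducible over $\mathbb{F}_{q^{m}}$ and $[\mathbb{F}_{q^{m}}(\beta):\mathbb{F}_{q^{m}}]=t$. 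On the other hand $\alpha=\beta^{t}\in\mathbb{F}_{q}(\beta)$, so $\mathbb{F}_{q}(\beta)\supseteq\mathbb{F}_{q}(\alpha)=\mathbb{F}_{q^{m}}$, whence $\mathbb{F}_{q}(\beta)=\mathbb{F}_{q^{m}}(\beta)$, a field of degree $mt$ over $\mathbb{F}_{q}$. Thus the minimal polynomial of $\beta$ over $\mathbb{F}_{q}$ has degree $mt$; since it divides the monic degree-$mt$ polynomial $f_{i}(x^{t})$, the two coincide, so $f_{i}(x^{t})$ is irreducible.

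For the order, recall $\mathrm{ord}(f_{i}(x^{t}))=\mathrm{ord}(\beta)$; writing $d=\mathrm{ord}(\beta)$, the identity $\mathrm{ord}(\beta^{t})=\mathrm{ord}(\alpha)=e$ gives $d=e\gcd(d,t)$. A short $\ell$-adic valuation check (using only that each prime $\ell\mid t$ divides $e$) then shows $v_{\ell}(d)=v_{\ell}(e)+v_{\ell}(t)$ for every prime $\ell$, i.e. $d=et$, so $f_{i}(x^{t})$ has order $et$. Distinctness of $f_{1}(x^{t}),\dots,f_{N}(x^{t})$ is immediate, since $p(x)\mapsto p(x^{t})$ is injective on $\mathbb{F}_{q}[x]$. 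If the statement is also meant exhaustively (``all'' such polynomials), I would finish with a count: by a standard count of monic irreducibles of prescribed order (cf.\ Lemma~\ref{factornumber}) there are $\phi(et)/(mt)$ monic irreducibles of degree $mt$ and order $et$, and since every prime factor of $t$ divides $e$ we have $\phi(et)=t\,\phi(e)$, so this number equals $\phi(e)/m=N$.

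The only substantive ingredient is the binomial irreducibility theorem over $\mathbb{F}_{q^{m}}$, and this is exactly where the hypothesis ``$q^{m}\equiv 1 \mod 4$ if $4\mid t$'' is genuinely needed, ruling out degeneracies such as $x^{4}-a$ splitting when $q^{m}\equiv 3 \mod 4$. If a self-contained argument is wanted, the hard part will be re-deriving that theorem: prove $x^{\ell}-\alpha$ irreducible for a prime $\ell$, bootstrap to prime-power exponents (where the mod-$4$ obstruction surfaces), and then assemble a general $t$ from its coprime prime-power parts by tracking how the $t$-th power map acts on the exponent. Everything else is degree bookkeeping in the tower $\mathbb{F}_{q}\subseteq\mathbb{F}_{q^{m}}\subseteq\mathbb{F}_{q^{m}}(\beta)$, so I expect no further difficulty there.
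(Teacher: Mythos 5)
The paper itself offers no proof of this lemma: it is quoted as standard (it is Theorem 3.35 of Lidl--Niederreiter \cite{finitefields}) and the reader is referred there. Your argument is correct, and it takes a genuinely different route from the textbook one. You reduce to the binomial irreducibility criterion for $x^{t}-\alpha$ over the extension $\mathbb{F}_{q^{m}}$ and then use the tower $\mathbb{F}_{q}\subseteq\mathbb{F}_{q^{m}}=\mathbb{F}_{q}(\alpha)\subseteq\mathbb{F}_{q^{m}}(\beta)=\mathbb{F}_{q}(\beta)$ to get degree $mt$; the order computation via $d=e\gcd(d,t)$ and the $\ell$-adic valuations is sound (for $\ell\mid t$ one has $\ell\mid e$, so $v_{\ell}(d)\le v_{\ell}(t)$ would force $v_{\ell}(e)=0$, a contradiction, whence $v_{\ell}(d)=v_{\ell}(e)+v_{\ell}(t)$); and the exhaustiveness count $\phi(et)/(mt)=t\phi(e)/(mt)=N$ is right, using $\phi(et)=t\phi(e)$ and the fact that the existence of one irreducible of degree $mt$ and order $et$ pins down the multiplicative order of $q$ modulo $et$. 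The textbook proof instead works entirely over $\mathbb{F}_{q}$: it computes the multiplicative order of $q$ modulo $et$ directly from the hypotheses on $t$, shows every element of order $et$ has degree $mt$ over $\mathbb{F}_{q}$, and matches the $Nmt=\phi(et)$ roots of $\prod_i f_i(x^{t})$ against the elements of order $et$. Your version buys a cleaner conceptual picture (everything follows from one binomial irreducibility statement plus degree bookkeeping), at the cost of importing that statement.

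One caveat you should address explicitly: in Lidl--Niederreiter the binomial criterion (their Theorem 3.75) is \emph{deduced from} the present lemma, by applying it with $m=1$ to the linear polynomial $x-\alpha$ over $\mathbb{F}_{q^{m}}$. So citing that source for the binomial theorem here would be circular. This is not a mathematical gap --- independent proofs of the binomial criterion exist, and the prime-by-prime, prime-power bootstrap you sketch in your last paragraph is exactly how to give one --- but as written the proof leans on a reference whose own logical order runs the other way, so you should either point to an independent proof of the binomial criterion or include the bootstrap.
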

\begin{lemma}\label{order}
Let $f(x)$ be an irreducible polynomial over $\mathbb{F}_q$ of degree $m$ and with $f(0) \neq 0$. Then the order of $f(x)$ is equal to the order of any root of $f(x)$ in the multiplicative group $\mathbb{F}_{q^m}^*$.
\end{lemma}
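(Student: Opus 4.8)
The plan is to argue directly from the definition of the order of a polynomial: for $f \in \mathbb{F}_q[x]$ with $f(0) \neq 0$, the order $\mathrm{ord}(f)$ is the least positive integer $e$ with $f(x) \mid x^e - 1$. First I would fix a root $\alpha$ of $f$ in $\mathbb{F}_{q^m}$; since $f(0) \neq 0$ we have $\alpha \neq 0$, so $\alpha \in \mathbb{F}_{q^m}^*$, and I write $d$ for the multiplicative order of $\alpha$, noting that $d \mid q^m - 1$. The goal is then to show $\mathrm{ord}(f) = d$, which I would do by establishing the two divisibilities $\mathrm{ord}(f) \mid d$ and $d \mid \mathrm{ord}(f)$.

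For $\mathrm{ord}(f) \mid d$: since $\alpha^d = 1$, the element $\alpha$ is a root of $x^d - 1$. Because $f$ is irreducible over $\mathbb{F}_q$ and $f(\alpha) = 0$, $f$ is (a unit multiple of) the minimal polynomial of $\alpha$ over $\mathbb{F}_q$, and the minimal polynomial divides every polynomial in $\mathbb{F}_q[x]$ vanishing at $\alpha$; hence $f(x) \mid x^d - 1$, which already shows $\mathrm{ord}(f)$ is well-defined and at most $d$. Conversely, for $d \mid \mathrm{ord}(f)$: if $f(x) \mid x^e - 1$ for a positive integer $e$, then evaluating at $\alpha$ gives $\alpha^e = 1$, so $d \mid e$; applying this with $e = \mathrm{ord}(f)$ yields $d \mid \mathrm{ord}(f)$. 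Since both quantities are positive integers dividing one another, $\mathrm{ord}(f) = d$.

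Finally I would address the phrase ``any root of $f$''. The full set of roots of $f$ in $\mathbb{F}_{q^m}$ is the Frobenius orbit $\{\alpha, \alpha^q, \dots, \alpha^{q^{m-1}}\}$, and because $d = \mathrm{ord}(\alpha)$ divides $q^m - 1$ it is coprime to $q$; hence $\mathrm{ord}(\alpha^{q^i}) = d / \gcd(d, q^i) = d$ for every $i$. Thus every root of $f$ has the same multiplicative order $d$, which we have just shown equals $\mathrm{ord}(f)$. None of these steps is a genuine obstacle; the only points needing care are the identification of $f$ with the minimal polynomial of $\alpha$ (legitimizing the divisibility $f(x) \mid x^d - 1$) and the remark $\gcd(d,q)=1$, which is precisely what makes the statement independent of the chosen root.
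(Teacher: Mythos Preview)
Your argument is correct and is precisely the standard textbook proof. The paper itself does not prove this lemma at all: it is stated without proof and the reader is referred to Lidl and Niederreiter, \emph{Finite Fields}, so there is no ``paper's own proof'' to compare against beyond noting that your approach is the classical one found there.
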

\begin{lemma}\label{pqdivide}
Let $p$ be an odd prime, and $q$ be a prime power such that $q \equiv 1 \mod p$. If $m,n$ are positive integers satisfying $p^n \mid q^{p^{n-m}}-1$ and $p \nmid \frac{q^{p^{n-m}}-1}{p^n}$, then $p^{n+1} \mid q^{n+1-m}-1$ and $p \nmid \frac{q^{p^{n+1-m}}-1}{p^{n+1}}$.
\end{lemma}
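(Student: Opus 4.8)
The plan is to recast both the hypothesis and the conclusion as statements about the $p$-adic valuation $v_p$ and then invoke the lifting-the-exponent principle. Recall that for an odd prime $p$ and an integer $q$ with $p \mid q-1$ one has $v_p(q^k-1) = v_p(q-1) + v_p(k)$ for every positive integer $k$; this is precisely the point at which the oddness of $p$ is used (the naive identity fails at $p=2$). Note also that for the exponents $p^{n-m}$ and $p^{n+1-m}$ to make sense we must have $m \le n$, which I will assume throughout (it is implicit in the statement), and that the conclusion as written, $q^{n+1-m}-1$, should read $q^{p^{n+1-m}}-1$.

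The key steps, in order, are as follows. First, observe that the pair of conditions $p^n \mid q^{p^{n-m}}-1$ and $p \nmid (q^{p^{n-m}}-1)/p^n$ is simply the single assertion $v_p(q^{p^{n-m}}-1) = n$. Second, apply the exponent-lifting identity with $k = p^{n-m}$: this gives $v_p(q^{p^{n-m}}-1) = v_p(q-1) + (n-m)$, so comparing with the hypothesis forces $v_p(q-1) = m$. Thus the entire content of the hypothesis is just this one value. Third, apply the identity once more, now with $k = p^{n+1-m}$, to get $v_p(q^{p^{n+1-m}}-1) = v_p(q-1) + (n+1-m) = m + (n+1-m) = n+1$, which unwinds to exactly $p^{n+1} \mid q^{p^{n+1-m}}-1$ together with $p \nmid (q^{p^{n+1-m}}-1)/p^{n+1}$, as required.

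The only real work, should one want a self-contained argument rather than simply citing \cite{factorpn}, is establishing the exponent-lifting identity, and this is the step I would expect to be the main (if mild) obstacle. I would prove it by induction on $j \ge 0$ for the special chain of exponents $k = p^j$ that actually occurs here: writing $q^{p^{j+1}}-1 = (q^{p^j}-1)\bigl((q^{p^j})^{p-1} + \cdots + q^{p^j} + 1\bigr)$, expanding each power $(q^{p^j})^i$ about $1$ using $q^{p^j} \equiv 1 \pmod{p}$, and using that $p$ is odd so that $p \mid \frac{p(p-1)}{2}$, one sees that the second factor is exactly divisible by $p$; hence $v_p(q^{p^{j+1}}-1) = v_p(q^{p^j}-1) + 1$, and induction from the base case $j=0$ yields $v_p(q^{p^j}-1) = v_p(q-1) + j$. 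Everything else in the proof is bookkeeping with valuations.
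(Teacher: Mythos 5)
Your proof is correct. Note that the paper does not actually prove this lemma itself; it is quoted from the cited reference \cite{factorpn}, so there is no in-paper argument to compare against. Your reduction to the statement $v_p(q^{p^{n-m}}-1)=n$ and the lifting-the-exponent identity $v_p(q^{p^j}-1)=v_p(q-1)+j$ is sound, your handling of the two places where hypotheses are actually used is right (oddness of $p$ enters exactly through $p \mid \binom{p}{2}$ in the step $v_p(q^{p^{j+1}}-1)=v_p(q^{p^j}-1)+1$, and $q\equiv 1 \bmod p$ is needed for that step to apply at $j=0$), and you correctly identify both the implicit constraint $m\le n$ and the typo $q^{n+1-m}-1$ for $q^{p^{n+1-m}}-1$ in the statement. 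Your inductive step is in substance the standard proof found in the literature (expand $(1+cp^n)^p$ binomially with $\gcd(c,p)=1$ and observe all terms beyond $1+cp^{n+1}$ have valuation at least $n+2$), so there is nothing to flag.
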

\section{Solving the Hermite-SVP  for general ideal lattices in a
Galois extension}
In this section, we generalise the results of Pan et. al., specifically their contributions on the Hermite-SVP in prime ideal lattices. We consider first general ideal lattices, and then modules with a pseudo-basis of ideals and vectors with entries in the overlying number field.
\begin{defn}
Let $L/\mathbb{Q}$ be a finite Galois extension and $I\subset\mathcal{O}_L$ an ideal, expressible as $\mathcal{I} = \mathfrak{p}_1...\mathfrak{p}_g$, where each $\mathfrak{p}_i$ lies above unramified rational prime $p_i$. Let $D_\mathcal{I} = \{\sigma\in Gal(L/\mathbb{Q}):\sigma(\mathcal{I})=\mathcal{I}\}$, and $K_\mathcal{I} = L^{D_\mathcal{I}} = \{x\in L\text{ : }\sigma(x)=x, \text{ for all }\sigma\in D_\mathcal{I}\}$. These are called the \textit{decomposition group} and \textit{decomposition field} of $\mathcal{I}$, respectively.
\end{defn}
\begin{theorem}
Let ${L} / \mathbb{Q}$ be a finite Galois extension of degree $N$ and $\mathcal{I} = \mathfrak{p}_1...\mathfrak{p}_g$ an ideal of $\mathcal{O}_{{L}}$, where each $\mathfrak{p}_i$ lies over an unramified rational prime $p_i$ such that $p_i$ has $g_i$ distinct prime ideal factors in $O_{{L}}$ and has inertial degree $f_{p_i}^L$ in $\mathcal{O}_L$. If ${K}_\mathcal{I}$ is the decomposition field of $\mathcal{I}$, with $r:=[K_\mathcal{I}:\mathbb{Q}]$, and each $p_i$ has inertial degree $f_{p_i}^{K_\mathcal{I}}$ in $\mathcal{O}_{K_{\mathcal{I}}}$, then a solution to Hermite-$SVP$ with factor $\gamma$ in the sublattice $\mathfrak{c}=\mathcal{I} \cap \mathcal{O}_{{K}_\mathcal{I}}$ under the canonical embedding of ${K}_\mathcal{I}$ will also be a solution to the Hermite-SVP in $\mathcal{I}$ with factor $\gamma\frac{\sqrt{N/r}\norm_{L/\mathbb{Q}}(\mathcal{I})^{1/r-1/N}}{\norm_{{K_\mathcal{I}}/\mathbb{Q}}\left({disc}({L} / {K}_\mathcal{I})\right)^{1/2N}\left(p_1^{(f^L_{p_1}-f^{K_\mathcal{I}}_{p_1})/r}...p_g^{(f^L_{p_g}-f^{K_\mathcal{I}}_{p_g})/r}\right)}$ under the canonical embedding of ${L}.$
\end{theorem}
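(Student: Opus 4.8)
The plan is to show that the very element returned by the $\gamma$-Hermite-SVP oracle on $\mathfrak{c}$, viewed through the canonical embedding of $L$, already satisfies the required bound for $\mathcal{I}$. Throughout write $r=[K_\mathcal{I}:\mathbb{Q}]=N/|D_\mathcal{I}|$ and $\mathfrak{q}_i=\mathfrak{p}_i\cap\mathcal{O}_{K_\mathcal{I}}$. Given a nonzero $v\in\mathfrak{c}=\mathcal{I}\cap\mathcal{O}_{K_\mathcal{I}}$ with $\|\Sigma_{K_\mathcal{I}}(v)\|\le\gamma\det(\Sigma_{K_\mathcal{I}}(\mathfrak{c}))^{1/r}$ for the rank-$r$ lattice $\Sigma_{K_\mathcal{I}}(\mathfrak{c})$, the containment $\mathfrak{c}\subseteq\mathcal{I}$ makes $\Sigma_L(v)$ a nonzero point of the ideal lattice $\Sigma_L(\mathcal{I})$, so it suffices to bound $\|\Sigma_L(v)\|$ by $\gamma'\det(\Sigma_L(\mathcal{I}))^{1/N}$ with $\gamma'$ as stated.

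First I would compare the norm of $v$ under the two embeddings. Each of the $r$ field embeddings $\tau\colon K_\mathcal{I}\hookrightarrow\mathbb{C}$ is the restriction of exactly $[L:K_\mathcal{I}]=N/r$ embeddings of $L$, each agreeing with $\tau$ on $K_\mathcal{I}$; so for $v\in K_\mathcal{I}$,
\begin{align*}
\|\Sigma_L(v)\|^2=\sum_{\sigma\colon L\hookrightarrow\mathbb{C}}|\sigma(v)|^2=\frac{N}{r}\sum_{\tau\colon K_\mathcal{I}\hookrightarrow\mathbb{C}}|\tau(v)|^2=\frac{N}{r}\|\Sigma_{K_\mathcal{I}}(v)\|^2,
\end{align*}
whence $\|\Sigma_L(v)\|\le\gamma\sqrt{N/r}\,\det(\Sigma_{K_\mathcal{I}}(\mathfrak{c}))^{1/r}$ by the Hermite guarantee.

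Next I would convert the two covolumes into arithmetic data via $\det(\Sigma_{K_\mathcal{I}}(\mathfrak{c}))=\norm_{K_\mathcal{I}/\mathbb{Q}}(\mathfrak{c})\,disc(K_\mathcal{I}/\mathbb{Q})$ and $\det(\Sigma_L(\mathcal{I}))=\norm_{L/\mathbb{Q}}(\mathcal{I})\,disc(L/\mathbb{Q})$. The tower (conductor--discriminant) formula supplies $disc(L/\mathbb{Q})^{1/N}=disc(K_\mathcal{I}/\mathbb{Q})^{1/r}\,\norm_{K_\mathcal{I}/\mathbb{Q}}(disc(L/K_\mathcal{I}))^{1/2N}$, which accounts for the discriminant factor in the denominator of $\gamma'$. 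For the ideal-norm factor the key step is to identify $\mathfrak{c}$: since each $p_i$ is unramified in $\mathcal{O}_L$ we have $e(\mathfrak{p}_i/\mathfrak{q}_i)=1$, and a valuation count gives $v_{\mathfrak{q}_i}(\mathfrak{c})=\max_{\mathfrak{p}\mid\mathfrak{q}_i}v_{\mathfrak{p}}(\mathcal{I})$; when the $\mathfrak{p}_i$ lie over distinct rational primes this yields $\mathfrak{c}=\prod_i\mathfrak{q}_i$ with $\norm_{K_\mathcal{I}/\mathbb{Q}}(\mathfrak{c})=\prod_ip_i^{f_{p_i}^{K_\mathcal{I}}}$, whereas $\norm_{L/\mathbb{Q}}(\mathcal{I})=\prod_ip_i^{f_{p_i}^{L}}$, so that
\begin{align*}
\norm_{K_\mathcal{I}/\mathbb{Q}}(\mathfrak{c})^{1/r}=\norm_{L/\mathbb{Q}}(\mathcal{I})^{1/r}\Big/\textstyle\prod_ip_i^{(f_{p_i}^{L}-f_{p_i}^{K_\mathcal{I}})/r}.
\end{align*}

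Substituting these into the bound from the first step and regrouping the powers of $\norm_{L/\mathbb{Q}}(\mathcal{I})$ and of the discriminants reconstitutes exactly $\gamma'\det(\Sigma_L(\mathcal{I}))^{1/N}$ on the right-hand side, which finishes the argument once one records that $v\ne0$. I expect the hard part to be the identification and norm computation for $\mathfrak{c}=\mathcal{I}\cap\mathcal{O}_{K_\mathcal{I}}$: one must control precisely how the prime factors of $\mathcal{I}$ distribute over the primes of the decomposition field --- this is where unramifiedness and the definition of $K_\mathcal{I}$ are used, and where coincidences among the $p_i$ must be handled with care --- together with fixing the discriminant normalisation so that the tower formula outputs exactly the factor $\norm_{K_\mathcal{I}/\mathbb{Q}}(disc(L/K_\mathcal{I}))^{1/2N}$ appearing in the statement.
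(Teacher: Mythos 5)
Your proposal follows essentially the same route as the paper's proof: compare the canonical embeddings of $K_\mathcal{I}$ and $L$ to pick up the $\sqrt{N/r}$ factor, convert covolumes to ideal norms and discriminants, apply the tower formula for discriminants, and express $\norm_{K_\mathcal{I}/\mathbb{Q}}(\mathfrak{c})$ in terms of $\norm_{L/\mathbb{Q}}(\mathcal{I})$ divided by $\prod_i p_i^{f^L_{p_i}-f^{K_\mathcal{I}}_{p_i}}$. Your explicit attention to the identification $\mathfrak{c}=\prod_i\mathfrak{q}_i$ and to possible coincidences among the $p_i$ is if anything slightly more careful than the paper, which simply asserts the norm formulas.
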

\begin{proof}
\noindent Consider the following diagram:
\begin{center}
\begin{tikzcd}
\mathcal{O}_L \arrow[r, hook] & L \arrow[r, "\Sigma_L"] & \mathbb{C}^N\\
\mathcal{O}_{K_I} \arrow[u, hook] \arrow[r, hook] & K_I \arrow[u, hook] \arrow[r, "\Sigma_{K_I}"] & \mathbb{C}^{[K_I:\mathbb{Q}]} \arrow[u, "\beta^\prime"]
\end{tikzcd}
\end{center}
 Here $\beta^\prime$ is chosen to make the diagram commute. Each embedding of $K_I$ extends to $N/[K_I:\mathbb{Q}]$ embeddings of $L$. Then $\beta^\prime$ simply repeats the coordinates of $\Sigma_{K_I}$ $N/r$ times, for $r = [K_I:\mathbb{Q}]$, by the definition of $K_I$. So $\|\beta^\prime(x)\| = \sqrt{N/r}\|x\|$, for any $x\in \Sigma_{K_I}(K_I)$. Set $\mathfrak{c} = I\cap\mathcal{O}_{K_I}$. Then $det(\mathfrak{c}) = \norm_{K_I/\mathbb{Q}}(\mathfrak{c})\sqrt{|disc(K_I/\mathbb{Q})|}$. So a Hermite-SVP solution $v\in\mathfrak{c}$ satisfies $\|v\|\leq\gamma(\norm_{K_I/\mathbb{Q}}(\mathfrak{c})\sqrt{|disc(K_I/\mathbb{Q})|})^{1/r}$. Also note ${disc}({L}/\mathbb{Q})={disc}({K_I}/\mathbb{Q})^{N / r} \norm_{{K_I}/\mathbb{Q}}({disc}({L} / {K}_I))$. Then 
 \begin{align*}
 \|\beta^\prime(v)\|&\leq\sqrt{N/r}\|v\|\leq\sqrt{N/r}\gamma\cdot(\norm_{K_I/\mathbb{Q}}(\mathfrak{c})\sqrt{|disc(K_I/\mathbb{Q})|})^{1/r}\\
 &= \sqrt{N/r}\gamma\cdot \norm_{K_I/\mathbb{Q}}(\mathfrak{c})^{1/r}\big(\frac{disc(L/\mathbb{Q})^{r/N}}{\norm_{{K_I}/\mathbb{Q}}({disc}({L} / {K}_I))^{r/N}}\big)^{1/2r}\\
 &= \sqrt{N/r}\gamma\cdot \norm_{K_I/\mathbb{Q}}(\mathfrak{c})^{1/r}\frac{disc(L/\mathbb{Q})^{1/2N}}{\norm_{{K_I}/\mathbb{Q}}({disc}({L} / {K}_I))^{1/2N}}\\
 &= \gamma\frac{\sqrt{N/r}}{\norm_{{K_I}/\mathbb{Q}}({disc}({L} / {K}_I))^{1/2N}}\cdot \big(\norm_{K_I/\mathbb{Q}}(\mathfrak{c})^{N/r}\sqrt{disc(L/\mathbb{Q})}\big)^{1/N}.
 \end{align*}
 The norm is multiplicative: $\norm_{K_I/\mathbb{Q}}(I) = \norm_{K_I/\mathbb{Q}}(\mathcal{P}_1)...\norm_{K_I/\mathbb{Q}}(\mathcal{P}_g)$, where $\mathcal{P}_i=\mathfrak{p}_i\cap K_I$. All the $\mathfrak{p}_i$ lie above unramified primes $p_i$, so we can write $e^{L}_{p_i}=1$. Moreover, we have $\norm_{L/\mathbb{Q}}=\norm_{K_I/\mathbb{Q}}\circ \norm_{L/K_I}$. As a result, $\norm_{L/\mathbb{Q}}(I) = p_1^{f^L_{p_1}}...p_g^{f^L_{p_g}}$. Also, $\norm_{K_I/\mathbb{Q}}(\mathfrak{c}) = p_1^{f^{K_I}_{p_1}}...p_g^{f^{K_I}_{p_g}}$. Thus $\norm_{L/\mathbb{Q}}(I) = p_1^{f^L_{p_1}}...p_g^{f^L_{p_g}} = (p_1^{f^{K_I}_{p_1}}...p_g^{f^{K_I}_{p_g}})\cdot(p_1^{f^L_{p_1}-f^{K_I}_{p_1}}...p_g^{f^L_{p_g}-f^{K_I}_{p_1}}) = \norm_{K_I/\mathbb{Q}}(\mathfrak{c})\cdot(p_1^{f^L_{p_1}-f^{K_I}_{p_1}}...p_g^{f^L_{p_g}-f^{K_I}_{p_g}})$, so we can rewrite
 \begin{align*}
 \norm_{K_I/\mathbb{Q}}(\mathfrak{c}) = \norm_{L/\mathbb{Q}}(I)/(p_1^{f^L_{p_1}-f^{K_I}_{p_1}}...p_g^{f^L_{p_g}-f^{K_I}_{p_g}}).
 \end{align*} 
 Then we have, setting $\gamma^\prime = \gamma\frac{\sqrt{N/r}}{\norm_{{K_I}/\mathbb{Q}}({disc}({L} / {K}_I))^{1/2N}}$,
\begin{align*}
\|\beta^\prime(v)\|&\leq \gamma\frac{\sqrt{N/r}}{\norm_{{K_I}/\mathbb{Q}}({disc}({L} / {K}_I))^{1/2N}}\cdot \big(\norm_{K_I/\mathbb{Q}}(\mathfrak{c})^{N/r}\sqrt{disc(L/\mathbb{Q})}\big)^{1/N}\\
&= \gamma^\prime\cdot \big(\big(\norm_{L/\mathbb{Q}}(I)/(p_1^{f^L_{p_1}-f^{K_I}_{p_1}}...p_g^{f^L_{p_g}-f^{K_I}_{p_g}})\big)^{N/r}\sqrt{disc(L/\mathbb{Q})}\big)^{1/N}\\
&= \gamma^\prime\cdot \big(\norm_{L/\mathbb{Q}}(I)^{N/r}/(p_1^{(f^L_{p_1}-f^{K_I}_{p_1})N/r}...p_g^{(f^L_{p_g}-f^{K_I}_{p_g})N/r})\sqrt{disc(L/\mathbb{Q})}\big)^{1/N}\\
&= \gamma^\prime\frac{\norm_{L/\mathbb{Q}}(I)^{1/r-1/N}}{(p_1^{(f^L_{p_1}-f^{K_I}_{p_1})1/r}...p_g^{(f^L_{p_g}-f^{K_I}_{p_g})1/r})}\cdot \big(\norm_{L/\mathbb{Q}}(I)\sqrt{disc(L/\mathbb{Q})}\big)^{1/N},
\end{align*}
as required.
 \end{proof}
 \noindent Note that when $\mathcal{I} = \mathfrak{p}$, $K_I$ is the regular decomposition field, and $f^{K_I}_p = 1$ and $f^{L}_p = N/r$, so $\norm_{L/\mathbb{Q}}(I)^{1/r-1/N}=p^{N/r(1/r-1/N)} = p^{N/r^2-1/r}$ and $p^{(f^L_{p}-f^{K_I}_{p})1/r}) = p^{N/r^2-1/r}$, and one recovers the result of \cite{idealrandomprime}.
 \subsection{Solving the Hermite-SVP for module lattices defined over a Galois extension}
The above method can be extended to the case of $\mathcal{O}_L$-modules. As before, $L$ is a field that is Galois over $\mathbb{Q}$ with ring of integers $\mathcal{O}_L$. Suppose $\mathcal{I}_1,\dots, \mathcal{I}_d \subseteq \mathcal{O}_L$ are some ideals of $\mathcal{O}_L$ and $\mathbf{b}_1,\dots, \mathbf{b}_d \in L^D$ for some integer $d \leq D$ that are linearly independent over $L$ (that is, none of the vectors can be expressed as a linear sum of the others over $L$). We define an $\mathcal{O}_L$-module $M$ with pseudo-basis $\langle \mathcal{I}_k,\mathbf{b}_k \rangle$ as the direct sum
\begin{align*}
    M=\bigoplus_{k=1}^d \mathcal{I}_k \mathbf{b}_k,
\end{align*}
which is a $\mathbb{Z}$-module of dimension $d[L:\mathbb{Q}]$. We define the volume of $M$ by
\begin{align*}
    \vol(M)=|\text{disc}(L/ \mathbb{Q})|^d\norm_{L/\mathbb{Q}}(\det(B^\dagger B))\prod_{k=1}^d \norm_{L/\mathbb{Q}}(I_k)^2,
\end{align*}
where $B$ is the matrix composed of the columns $\mathbf{b}_1,\dots, \mathbf{b}_d$ and $\dagger$ denotes the Hermitian transpose of a matrix. Then, for any $\mathbf{v}=(v_1,\dots,v_D) \in M$, the lattice $\mathcal{L}_M$ generated by $M$ under the canonical embedding is the lattice whose elements take the form $(\sigma_1(v_1),\dots,\sigma_{1}(v_D),\sigma_2(v_1),\dots,\sigma_2(v_D),\dots,\sigma_{[L:\mathbb{Q}]}(v_D))$. We say a vector $\mathbf{v}$ is a solution to the the $\gamma$-Hermite-SVP in $M$ if it satisfies $\|\mathbf{v}\| \leq \gamma \vol(M)^{1/2d[L:\mathbb{Q}]}$. By abuse of notation, we say that for any $\mathbf{v}=(v_1,\dots,v_D) \in L^D$, $\sigma(\mathbf{v})=(\sigma(v_1),\dots,\sigma(v_D))$ for all $\sigma \in \text{Gal}(L/\mathbb{Q})$ and \\$\trace_{L/K}(\mathbf{v})=\sum_{\sigma \in \text{Gal}(L/K)}\sigma(\mathbf{v})$ for any subfield $K$ of $L$. We call the module $\mathcal{I}_k \mathbf{b}_k$ a \emph{pseudo-ideal} for each $k$, and define the decomposition group $\Delta_k=\{\sigma \in \text{Gal}(L/\mathbb{Q}): \sigma(\mathcal{I}_k\mathbf{b}_k)=\mathcal{I}_k\mathbf{b}_k\}$.
\begin{lemma}\label{separate}
Let $\mathcal{I}_k\mathbf{b}_k$ be a pseudo-ideal in $L^D$, and let $\Delta_k$ denote the decomposition group of $\mathcal{I}_k\mathbf{b}_k$ and $K_k$ the decomposition field of $\Delta_k$. Then there exists an $\alpha_k \in L$ such that $\mathbf{b}_k=\alpha_k\mathbf{b}_k^{\prime}$, where $\mathbf{b}_k^{\prime} \in K_k^D$.
\end{lemma}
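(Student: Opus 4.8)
The plan is to take $\alpha_k$ to be a suitable nonzero coordinate of $\mathbf{b}_k$ and then simply normalise. Since the vectors $\mathbf{b}_1,\dots,\mathbf{b}_d$ are linearly independent over $L$, in particular $\mathbf{b}_k=(b_{k,1},\dots,b_{k,D})\neq 0$, so I would fix an index $j_0$ with $b_{k,j_0}\neq 0$, set $\alpha_k = b_{k,j_0}\in L^\times$ and $\mathbf{b}_k^\prime = \alpha_k^{-1}\mathbf{b}_k$. Then $\mathbf{b}_k=\alpha_k\mathbf{b}_k^\prime$ holds by construction and the $j_0$-th coordinate of $\mathbf{b}_k^\prime$ is $1$, so the whole content of the lemma reduces to showing $\mathbf{b}_k^\prime\in K_k^D$, i.e. that every $\sigma\in\Delta_k$ fixes $\mathbf{b}_k^\prime$, equivalently that $\sigma$ fixes each ratio $b_{k,j}/b_{k,j_0}$.

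To prove that, I would fix $\sigma\in\Delta_k$ and unpack the defining equality $\sigma(\mathcal{I}_k\mathbf{b}_k)=\mathcal{I}_k\mathbf{b}_k$. Since $\sigma$ fixes $\mathbb{Q}$, it carries the nonzero ideal $\mathcal{I}_k$ to the nonzero ideal $\sigma(\mathcal{I}_k)$ and acts coordinatewise on $\mathbf{b}_k$, so this equality reads $\sigma(\mathcal{I}_k)\,\sigma(\mathbf{b}_k)=\mathcal{I}_k\mathbf{b}_k$ as subsets of $L^D$. Picking any nonzero $a\in\mathcal{I}_k$, the vector $a\mathbf{b}_k$ lies in the right-hand side, hence equals $a^\prime\sigma(\mathbf{b}_k)$ for some $a^\prime\in\sigma(\mathcal{I}_k)$; comparing the $j_0$-th coordinates solves $a^\prime = a\,b_{k,j_0}/\sigma(b_{k,j_0})$ (note $\sigma(b_{k,j_0})\neq 0$ since $\sigma$ is injective), and comparing the remaining coordinates and cancelling the nonzero scalar $a$ yields $b_{k,j}/b_{k,j_0}=\sigma(b_{k,j})/\sigma(b_{k,j_0})=\sigma(b_{k,j}/b_{k,j_0})$ for every $j$. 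Thus $\mathbf{b}_k^\prime$ is fixed by all of $\Delta_k$, and by definition of the decomposition field $K_k=L^{\Delta_k}$ we conclude $\mathbf{b}_k^\prime\in K_k^D$.

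The only step requiring any care is the passage from the set-level identity of pseudo-ideals to the pointwise identity of coordinate ratios; this is exactly where one uses that $\mathcal{I}_k$ is a nonzero ideal, so that a nonzero scalar $a$ is available to cancel, together with the observation that a rank-one pseudo-ideal is determined (up to the ideal factor) by the \emph{direction} of its defining vector, so that $\sigma(\mathcal{I}_k)\sigma(\mathbf{b}_k)=\mathcal{I}_k\mathbf{b}_k$ forces $\mathbf{b}_k$ and $\sigma(\mathbf{b}_k)$ to be $L^\times$-proportional. Beyond this bookkeeping I do not anticipate any genuine obstruction.
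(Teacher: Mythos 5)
Your proof is correct, and it takes a genuinely different (though related) route from the paper's. Both arguments ultimately rest on the same key consequence of the set identity $\sigma(\mathcal{I}_k)\sigma(\mathbf{b}_k)=\mathcal{I}_k\mathbf{b}_k$: that for $\sigma\in\Delta_k$ the vector $\sigma(\mathbf{b}_k)$ must be an $L^\times$-multiple of $\mathbf{b}_k$ (in the paper this appears as the existence of $x_{\sigma,k}\in\mathcal{I}_k$ with $x_{\sigma,k}\mathbf{b}_k=\sigma(x_k\mathbf{b}_k)$). You diverge in how this proportionality is converted into membership of $K_k^D$. The paper averages: it takes $x_k\in\mathcal{I}_k\cap\mathcal{O}_{K_k}$, computes $\trace_{L/K_k}(x_k\mathbf{b}_k)=\bigl(\sum_{\sigma\in\Delta_k}x_{\sigma,k}\bigr)\mathbf{b}_k\in K_k^D$, and sets $\alpha_k=\bigl(\sum_{\sigma}x_{\sigma,k}\bigr)^{-1}$. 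You instead normalise by a fixed nonzero coordinate $b_{k,j_0}$ and verify directly that every ratio $b_{k,j}/b_{k,j_0}$ is $\Delta_k$-invariant. Your version is the more robust of the two: the paper's scalar $\sum_{\sigma}x_{\sigma,k}$ is a trace and is not shown to be nonzero, and it can in fact vanish (already for $L=\mathbb{Q}(i)$, $D=1$, $\mathbf{b}_k=(i)$, $\mathcal{I}_k=\mathcal{O}_L$ one gets $\sum_\sigma x_{\sigma,k}=0$), whereas your normalising scalar $b_{k,j_0}$ is nonzero by construction; so your argument actually closes a small gap in the paper's. What the trace formulation buys in exchange is an $\alpha_k$ expressed in terms of ideal elements and the Galois action, which is the form the paper later exploits algorithmically (Steps 3--6 of its module-SVP algorithm); your $\alpha_k$ is equally explicit but tied to a choice of coordinate rather than to the decomposition-group data.
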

\begin{proof}
Let $x_k$ be some element of $\mathcal{I}_k$. By the definition of the decomposition group, for every $\sigma \in \Delta_k$ we must have $\sigma(x_k\mathbf{b}_k) \in \mathcal{I}_k \mathbf{b}_k$ and so there exists some $x_{\sigma,k} \in \mathcal{I}_k$ such that $x_{\sigma,k}\mathbf{b}_k=\sigma(x_k\mathbf{b}_k)$. Then, if we take $x_k$ to be some element of $\mathcal{O}_{K_k}$, we have
\begin{align*}
    x_k\trace_{L/{K_k}}(\mathbf{b}_k)=\trace_{L/{K_k}}(x_k\mathbf{b}_k)=\left(\sum_{\sigma \in \Delta_k}x_{k,\sigma}\right)\mathbf{b}_k \in K_k^D,
\end{align*}
and so setting $\mathbf{b}_k^{\prime}=\left(\sum_{\sigma \in \Delta_k}x_{k,\sigma}\right)\mathbf{b}_k$, $\alpha_k=\left(\sum_{\sigma \in \Delta_k}x_{k,\sigma}\right)^{-1}$, the lemma holds.
\end{proof}
Using this lemma, if we have a module equivalent to the direct sum of the pseudo-ideals $\mathcal{I}_k\mathbf{b}_k$ with decomposition groups $\Delta_k$ and decomposition fields $K_k$, then we may represent the module as
\begin{align*}
    M=\bigoplus_{k=1}^d \alpha_k\mathcal{I}_k\mathbf{b}_k^\prime,
\end{align*}
where $\mathbf{b}_k^\prime \in K_k^D$.
\begin{theorem}
Let $M$ be a module with pseudo-basis $\langle \mathcal{I}_k,\mathbf{b}_k \rangle_{k=1}^d $, and suppose that each pseudo-ideal $\mathcal{I}_k\mathbf{b}_k$ has decomposition group $\Delta_k$ and decomposition field $K_k$. Denote by $\langle \mathcal{I}_k \alpha_k, \mathbf{b}_k^{\prime} \rangle_{k=1}^d$ the pseudo-basis that also represents $M$ such that $\mathbf{b}_k^{\prime} \in K_k^D$. Let $\mathcal{J}_k=Q_k\alpha_k\mathcal{I}_k$ where $Q_k$ is a rational integer such that $\mathcal{J}_k$ is an ideal of $\mathcal{O}_L$ and $\mathcal{J}_k=\prod_{i=1}^{g_k}\mathfrak{p}_i^{(k)}$ where $\mathfrak{p}_i^{(k)}$ are prime ideals lying above an unramified rational prime $p_i^{(k)}$ with inertial degree $f_{p_i^{(k)}}^L$ in $\mathcal{O}_L$ and $f_{p_i^{(k)}}^{K_k}$ in $\mathcal{O}_{K_k}$. We let $\mathfrak{c}_k=\mathcal{J}_k \cap \mathcal{O}_{K_k}$ and let $\mathcal{M}=\bigoplus_{k=1}^d \frac{1}{Q_k}\mathfrak{c}_k\mathbf{b}_k^{\prime}$. If $K$ is the compositum of all $K_k$, $1 \leq k \leq d$, then a solution to the $\gamma$-Hermite-SVP under the canonical embedding yields a solution to the $\gamma^\prime$-Hermite-SVP in $M$, where
\begin{align*}
    \gamma^{\prime}=\gamma \frac{\sqrt{[L:K]}}{\sqrt{|\norm_{K/\mathbb{Q}}(\text{disc}(L/K))}^{\frac{1}{[L:\mathbb{Q}]}}}\prod_{k=1}^d\frac{\norm_{L/\mathbb{Q}}(\mathcal{J}_k)^{\frac{1}{d[K:\mathbb{Q}]}-\frac{1}{d[L:\mathbb{Q}]}}}{\prod_{i=1}^{g_k}{p_i^{(k)}}^{\frac{1}{d[K:\mathbb{Q}]}\left(f_{p_i^{(k)}}^L-[K:K_k]f_{p_i^{(k)}}^{K_k}\right)}}.
\end{align*}
\end{theorem}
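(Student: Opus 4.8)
The plan is to promote the proof of the ideal-lattice case above, step by step, to modules. First I would apply Lemma~\ref{separate} to pass from the pseudo-basis $\langle\mathcal{I}_k,\mathbf{b}_k\rangle$ to $\langle\tfrac{1}{Q_k}\mathcal{J}_k,\mathbf{b}_k'\rangle$ with $\mathbf{b}_k'\in K_k^D\subseteq K^D$, noting that $\vol(M)$ is unchanged under this change of pseudo-basis since the factors $\norm_{L/\mathbb{Q}}(\alpha_k)^{\pm2}$ produced by $\det(B^\dagger B)$ and by the ideal norms cancel. Reading $\tfrac{1}{Q_k}\mathfrak{c}_k\mathbf{b}_k'$ as the rank-one $\mathcal{O}_K$-module $\tfrac{1}{Q_k}(\mathfrak{c}_k\mathcal{O}_K)\mathbf{b}_k'$, the module $\mathcal{M}$ becomes a rank-$d$ $\mathcal{O}_K$-module (the $\mathbf{b}_k'$ remain $K$-linearly independent, being scalar multiples of the $\mathbf{b}_k$), and $\mathcal{M}\subseteq M$ because $\mathfrak{c}_k\subseteq\mathcal{J}_k$. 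I would then reuse the commutative square of the ideal case, with $\Sigma_L$ on $M$, $\Sigma_K$ on $\mathcal{M}$, and a comparison map $\beta'$ chosen to make it commute: since every coordinate of a vector of $\mathcal{M}$ lies in $K$, each of the $[K:\mathbb{Q}]$ embeddings of $K$ extends to $[L:K]$ embeddings of $L$ agreeing on that vector, so $\beta'$ merely repeats coordinates $[L:K]$ times and $\|\beta'(x)\|=\sqrt{[L:K]}\,\|x\|$ on $\Sigma_K(\mathcal{M})$. Consequently a nonzero $\gamma$-Hermite-SVP solution $v\in\mathcal{M}$ is genuinely a nonzero vector of $M$, with $\|\Sigma_L(v)\|=\sqrt{[L:K]}\,\|\Sigma_K(v)\|\le\gamma\sqrt{[L:K]}\,\vol(\mathcal{M})^{1/(2d[K:\mathbb{Q}])}$.

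It then remains to evaluate the ratio $\vol(\mathcal{M})^{1/(2d[K:\mathbb{Q}])}/\vol(M)^{1/(2d[L:\mathbb{Q}])}$, which by definition of $\gamma'$ is exactly $\gamma'/(\gamma\sqrt{[L:K]})$. I would apply the module volume formula to both $\mathcal{M}$ and $M$ (they share the vector matrix $B'$), use the discriminant tower formula $\text{disc}(L/\mathbb{Q})=\text{disc}(K/\mathbb{Q})^{[L:K]}\norm_{K/\mathbb{Q}}(\text{disc}(L/K))$ together with $\norm_{L/\mathbb{Q}}=\norm_{K/\mathbb{Q}}\circ\norm_{L/K}$, and the norm identity
\begin{align*}
\norm_{K/\mathbb{Q}}(\mathfrak{c}_k\mathcal{O}_K)\;=\;\frac{\norm_{L/\mathbb{Q}}(\mathcal{J}_k)}{\prod_{i=1}^{g_k}{p_i^{(k)}}^{\,f^L_{p_i^{(k)}}-[K:K_k]\,f^{K_k}_{p_i^{(k)}}}},
\end{align*}
which follows by combining $\norm_{K/\mathbb{Q}}(\mathfrak{c}_k\mathcal{O}_K)=\norm_{K_k/\mathbb{Q}}(\mathfrak{c}_k)^{[K:K_k]}$, the decomposition-field computation $\norm_{K_k/\mathbb{Q}}(\mathfrak{c}_k)=\prod_i{p_i^{(k)}}^{f^{K_k}_{p_i^{(k)}}}$ (exactly the step carried out for $\mathfrak{c}$ in the ideal case), and $\norm_{L/\mathbb{Q}}(\mathcal{J}_k)=\prod_i{p_i^{(k)}}^{f^L_{p_i^{(k)}}}$ (valid since each $p_i^{(k)}$ is unramified). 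Substituting and collecting exponents, the $Q_k$ factors cancel (as they must, $M$ being independent of the $Q_k$), the $\text{disc}(K/\mathbb{Q})$ and the Gram-matrix contributions cancel between the two powers $\tfrac{1}{2d[K:\mathbb{Q}]}$ and $\tfrac{1}{2d[L:\mathbb{Q}]}$, and what survives is precisely the claimed $\gamma'$; this final simplification is routine exponent arithmetic.

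I expect the main obstacle to be the third step, specifically justifying the norm identity uniformly in $k$: one must verify that $\mathfrak{c}_k=\mathcal{J}_k\cap\mathcal{O}_{K_k}$ factors into primes of $\mathcal{O}_{K_k}$ with inertial degrees $f^{K_k}_{p_i^{(k)}}$ over the $p_i^{(k)}$, and that extension to $\mathcal{O}_K$ multiplies the inertial degree by $[K:K_k]$ — that is, that the decomposition-group machinery underpinning the ideal case applies to each $\mathcal{J}_k$ relative to its own decomposition field $K_k$, all sitting inside the single compositum $K$. A secondary point of care is checking that the Gram-matrix factor of $\vol(\mathcal{M})$ and of $\vol(M)$ are compatible via $\norm_{L/K}$ once the shared basis $B'$ has entries in $K$. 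Once these are settled, everything else is the bookkeeping sketched above.
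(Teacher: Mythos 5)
Your proposal follows essentially the same route as the paper's proof: the same commutative square with a coordinate-repeating comparison map $\beta'$ giving the $\sqrt{[L:K]}$ factor, the same norm identity $\norm_{K/\mathbb{Q}}(\mathfrak{c}_k)=\norm_{L/\mathbb{Q}}(\mathcal{J}_k)/\prod_i {p_i^{(k)}}^{f^L_{p_i^{(k)}}-[K:K_k]f^{K_k}_{p_i^{(k)}}}$ obtained via $\norm_{K_k/\mathbb{Q}}(\mathfrak{c}_k)^{[K:K_k]}$, and the same discriminant tower formula feeding into the two volume formulas, with the $Q_k$ and Gram-matrix factors cancelling in the ratio exactly as you describe. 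The points you flag as obstacles (the uniform-in-$k$ norm identity and the compatibility of the Gram factors under $\norm_{L/K}$) are handled in the paper by the same computations you sketch, so the argument goes through as planned.
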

\begin{proof}
\noindent Consider the following diagram:
\begin{center}
\begin{tikzcd}
M \arrow[r, hook] & L^D \arrow[r, "\Sigma_{L^D}"] & \mathbb{C}^{d[L:\mathbb{Q}]}\\
\mathcal{M} \arrow[u, hook] \arrow[r, hook] & K^D \arrow[u, hook] \arrow[r, "\Sigma_{K^D}"] & \mathbb{C}^{d[K:\mathbb{Q}]} \arrow[u, "\beta^\prime"]
\end{tikzcd}
\end{center}
Here, $\beta^\prime$ is chosen so that the diagram commutes. Each embedding of $K$ extends to $[L:K]$ embeddings of $L$, so $\beta^{\prime}$ repeats the coordinates of $\Sigma_{K^D}$ $[L:K]$ times by the definition of $K$ being the compositum of fixed fields, so $\|\beta^{\prime}(\mathbf{v})\|=\sqrt{[L:K]}\|\mathbf{v}\|$ for any $\mathbf{v} \in \Sigma_{K^D}(K)$. Since the norm is multiplicative, $\norm_{L/\mathbb{Q}}=\norm_{K/\mathbb{Q}} \circ \norm_{L/K}$. Now, we have
\begin{align*}
\norm_{L/\mathbb{Q}}(\mathcal{J}_k)=\prod_{i=1}^{g_k}{p_i^{(k)}}^{f_{p_i^{(k)}}^L}
\end{align*} 
and 
\begin{align*}
\norm_{K/\mathbb{Q}}(\mathfrak{c}_k)=\left(\norm_{K_k/\mathbb{Q}}(\mathfrak{c}_k)\right)^{[K:K_k]}=\prod_{i=1}^{g_k}{p_i^{(k)}}^{[K:K_k]f_{p_i^{(k)}}^{K_k}},
\end{align*}
and so we have the representation
\begin{align*}
\norm_{K/\mathbb{Q}}(\mathfrak{c}_k)=\frac{\norm_{L/\mathbb{Q}}(\mathcal{J}_k)}{\prod_{i=1}^{g_k}{p_i^{(k)}}^{f_{p_i^{(k)}}^L-[K:K_k]f_{p_i^{(k)}}^{K_k}}}.
\end{align*}
Therefore,
\begin{align*}
    &\vol(\mathcal{M})=|\text{disc}(K/\mathbb{Q})|^d\norm_{K/\mathbb{Q}}(B^{\dagger}B)\prod_{k=1}^d\norm_{K/\mathbb{Q}}(\mathfrak{c}_k)^2\prod_{k=1}^dQ_k^{-2[K:\mathbb{Q}]}
    \\&=|\text{disc}(K/\mathbb{Q})|^d\norm_{K/\mathbb{Q}}(B^{\dagger}B)\prod_{k=1}^d\frac{\norm_{L/\mathbb{Q}}(\mathcal{J}_k)^2}{Q_k^{2[K:\mathbb{Q}]}\prod_{i=1}^{g_k}{p_i^{(k)}}^{2\left(f_{p_i^{(k)}}^L-[K:K_k]f_{p_i^{(k)}}^{K_k}\right)}},
\end{align*}
where $B$ is the matrix made up composed of the vectors $\mathbf{b}_1^{\prime},\dots, \mathbf{b}_d^{\prime}$, and using the fact that $\text{disc}(L/\mathbb{Q})=\text{disc}(K/\mathbb{Q})^{[L:K]}\norm_{K/\mathbb{Q}}(\text{disc}(L/K))$,
\begin{align*}
    &\vol(M)=|\text{disc}(L/\mathbb{Q})|^d\norm_{L/\mathbb{Q}}(B^{\dagger}B)\prod_{k=1}^d\norm_{L/\mathbb{Q}}(\mathcal{J}_k)^2\prod_{k=1}^dQ_k^{-2[L:\mathbb{Q}]}
    \\&=|\text{disc}(K/\mathbb{Q})|^{d[L:K]}\norm_{K/\mathbb{Q}}(\text{disc}(L/K))^d\\&\cdot \norm_{K/\mathbb{Q}}(B^{\dagger}B)^{[L:K]}\prod_{k=1}^d\norm_{L/\mathbb{Q}}(\mathcal{J}_k)^2\prod_{k=1}^dQ_k^{-2[L:\mathbb{Q}]}.
\end{align*}
Hence, if we have a solution $\mathbf{v} \in \mathcal{M}$ for the the Hermite-SVP with factor $\gamma$, then
\begin{align*}
    &\|\beta(\mathbf{v})\|^2=[L:K]\|\mathbf{v}\|^2\leq \gamma^2 [L:K] \vol(\mathcal{M})^{1/d[K:\mathbb{Q}]}
    \\&= \gamma^2 [L:K]|\text{disc}(K/\mathbb{Q})|^{1/[K:\mathbb{Q}]}\norm_{K/\mathbb{Q}}(B^{\dagger}B)^{1/d[K:\mathbb{Q}]}\\ &\cdot\prod_{k=1}^d\frac{\norm_{L/\mathbb{Q}}(\mathcal{J}_k)^{2/d[K:\mathbb{Q}]}}{Q_k^{2/d}\prod_{i=1}^{g_k}{p_i^{(k)}}^{\frac{2}{d[K:\mathbb{Q}]}\left(f_{p_i^{(k)}}^L-[K:K_k]f_{p_i^{(k)}}^{K_k}\right)}}
    \\&=\frac{\gamma^2[L:K]}{|\norm_{K/\mathbb{Q}}(\text{disc}(L/K))|^{1/[L:\mathbb{Q}]}}\vol(M)^{1/d[L:\mathbb{Q}]}\\& \cdot\prod_{k=1}^d\frac{\norm_{L/\mathbb{Q}}(\mathcal{J}_k)^{\frac{2}{d}\left(\frac{1}{[K:\mathbb{Q}]}-\frac{1}{[L:\mathbb{Q}]}\right)}}{\prod_{i=1}^{g_k}{p_i^{(k)}}^{\frac{2}{d[K:\mathbb{Q}]}\left(f_{p_i^{(k)}}^L-[K:K_k]f_{p_i^{(k)}}^{K_k}\right)}},
\end{align*}
as required.
\end{proof}
Though the factor we have obtained for general ideal lattices and module lattices may seem somewhat convoluted and tricky to interpret, we may make two remarks. The first, if $p_i^{(k)}\mathcal{O}_{K_k}=\prod_{j=1}^{t_{i,k}}\mathfrak{p}_{i,j}^{(k)}$ where $\mathfrak{p}_{i,j}^{(k)}$ are prime ideals of $\mathcal{O}_{K_k}$ and each $\mathfrak{p}_{i,j}^{(k)}$ is inert in $L$ for all $i,j,k$, then $\prod_{k=1}^d\frac{\norm_{L/\mathbb{Q}}(\mathcal{J}_k)^{\frac{1}{d}\left(\frac{1}{[K:\mathbb{Q}]}-\frac{1}{[L:\mathbb{Q}]}\right)}}{\prod_{i=1}^{g_k}{p_i^{(k)}}^{\frac{1}{d[K:\mathbb{Q}]}\left(f_{p_i^{(k)}}^L-[K:K_k]f_{p_i^{(k)}}^{K_k}\right)}}=1$ (set $d=1$ for the ideal lattice case). Secondly, we may attain an upper bound that is easier to comprehend:
\begin{align*}
    &\gamma \frac{\sqrt{[L:K]}}{\sqrt{|\norm_{K/\mathbb{Q}}(\text{disc}(L/K))}^{\frac{1}{[L:\mathbb{Q}]}}}\prod_{k=1}^d\frac{\norm_{L/\mathbb{Q}}(\mathcal{J}_k)^{\frac{1}{d[K:\mathbb{Q}]}-\frac{1}{d[L:\mathbb{Q}]}}}{\prod_{i=1}^{g_k}{p_i^{(k)}}^{\frac{1}{d[K:\mathbb{Q}]}\left(f_{p_i^{(k)}}^L-[K:K_k]f_{p_i^{(k)}}^{K_k}\right)}}\\
    &\leq \gamma \sqrt{[L:K]} \prod_{k=1}^d \norm_{L/\mathbb{Q}}(\mathcal{J}_k)^{\frac{1}{d[K:\mathbb{Q}]}-\frac{1}{d[L:\mathbb{Q}]}},
\end{align*}
and this value can be determined without having to know the prime decomposition of the ideals.
\section{Prime Ideals of Cyclotomic Fields}
\subsection{The Cyclotomic Field $L=\mathbb{Q}(\zeta_{s2^{n+1}})$}
We let $s$ be some positive odd integer, $s \geq 3$. The following is Theorem 2.2 from \cite{wangwang}. 
\begin{theorem}\label{s2factorisation}
Let $q$ be an odd prime power, and let $s \geq 3$ be any odd number such that $\gcd(q,s)=1$, and let $q^{\phi(s)}=m2^A+1$ for some odd $m$, $A \geq 1$. Then, for any $A -1\leq n$ and for any irreducible factor $f(x)$ of $\Phi_{s2^A}(x)$ over $\mathbb{F}_q$, then $f(x^{2^{n-A+1}})$ is also irreducible over $\mathbb{F}_q$. Moreover, all irreducible factors of $\Phi_{s2^{n+1}}(x)$ are obtained in this way.
\end{theorem}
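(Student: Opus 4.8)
The plan is to derive Theorem~\ref{s2factorisation} directly from Lemmas~\ref{factornumber}, \ref{order} and \ref{tmonic}, taking $t=2^{n-A+1}$. First I would dispose of the trivial case $n=A-1$ (then $t=1$ and $\Phi_{s2^{n+1}}=\Phi_{s2^A}$) and assume $n\ge A$, so $t\ge 2$. Given a monic irreducible factor $f$ of $\Phi_{s2^A}(x)$ over $\mathbb{F}_q$, its roots are primitive $s2^A$-th roots of unity, so by Lemma~\ref{order} its order is $e:=s2^A$ and by Lemma~\ref{factornumber} its degree is $\delta:=\mathrm{ord}_{s2^A}(q)$. To feed $f$ into Lemma~\ref{tmonic} with $t=2^{n-A+1}$ I must verify: $2$ (the only prime factor of $t$) divides $e=s2^A$ --- clear; that $2\nmid(q^{\delta}-1)/e$; and that $q^{\delta}\equiv1\pmod4$ whenever $4\mid t$. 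Since $s2^A\mid q^{\delta}-1$ by definition of $\delta$, all three would follow from the single claim $v_2(q^{\delta}-1)=A$ (with $v_2$ the $2$-adic valuation): it makes $(q^{\delta}-1)/(s2^A)$ odd, and, since I will also show $A\ge3$, gives $q^{\delta}\equiv1\pmod4$. So the whole theorem reduces to this valuation identity plus a final counting step.

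For the valuation identity, I would argue as follows. Since $s\ge3$ is odd, $\phi(s)$ is even, and $q$ odd gives $q^{\phi(s)}=(q^2)^{\phi(s)/2}\equiv1\pmod8$, so $A=v_2(q^{\phi(s)}-1)\ge3$. Writing $c:=v_2(q^2-1)$, the $p=2$ lifting-the-exponent lemma gives $v_2(q^k-1)=v_2(q-1)$ for $k$ odd and $v_2(q^k-1)=c+v_2(k)-1$ for $k$ even; applied to the even exponent $\phi(s)$ this gives $A=c+v_2(\phi(s))-1$. Next I would show $\delta$ is even: if not, $v_2(q^{\delta}-1)=v_2(q-1)$, while $2^A\mid q^{\delta}-1$ forces $v_2(q-1)\ge A$, hence $c=v_2(q-1)+v_2(q+1)\ge A+1$ and then $A=c+v_2(\phi(s))-1\ge A+1$ (using $v_2(\phi(s))\ge1$), absurd. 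Therefore $v_2(q^{\delta}-1)=c+v_2(\delta)-1=A-v_2(\phi(s))+v_2(\delta)$, so the claim becomes $v_2(\delta)=v_2(\phi(s))$.

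To prove $v_2(\delta)=v_2(\phi(s))$, I would use $\delta=\mathrm{lcm}\bigl(\mathrm{ord}_s(q),\mathrm{ord}_{2^A}(q)\bigr)$ (CRT). Both $\mathrm{ord}_s(q)$ and $\mathrm{ord}_{2^A}(q)$ divide $\phi(s)$ --- the latter because $q^{\phi(s)}\equiv1\pmod{2^A}$ --- giving $v_2(\delta)\le v_2(\phi(s))$. For the reverse, set $b:=\mathrm{ord}_{2^A}(q)$; the order of $q$ modulo $2^{A+1}$ is $b$ or $2b$, and it is not $b$ (else $b\mid\phi(s)$ would make $q^{\phi(s)}\equiv1\pmod{2^{A+1}}$, contradicting $A=v_2(q^{\phi(s)}-1)$), hence it is $2b$; then $q^{\phi(s)}\not\equiv1\pmod{2^{A+1}}$ forces $2b\nmid\phi(s)$, and with $b\mid\phi(s)$ this yields $v_2(b)=v_2(\phi(s))$, so $v_2(\delta)\ge v_2(b)=v_2(\phi(s))$. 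With the valuation identity established, Lemma~\ref{tmonic} applies and, as $f$ ranges over the $\phi(s2^A)/\delta$ irreducible factors of $\Phi_{s2^A}$, the $f(x^{2^{n-A+1}})$ are that many distinct monic irreducibles of degree $\delta2^{n-A+1}$ and order $s2^A\cdot2^{n-A+1}=s2^{n+1}$; by Lemma~\ref{order} their roots are primitive $s2^{n+1}$-th roots of unity, so each divides $\Phi_{s2^{n+1}}(x)$. To get the ``moreover'' I would count degrees: they sum to $\frac{\phi(s2^A)}{\delta}\cdot\delta2^{n-A+1}=\phi(s)2^{A-1}\cdot2^{n-A+1}=\phi(s)2^n=\deg\Phi_{s2^{n+1}}$, and $\Phi_{s2^{n+1}}$ is squarefree over $\mathbb{F}_q$ (as $\gcd(q,s2^{n+1})=1$), so these are exactly all its irreducible factors.

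The hard part will be the valuation identity $v_2(q^{\delta}-1)=A$, and within it the equality $v_2(\mathrm{ord}_{s2^A}(q))=v_2(\phi(s))$: this is precisely where the hypothesis that $A$ is exactly $v_2(q^{\phi(s)}-1)$ (i.e.\ $q^{\phi(s)}=m2^A+1$ with $m$ odd) is exploited, both for $\mathrm{ord}_{2^A}(q)\mid\phi(s)$ and, more delicately, for the jump of the $2$-power order from modulus $2^A$ to $2^{A+1}$. Everything after that is a routine substitution $x\mapsto x^{t}$ into irreducible factors, of the same flavour as the prime-ideal arguments of \cite{idealrandomprime}.
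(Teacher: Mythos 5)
Your proof is correct. Note, however, that the paper does not actually prove Theorem~\ref{s2factorisation}: it is quoted as Theorem~2.2 of \cite{wangwang}. The closest in-paper analogue is the appendix proof of Theorem~\ref{spfactorisation} (the odd-prime case $sp^{n+1}$), and your argument shares its skeleton --- substitute $x\mapsto x^{t}$ into each irreducible factor, verify the hypotheses of Lemma~\ref{tmonic}, then count degrees against $\deg\Phi_{s2^{n+1}}=\phi(s)2^{n}$ --- but differs in how the crucial non-divisibility condition is verified. The appendix establishes $p\nmid(q^{m}-1)/e$ by an induction on $A$ driven by Lemma~\ref{pqdivide} together with an argument about repeated factors of cyclotomic polynomials modulo $p^{2}$; you instead prove the exact $2$-adic valuation $v_{2}(q^{\delta}-1)=A$ directly via the $p=2$ lifting-the-exponent lemma, reducing it to $v_{2}(\mathrm{ord}_{s2^{A}}(q))=v_{2}(\phi(s))$, which you obtain from the CRT splitting of the order and the jump of the order from modulus $2^{A}$ to $2^{A+1}$. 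This is a sensible divergence: Lemma~\ref{pqdivide} is stated for odd $p$ and the LTE behaviour at $p=2$ is genuinely different (which is also why Lemma~\ref{tmonic} carries the extra hypothesis $q^{m}\equiv1\bmod 4$ when $4\mid t$ --- a condition you correctly discharge by observing $A\ge3$ since $\phi(s)$ is even and odd squares are $1\bmod 8$). Your route is self-contained and arguably more transparent than citing \cite{wangwang}; the paper's choice buys brevity at the cost of an external reference.
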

\begin{theorem}\label{s2svp}
For any prime ideal $\mathfrak{p}=\langle \rho,f(\zeta_{s2^{n+1}}) \rangle$ of $\mathcal{O}_L$ for some rational prime $\rho$, $\gcd(\rho,s)=\gcd(\rho,2)=1$ and irreducible polynomial $f(x)$ of $\Phi_{s2^{n+1}}$ in $\mathbb{F}_\rho[x]$, write $\rho^{\phi(s)}=m2^A+1$ where $m$ is an odd integer and $A \geq 1$, and let $r=\min\{A-1,n\}$. Then, given an oracle that can solve SVP for $\phi(s2^{r+1})$-dimensional lattices, a shortest nonzero vector in $\mathfrak{p}$ can be found in \\$\text{poly}(\phi(s2^{n+1}),\log_2\rho)$ time with the canonical embedding.
\end{theorem}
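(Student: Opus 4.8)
The plan is to exhibit $\mathfrak{p}$ as the extension to $\mathcal{O}_L$ of a prime ideal of the much smaller subfield $K \defeq \mathbb{Q}(\zeta_{s2^{r+1}})$, and to show that the extension is rigid enough that a shortest vector of $\mathfrak{p}$ under $\Sigma_L$ is simply a shortest vector of that smaller prime under $\Sigma_K$, placed into one coordinate of the free $\mathcal{O}_K$-module $\mathcal{O}_L$. If $n<A-1$ then $r=n$, so $\phi(s2^{r+1})=\phi(s2^{n+1})$ and one just runs the oracle on $\Sigma_L(\mathfrak{p})$ directly; hence assume $n\geq A-1$, i.e. $r=A-1$, $[K:\mathbb{Q}]=\phi(s2^{r+1})=\phi(s)2^{r}$ and $[L:K]=\phi(s2^{n+1})/\phi(s2^{r+1})=2^{n-r}$ (here $\zeta_N=\zeta_{s2^{n+1}}$).

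First I would invoke Theorem~\ref{s2factorisation} with $q=\rho$ (an odd prime coprime to $s$): every monic irreducible factor of $\Phi_{s2^{n+1}}$ over $\mathbb{F}_\rho$, in particular the given $f$, has the form $f(x)=h\!\left(x^{2^{n-r}}\right)$ for a unique monic irreducible factor $h$ of $\Phi_{s2^{r+1}}$ over $\mathbb{F}_\rho$, and $h$ is read off from $f$ in polynomial time. Since $\zeta_N^{2^{n-r}}=\zeta_{s2^{r+1}}\in K$, lifting $h$ to $\mathbb{Z}[x]$ gives $f(\zeta_N)=h(\zeta_{s2^{r+1}})\in\mathcal{O}_K=\mathbb{Z}[\zeta_{s2^{r+1}}]$, so $\mathfrak{c}\defeq\langle\rho,h(\zeta_{s2^{r+1}})\rangle_{\mathcal{O}_K}$ is a prime ideal of $\mathcal{O}_K$ (as $\rho$ is unramified in $K$ and $h$ is irreducible mod $\rho$) with $\mathfrak{c}\,\mathcal{O}_L=\langle\rho,f(\zeta_N)\rangle_{\mathcal{O}_L}=\mathfrak{p}$. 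Next, the degree count $[L:K]=2^{n-r}$ forces $X^{2^{n-r}}-\zeta_{s2^{r+1}}$ to be the minimal polynomial of $\zeta_N$ over $K$, whence $\mathcal{O}_L=\mathbb{Z}[\zeta_N]=\mathcal{O}_K[\zeta_N]=\bigoplus_{i=0}^{2^{n-r}-1}\mathcal{O}_K\,\zeta_N^{\,i}$ and therefore $\mathfrak{p}=\bigoplus_{i=0}^{2^{n-r}-1}\mathfrak{c}\,\zeta_N^{\,i}$: every $x\in\mathfrak{p}$ is uniquely $x=\sum_i c_i\zeta_N^{\,i}$ with $c_i\in\mathfrak{c}$.

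The technical heart is the identity $\|\Sigma_L(x)\|^2=2^{n-r}\sum_{i=0}^{2^{n-r}-1}\|\Sigma_K(c_i)\|^2$. I would prove it embedding by embedding: fixing an embedding $\tau$ of $K$, the $2^{n-r}$ embeddings of $L$ extending $\tau$ send $\zeta_N$ to the $2^{n-r}$ distinct roots of $X^{2^{n-r}}-\tau(\zeta_{s2^{r+1}})$, all of modulus $1$ and of the form $\xi_0\mu$ with $\mu$ running over the group $\mu_{2^{n-r}}$ of $2^{n-r}$-th roots of unity; expanding $\sum_{\mu}\big|\sum_i\tau(c_i)(\xi_0\mu)^i\big|^2$ and using $\sum_{\mu}\mu^{i-j}=2^{n-r}$ if $i=j$ and $0$ otherwise, all cross terms cancel and, since $|\xi_0|=1$, only $2^{n-r}\sum_i|\tau(c_i)|^2$ survives; summing over $\tau$ gives the claim. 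Consequently, over nonzero tuples $(c_i)$ the right-hand side is minimized by placing a shortest vector of $(\mathfrak{c},\Sigma_K)$ into a single coordinate and zero elsewhere, so $\lambda_1(\mathfrak{p},\Sigma_L)=\sqrt{2^{n-r}}\,\lambda_1(\mathfrak{c},\Sigma_K)$ and any shortest $c\in\mathfrak{c}$, viewed inside $\mathfrak{p}\subseteq\mathcal{O}_L$, is a shortest nonzero vector of $\mathfrak{p}$.

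The algorithm then reads: compute $A$ (hence $r$) from $\rho^{\phi(s)}=m2^A+1$; if $n<A-1$, call the oracle on $\Sigma_L(\mathfrak{p})$; otherwise read $h$ off the given $f$, build a $\mathbb{Z}$-basis of $\mathfrak{c}$ and its image $\Sigma_K(\mathfrak{c})$, a lattice of dimension $\phi(s2^{r+1})$, invoke the SVP oracle once, recover $c\in\mathfrak{c}$ from the returned vector, and output $c$; all steps are $\text{poly}(\phi(s2^{n+1}),\log_2\rho)$. I expect the cross-term cancellation in the norm identity to be the crux: it works only because the chosen power basis of $\mathcal{O}_L$ over $\mathcal{O}_K$ consists of roots of unity (hence has modulus $1$ under every embedding) and because Theorem~\ref{s2factorisation} pins down exactly the subfield $K$ to which $f(\zeta_N)$ descends while keeping $[K:\mathbb{Q}]$ as small as $\phi(s2^{r+1})$ — taking $K$ larger would waste the dimension reduction, taking it smaller would break the descent of $f(\zeta_N)$.
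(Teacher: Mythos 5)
Your proposal is correct and follows essentially the same route as the paper: descend $f(\zeta_N)$ to $K=\mathbb{Q}(\zeta_{s2^{r+1}})$ via Theorem~\ref{s2factorisation}, decompose $\mathfrak{p}=\bigoplus_{i}\mathfrak{c}\,\zeta_N^{i}$, and use the orthogonality of the $2^{n-r}$-th roots of unity to show the canonical embedding splits this sum orthogonally, so a shortest vector of $\mathfrak{c}$ is a shortest vector of $\mathfrak{p}$. The only differences are presentational (you identify $K$ via the free power basis of $\mathcal{O}_L$ over $\mathcal{O}_K$ rather than as the fixed field of a subgroup of the decomposition group, and you compute the norm identity embedding-by-embedding rather than via the trace form), and your explicit scaling $\lambda_1(\mathfrak{p},\Sigma_L)=\sqrt{2^{n-r}}\,\lambda_1(\mathfrak{c},\Sigma_K)$ is in fact a more careful statement of the paper's conclusion.
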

\begin{proof}
We assume that $n \geq A$ otherwise the theorem is vacuously true, so $r=A-1$. Let
\begin{align*}
    G=\{\sigma_i: \gcd(i,2)=\gcd(i,s)=1\}
\end{align*}
denote the Galois group of $L$ over $\mathbb{Q}$, where
\begin{align*}
    &\sigma_i: \mathbb{Q}(\zeta_{s2^{n+1}}) \to  \mathbb{Q}(\zeta_{s2^{n+1}}),\\
    &\sigma_i(\zeta_{s2^{n+1}}^k)=\zeta_{s2^{n+1}}^{ki}.
\end{align*}
By Theorem \ref{s2factorisation}, for any factor $f(x)$ of $\Phi_{s2^{n+1}}(x)$ that is irreducible in $\mathbb{F}_\rho[x]$, there exists a polynomial $g(x)$ that is a factor of $\Phi_{s2^{r+1}}(x)$ that is irreducible over $\mathbb{F}_{\rho}[x]$ such that $f(x)=g(x^{2^{n-r}})$. Then the prime ideal lattice $\mathfrak{p}$ can be represented by
\begin{align*}
    \langle \rho,f(\zeta_{s2^{n+1}}) \rangle = \langle \rho,g(\zeta_{s2^{r+1}}) \rangle.
\end{align*}
For any $1 \leq k \leq 2^{n-r}-1$, the map $\sigma_{ks2^{r+1}+1}$ fixes $\zeta_{s2^{n+1}}^{l2^{n-r}}$ for any integer $0 \leq l <s2^{r+1}$. Moreover, since $\gcd(ks2^{r+1}+1,2)=\gcd(ks2^{r+1},s)=1$, each subset $H_k$ of $G$ generated $\sigma_{ks2^{r+1}+1}$ forms a cyclic group, and so the set $H=H_1 \times H_2 \times \dots \times H_{2^{n-r}-1}$ forms a subgroup of the decomposition group of $\mathfrak{p}$, since both $\rho$ and $f(\zeta_{s2^{n+1}})=g(\zeta_{2^{n+1}}^{2^{n-r}})$ are fixed by each $\sigma_i \in H$. $K=\mathbb{Q}(\zeta_{s2^{n+1}}^{2^{n-r}})$ must be the fixed field of the group $H$, as for all $i \in \left(\mathbb{Z}/s2^{n+1}\mathbb{Z}\right)^{\times}$,
\begin{align*}
    \sigma_i(\zeta_{s2^{n+1}}^{2^{n-r}})=\zeta_{s2^{n+1}}^{2^{n-r}} \iff i \equiv 1 \mod s2^{r+1}.
\end{align*}
Note that $\mathcal{O}_K$ has the $\mathbb{Z}$-basis $\{1,\zeta_{s2^{n+1}}^{2^{n-r}},\zeta_{s2^{n+1}}^{2(2^{n-r})}, \dots, \zeta_{s2^{n+1}}^{(\phi(s2^{r+1})-1)(2^{n-r})}\}$. Letting $\mathfrak{c}=\mathfrak{p} \cap \mathcal{O}_K$, we claim that
\begin{align*}
    \mathfrak{p}=\bigoplus_{k=0}^{2^{n-r}}\zeta_{s2^{n+1}}^k\mathfrak{c}.
\end{align*}
For any $a \in \mathfrak{p}$, there exist integers $z_i,w_i$ such that
\begin{align*}
    a&=\sum_{i=0}^{\phi(s2^{n+1})-1}z_i\zeta_{s2^{n+1}}^if(\zeta_{s2^{n+1}})+\sum_{i=0}^{\phi(s2^{n+1})-1}w_i\zeta_{s2^{n+1}}\rho
    \\&=\sum_{k=0}^{2^{n-r}-1}\zeta_{s2^{n+1}}^k\sum_{j=0}^{\phi(s2^{r+1})-1}\left(z_{k+j2^{n-r}}\zeta_{s2^{n+1}}^{j2^{n-r}}f(\zeta_{s2^{n-r}})+w_{k+j2^{n-r}}\zeta_{s2^{n+1}}^{j2^{n-r}}\rho\right)
    \\&=\sum_{k=0}^{2^{n-r}-1}\zeta_{s2^{n+1}}^k\Bigg(\left(\sum_{j=0}^{\phi(s2^{r+1})-1}z_{k+j2^{n-r}}\zeta_{s2^{n+1}}^{j2^{n-r}}\right)f(\zeta_{s2^{n+1}})\\&+\left(\sum_{j=0}^{\phi(s2^{r+1}-1}w_{k+j2^{n-r}}\zeta_{s2^{n+1}}^{j2^{n-r}}\right)\rho\Bigg),
\end{align*}
which proves our claim. Now, for any $x_k \in \mathfrak{c}, 0 \leq k \leq 2^{n-r}-1$, let $x=\sum_{k=0}^{2^{n-r}-1}x_k \zeta_{s2^{n+1}}^k \in \mathfrak{p}$. Then the quadratic form induced by the ideal lattice $\mathfrak{p}$ is given by
\begin{align*}
    &\trace_{L/\mathbb{Q}}(x\overline{x})=\trace_{L/\mathbb{Q}}\left(\sum_{k,l=0}^{2^{n-r}-1}x_k\overline{x_l}\zeta_{s2^{n+1}}^{k-l}\right)\\&=\sum_{i=0: \gcd(i,s)=\gcd(i,2)=1}^{s2^{n+1}-1}\sum_{k,l=0}^{2^{n-r}-1}\sigma_i\left(x_k\overline{x_l}\zeta_{s2^{n+1}}^{k-l}\right)
    \\&=\sum_{i=0: \gcd(i,s)=\gcd(i,2)=1}^{s2^{r+1}-1}\sum_{j=0}^{2^{n-r}-1}\sum_{k,l=0}^{2^{n-r}-1}\sigma_{i+js2^{r+1}}(x_k\overline{x_l})\zeta_{s2^{n+1}}^{(i+js2^{r+1})(k-l)}
    \\&=\sum_{i=0: \gcd(i,s)=\gcd(i,2)=1}^{s2^{r+1}-1}\sum_{j=0}^{2^{n-r}-1}\sum_{k,l=0}^{2^{n-r}-1}\sigma_i(x_k\overline{x_l})\zeta_{s2^{n+1}}^{(i+js2^{r+1})(k-l)},
\end{align*}
but note that we have
\begin{align*}
    \sum_{j=0}^{2^{n-r}-1}\zeta_{s2^{n+1}}^{(i+js2^{r+1})(k-l)}=\sum_{j=0}^{2^{n-r}-1}\zeta_{s2^{n+1}}^{i(k-l)}\zeta_{2^{n-r}}^{j(k-l)}=
    \begin{cases}
    2^{n-r} \hspace{2mm} &\text{if} \hspace{2mm} k=l,
    \\ 0 \hspace{2mm} &\text{otherwise.}
    \end{cases}
\end{align*}
Hence
\begin{align*}
    \trace_{L/\mathbb{Q}}(x\overline{x})&=2^{n-r}\sum_{i=0: \gcd(i,s)=\gcd(i,2)=1}^{s2^{r+1}-1}\sum_{k=0}^{2^{n-r}-1}\sigma_i(x_k\overline{x_k})\\&=2^{n-r}\sum_{k=0}^{2^{n-r}-1}\trace_{K/\mathbb{Q}}(x_k\overline{x_k}),
\end{align*}
and so $\lambda_1(\mathfrak{p})=\lambda_1(\mathfrak{c})$, as required. The algorithm below summarises how to find the shortest nonzero vector in a prime ideal lattice $\mathfrak{p}$. The most time-consuming step in the algorithm below is Step 2, and all other steps may be performed in $\text{poly}(\phi(s2^{n+1}),\log_2 \rho)$ time.
\end{proof}
\begin{algorithm}
\SetKwInOut{Input}{input}\SetKwInOut{Output}{output} \Input{A prime ideal $\mathfrak{p}=\langle \rho, f(\zeta_{s2^{n+1}}) \rangle$ in $\mathbb{Z}[\zeta_{s2^{n+1}}]$, where $\rho$ is odd and $\gcd(\rho,s)=1$.} \Output{A shortest vector in the corresponding prime ideal lattice.} \BlankLine 
\nl Compute the ideal $\mathfrak{c}$ generated by $\rho$ and $f(\zeta_{s2^{n+1}})$ in $\mathcal{O}_K$ where $K=\mathbb{Q}(\zeta_{s2^{n+1}}^{2^{n-r}})$.
\\
\nl Find a shortest vector $v$ in the $\phi(s2^{r+1})$-dimensional lattice $\mathfrak{c}$.
\\
\nl Output v.
\caption{SVP algorithm for prime ideal lattices of $\mathbb{Z}[\zeta_{s2^{n+1}}]$}
\end{algorithm}
\subsection{The Cyclotomic Field $L=\mathbb{Q}(\zeta_{sp^{n+1}})$}
The following theorem is a generalisation of Theorem 2 in \cite{factorpn}.
\begin{theorem}\label{spfactorisation}
Let $s,p,q$ be positive integers such that $p$ is an odd prime, $q$ is a prime power and $\gcd(s,p)=\gcd(q,p)=1$. Suppose $q^{\phi(s)} \equiv a \mod p$, for some integer $\gcd(a,p)=1$ and set $q^{\phi(s)}=mp^A+a$ for some integer $m$ such that $\gcd(m,p)=1$ and some integer $A \geq 0$. Then for any $n \geq A-1$ and any irreducible factor $f(x)$ of the cyclotomic polynomial $\Phi_{p^As}(x)$ over $\mathbb{F}_q$, $f(x^{n-A+1})$ is also irreducible. Moreover, all the irreducible factors of $\Phi_{p^{n+1}s}(x)$ are obtained in this way.
\end{theorem}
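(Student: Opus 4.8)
The plan is to reduce the statement to a computation of multiplicative orders over $\mathbb{F}_q$ and then lift the irreducible factors one power of $p$ at a time, using Lemma~\ref{tmonic} for the lift and Lemma~\ref{pqdivide} to keep alive the arithmetic hypothesis that Lemma~\ref{tmonic} demands. First I would record the structural facts: by Lemma~\ref{factornumber}, for each $j \ge A$ the polynomial $\Phi_{p^j s}(x)$ splits over $\mathbb{F}_q$ into $\phi(p^j s)/m_j$ distinct monic irreducibles of common degree $m_j = \mathrm{ord}_{p^j s}(q)$, and by Lemma~\ref{order} each such factor has order exactly $p^j s$, its roots being primitive $p^j s$-th roots of unity. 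So it suffices to show that for every $j \ge A$ the substitution $g(x) \mapsto g(x^{p})$ sends the irreducible factors of $\Phi_{p^j s}$ bijectively onto those of $\Phi_{p^{j+1} s}$; iterating $n+1-A$ times from level $A$ then realises the map $f \mapsto f(x^{p^{\,n-A+1}})$ of the theorem, and shows these are all the factors of $\Phi_{p^{n+1}s}$.

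For the lifting step I would apply Lemma~\ref{tmonic} with $t = p$ to a factor $g$ of $\Phi_{p^j s}$ of degree $m_j$ and order $e = p^j s$. The only prime dividing $t=p$ is $p$, which divides $e$ since $j \ge A \ge 1$; the proviso ``$q^{m_j}\equiv 1 \bmod 4$ when $4\mid t$'' is vacuous because $p$ is odd; so the only nontrivial hypothesis is $p \nmid (q^{m_j}-1)/e$, which, as $\gcd(s,p)=1$ and $s \mid q^{m_j}-1$, is equivalent to $v_p(q^{m_j}-1)=j$. Granting this, Lemma~\ref{tmonic} gives that the $g_i(x^p)$ are distinct monic irreducibles of degree $p\,m_j$ and order $p^{j+1}s$, hence by Lemma~\ref{order} have primitive $p^{j+1}s$-th roots of unity as roots and so divide $\Phi_{p^{j+1}s}$. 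Distinct $g_i$ give distinct images, and a degree count closes the bijection: $\sum_i \deg\!\big(g_i(x^p)\big) = \tfrac{\phi(p^j s)}{m_j}\cdot p\,m_j = p\,\phi(p^j s) = \phi(p^{j+1}s) = \deg \Phi_{p^{j+1}s}$, using $p\,\phi(p^j s)=\phi(p^{j+1}s)$ (which holds as $\gcd(s,p)=1$ and $j\ge 1$); in particular $m_{j+1}=p\,m_j$, so $m_j = m_A\, p^{\,j-A}$ along the whole tower.

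The arithmetic induction that sustains $v_p(q^{m_j}-1)=j$ is exactly Lemma~\ref{pqdivide}, applied with its parameters chosen so that it acts on the sequence $q^{m_A}, q^{m_A p}, q^{m_A p^2},\dots$ (legitimate since $q^{m_A}\equiv 1\bmod p$): it propagates $v_p(q^{m_j}-1)=j$ to $v_p(q^{m_{j+1}}-1)=j+1$ by lifting the exponent. Hence everything reduces to the base case $v_p\!\big(q^{\,\mathrm{ord}_{p^A s}(q)}-1\big)=A$.

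This base case is the main obstacle, and it is where the hypothesis $\gcd(m,p)=1$ on the coefficient $m$ in $q^{\phi(s)} = m p^A + a$ has to enter. The divisibility $p^A s \mid q^{m_A}-1$ is automatic, so the content is $p^{A+1}\nmid q^{m_A}-1$. Writing $d=\mathrm{ord}_p(q)$ and $v=v_p(q^d-1)$ and using $m_A = \mathrm{lcm}\!\big(d\,p^{\max(0,A-v)},\,\mathrm{ord}_s(q)\big)$, a lifting-the-exponent computation turns the base case into the inequality $A \ge v + v_p(\mathrm{ord}_s(q))$; the delicate point is to extract this from the decomposition $q^{\phi(s)}=mp^A+a$ with $p\nmid m$ and $q^{\phi(s)}\equiv a\bmod p$, using that $q^{\phi(s)}\equiv 1\bmod s$ (so $\mathrm{ord}_s(q)\mid\phi(s)$) and analysing $A=v_p(q^{\phi(s)}-a)$ in terms of $v$ and $v_p(\phi(s))$. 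Once this is in hand, the steps above assemble into the full statement, and the $s=1$ case — where $\mathrm{ord}_s(q)=1$ makes the inequality trivial — recovers Theorem~2 of \cite{factorpn}.
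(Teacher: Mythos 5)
Your scaffolding matches the paper's: both arguments rest on Lemmas \ref{factornumber}, \ref{tmonic}, \ref{order} and \ref{pqdivide}, both reduce the applicability of Lemma \ref{tmonic} to controlling the $p$-adic valuation of $q^{m}-1$, and both close with the same factor count $\phi(p^{n+1}s)/(mp^{n-A+1})=\phi(p^{A}s)/m$. The only structural difference is organisational: you apply Lemma \ref{tmonic} iteratively with $t=p$, propagating $v_p(q^{m_j}-1)=j$ up the tower with Lemma \ref{pqdivide}, whereas the paper verifies the valuation condition at level $A$ only and then applies Lemma \ref{tmonic} once with $t=p^{n-A+1}$. Either organisation works.

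The genuine gap is that you never prove your base case $v_p(q^{m_A}-1)=A$, i.e. $p\nmid (q^{m_A}-1)/(sp^{A})$. You correctly flag it as ``the main obstacle,'' reduce it via lifting-the-exponent to the inequality $A\ge v+v_p(\mathrm{ord}_s(q))$, and then stop with ``once this is in hand.'' But this is exactly the step where the hypothesis $q^{\phi(s)}=mp^{A}+a$ with $\gcd(m,p)=1$ has to be consumed, and it is where the paper spends essentially all of its effort: it takes $m'$ to be the multiplicative order of $q^{\phi(s)}$ modulo $p$, notes $m\mid\phi(s)m'$ using Euler's theorem, rules out $p^2\mid q^{\phi(s)m'}-1$ by expanding $q^{\phi(s)m'}-1=\prod_{d\mid m'}\Phi_d(q^{\phi(s)})$ and invoking the distinctness of the irreducible factors of each $\Phi_d$ modulo $p$ (Lemma \ref{factornumber}), and then climbs from exponent $1$ to exponent $A$ with Lemma \ref{pqdivide}. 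Your inequality is in fact attainable by an argument of the kind you gesture at --- for instance, when $a=1$ one has $A=v+v_p(\phi(s))$ by lifting the exponent, and $\mathrm{ord}_s(q)\mid\phi(s)$ gives $v_p(\mathrm{ord}_s(q))\le v_p(\phi(s))$ --- but as written the proposal asserts rather than establishes the one nontrivial arithmetic fact in the theorem, so the proof is incomplete without it.
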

\begin{proof}
See Appendix A.
\end{proof}
\begin{theorem}\label{spsvp}
For any prime ideal $\mathfrak{p}=\langle \rho,f(\zeta_{sp^{n+1}}) \rangle$ where $\rho$ is a positive rational prime, $\gcd(\rho,s)=\gcd(\rho,p)=1$ and irreducible polynomial $f(x)$ of the cyclotomic polynomial $\Phi_{sp^{n+1}}(x)$ in $\mathbb{F}_\rho[x]$, assume that $\rho^{\phi(s)} \equiv a \mod p$ for some $\gcd(a,p)=1$ and set $\rho^{\phi(s)}=mp^A + a$ for some positive integer $m$ such that $\gcd(m,p)=1$ and some integer $A \geq 1$ and let $r=\min\{A-1,n\}$. Then, given an oracle that can solve SVP for $\phi(sp^{r+1})$-dimensional lattices, a shortest nonzero vector in $\mathfrak{p}$ can be found in $\text{poly}(\phi(sp^{n+1}),\log_2 \rho)$ time with the canonical embedding.
\end{theorem}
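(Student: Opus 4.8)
The plan is to follow the proof of Theorem~\ref{s2svp} almost verbatim, with the prime $2$ replaced by the odd prime $p$ and Theorem~\ref{s2factorisation} replaced by Theorem~\ref{spfactorisation}. First, if $n < A$ then $r = n$, so the claimed oracle already operates in the full dimension $\phi(sp^{n+1}) = \phi(sp^{r+1})$ and the statement is immediate; so assume $n \geq A$, whence $r = A-1$ and $n - r = n-A+1 \geq 1$. By Theorem~\ref{spfactorisation}, the irreducible factor $f$ of $\Phi_{sp^{n+1}}$ over $\mathbb{F}_\rho$ can be written as $f(x) = g(x^{p^{n-r}})$ where $g$ is an irreducible factor of $\Phi_{sp^{r+1}}$ over $\mathbb{F}_\rho$. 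Consequently $\mathfrak{p} = \langle \rho, f(\zeta_{sp^{n+1}})\rangle = \langle \rho, g(\zeta_{sp^{n+1}}^{p^{n-r}})\rangle$, and the aim becomes to relate SVP in $\mathfrak{p}$ to SVP in the sublattice $\mathfrak{c} := \mathfrak{p} \cap \mathcal{O}_K$, where $K := \mathbb{Q}(\zeta_{sp^{n+1}}^{p^{n-r}}) = \mathbb{Q}(\zeta_{sp^{r+1}})$ has degree $\phi(sp^{r+1})$ over $\mathbb{Q}$.

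Writing $G = \mathrm{Gal}(L/\mathbb{Q}) = \{\sigma_i : \gcd(i, sp) = 1\}$ with $\sigma_i(\zeta_{sp^{n+1}}^k) = \zeta_{sp^{n+1}}^{ki}$, I would show that the subgroup $H = \{\sigma_i : i \equiv 1 \bmod sp^{r+1}\}$ lies in the decomposition group of $\mathfrak{p}$: each $\sigma_i \in H$ fixes $\rho$ and fixes $\zeta_{sp^{n+1}}^{p^{n-r}}$ (a primitive $sp^{r+1}$-th root of unity), hence fixes both generators of $\mathfrak{p}$. The equivalence $\sigma_i(\zeta_{sp^{n+1}}^{p^{n-r}}) = \zeta_{sp^{n+1}}^{p^{n-r}} \iff i \equiv 1 \bmod sp^{r+1}$ shows $H$ has fixed field exactly $K$, and $\mathcal{O}_K = \mathbb{Z}[\zeta_{sp^{n+1}}^{p^{n-r}}]$ with power basis $\{\zeta_{sp^{n+1}}^{jp^{n-r}} : 0 \leq j \leq \phi(sp^{r+1})-1\}$. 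I would then establish the internal direct sum $\mathfrak{p} = \bigoplus_{k=0}^{p^{n-r}-1} \zeta_{sp^{n+1}}^k\,\mathfrak{c}$: the inclusion $\supseteq$ is immediate since each $\zeta_{sp^{n+1}}^k \mathfrak{c} \subseteq \zeta_{sp^{n+1}}^k\mathfrak{p} \subseteq \mathfrak{p}$, while for $\subseteq$ one takes an arbitrary $a \in \mathfrak{p}$, writes $a = \sum_i z_i \zeta_{sp^{n+1}}^i f(\zeta_{sp^{n+1}}) + \sum_i w_i \zeta_{sp^{n+1}}^i \rho$ with $z_i, w_i \in \mathbb{Z}$, and regroups the index as $i = k + j p^{n-r}$ so that, using $f(\zeta_{sp^{n+1}}) = g(\zeta_{sp^{n+1}}^{p^{n-r}})$, the coefficient of each $\zeta_{sp^{n+1}}^k$ lands in $\mathcal{O}_K \cap \mathfrak{p} = \mathfrak{c}$.

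The analytic core is the trace form. For $x = \sum_{k=0}^{p^{n-r}-1} x_k \zeta_{sp^{n+1}}^k$ with $x_k \in \mathfrak{c}$, I would expand $\trace_{L/\mathbb{Q}}(x\overline{x}) = \sum_{i} \sum_{k,l} \sigma_i(x_k\overline{x_l})\,\zeta_{sp^{n+1}}^{i(k-l)}$ (the outer sum over $0 \le i < sp^{n+1}$ with $\gcd(i,sp)=1$) and split $i = i_0 + j\,sp^{r+1}$ with $\gcd(i_0, sp) = 1$, $0 \leq i_0 < sp^{r+1}$, $0 \leq j < p^{n-r}$; since $x_k\overline{x_l} \in K$ (which is closed under complex conjugation), $\sigma_{i_0 + jsp^{r+1}}$ agrees with $\sigma_{i_0}$ on it, and the inner sum $\sum_{j=0}^{p^{n-r}-1} \zeta_{sp^{n+1}}^{(i_0 + jsp^{r+1})(k-l)} = \zeta_{sp^{n+1}}^{i_0(k-l)}\sum_{j=0}^{p^{n-r}-1}\zeta_{p^{n-r}}^{j(k-l)}$ equals $p^{n-r}$ when $k = l$ and $0$ otherwise (as $|k-l| < p^{n-r}$). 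This gives $\trace_{L/\mathbb{Q}}(x\overline{x}) = p^{n-r}\sum_{k=0}^{p^{n-r}-1}\trace_{K/\mathbb{Q}}(x_k\overline{x_k})$, so under the canonical embedding $\mathfrak{p}$ is, up to the uniform scaling by $\sqrt{p^{n-r}}$, an orthogonal direct sum of $p^{n-r}$ copies of the lattice $\mathfrak{c}$; hence a shortest nonzero vector of $\mathfrak{c}$ (taken as $x_0$ with all other $x_k = 0$) is a shortest nonzero vector of $\mathfrak{p}$. The algorithm is then: compute $\mathfrak{c} = \langle \rho, g(\zeta_{sp^{r+1}})\rangle \subseteq \mathcal{O}_K$, call the oracle on the $\phi(sp^{r+1})$-dimensional lattice $\mathfrak{c}$, and output the result; every step but the oracle call is clearly $\mathrm{poly}(\phi(sp^{n+1}), \log_2 \rho)$. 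I expect no serious number-theoretic obstacle beyond Theorem~\ref{spfactorisation} itself; the main care needed is combinatorial bookkeeping — checking that the reindexing $i = i_0 + jsp^{r+1}$ is a bijection onto $(\mathbb{Z}/sp^{n+1}\mathbb{Z})^\times$ (which reduces to $sp \mid sp^{r+1} \mid sp^{n+1}$ together with the count $\phi(sp^{r+1})\,p^{n-r} = \phi(sp^{n+1})$), and confirming that complex conjugation and the restrictions of the $\sigma_i$ behave correctly on $K$.
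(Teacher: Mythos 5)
Your proposal is correct and follows essentially the same route as the paper's own proof: the factorisation $f(x)=g(x^{p^{n-r}})$ from Theorem~\ref{spfactorisation}, the subgroup $H=\{\sigma_i: i\equiv 1 \bmod sp^{r+1}\}$ of the decomposition group with fixed field $K=\mathbb{Q}(\zeta_{sp^{n+1}}^{p^{n-r}})$, the direct-sum decomposition $\mathfrak{p}=\bigoplus_k \zeta_{sp^{n+1}}^k\mathfrak{c}$ via regrouping the generators, and the trace-form computation showing the summands are orthogonal up to the uniform factor $p^{n-r}$. Your explicit justification that $\sigma_{i_0+jsp^{r+1}}$ agrees with $\sigma_{i_0}$ on $x_k\overline{x_l}\in K$ is a point the paper leaves implicit, but the argument is the same.
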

\begin{proof}
We assume that $n \geq A$ otherwise the theorem is vacuously true. Let
\begin{align*}
    G=\{\sigma_i, 1 \leq i \leq sp^{n+1}-1: \gcd(i,p)=\gcd(i,s)=1\}
\end{align*}
denote the Galois group of $L$ over $\mathbb{Q}$, where
\begin{align*}
    &\sigma_i: \mathbb{Q}(\zeta_{sp^{n+1}}) \to \mathbb{Q}(\zeta_{sp^{n+1}}),\\
    &\sigma_i(\zeta_{sp^{n+1}}^k)=\zeta_{sp^{n+1}}^{ki}.
\end{align*}
By Theorem \ref{spfactorisation}, for any factor $f(x)$ of $\Phi_{sp^{n+1}}(x)$ that is irreducible in $\mathbb{F}_\rho[x]$, there exists a polynomial $g(x)$ that is a factor of $\Phi_{sp^{r+1}}(x)$ that is irreducible over $\mathbb{F}_\rho[x]$ such that $f(x)=g\left(x^{p^{n-r}}\right)$. Then the prime ideal $\mathfrak{p}$ can be represented by
\begin{align*}
    \langle \rho, f(\zeta_{sp^{n+1}}) \rangle = \langle \rho, g(\zeta_{sp^{r+1}}) \rangle.
\end{align*}
For any $1 \leq k \leq p^{n-r}-1$, the map $\sigma_{ksp^{r+1}+1}$ fixes $\zeta_{sp^{n+1}}^{lp^{n-r}}$ for any integer $0 \leq l <sp^{r+1}$. Moreover, since $\gcd(ksp^{r+1}+1,p)=\gcd(ksp^{r+1},s)=1$, each subset $H_k$ of $G$ generated by $\sigma_{ksp^{r+1}+1}$ forms a cyclic group, and so the set $H=H_1 \times H_2 \times \dots \times H_{p^{n-r}-1}$  forms a subgroup of the decompositiong group of $\mathfrak{p}$, since both $\rho$ and $f(\zeta_{sp^{n+1}})=g\left(\zeta_{sp^{n+1}}^{p^{n-r}}\right)$ are fixed by each $\sigma_i \in H$. Then $K=\mathbb{Q}(\zeta_{sp^{n+1}}^{p^{n-r}})$ must be the fixed field of the group $H$, as for all $i \in \left(\mathbb{Z}/sp^{n+1}\mathbb{Z}\right)^\times$,
\begin{align*}
    \sigma_i(\zeta_{sp^{n+1}}^{p^{n-r}})=\zeta_{sp^{n+1}}^{p^{n-r}} \iff i \equiv 1 \mod sp^{r+1}.
\end{align*}
Note that $\mathcal{O}_K$ has the $\mathbb{Z}$-basis $\{1,\zeta_{sp^{n+1}}^{p^{n-r}},\zeta_{sp^{n+1}}^{2p^{n-r}},\dots, \zeta_{sp^{n+1}}^{(\phi(sp^{r+1})-1)p^{n-r}}\}$. Letting $\mathfrak{c}=\mathfrak{p} \cap \mathcal{O}_K$, we claim that
\begin{align*}
    \mathfrak{p}=\bigoplus_{k=0}^{p^{n-r}-1}\zeta_{sp^{n+1}}^k \mathfrak{c}.
\end{align*}
For any $a \in \mathfrak{p}$, there exist integers $z_i,w_i$ such that
\begin{align*}
    a&=\sum_{i=0}^{\phi(sp^{n+1})-1}z_i\zeta_{sp^{n+1}}^if(\zeta_{sp^{n+1}})+\sum_{i=0}^{\phi(sp^{n+1})-1}w_i\zeta_{sp^{n+1}}\rho
    \\&=\sum_{k=0}^{p^{n-r}-1}\zeta_{sp^{n+1}}^k\sum_{j=0}^{\phi(sp^{r+1})-1}\left(z_{k+jp^{n-r}}\zeta_{sp^{n+1}}^{jp^{n-r}}f(\zeta_{sp^{n-r}})+w_{k+jp^{n-r}}\zeta_{sp^{n+1}}^{jp^{n-r}}\rho\right)
    \\&=\sum_{k=0}^{p^{n-r}-1}\zeta_{sp^{n+1}}^k\Bigg(\left(\sum_{j=0}^{\phi(sp^{r+1})-1}z_{k+jp^{n-r}}\zeta_{sp^{n+1}}^{jp^{n-r}}\right)f(\zeta_{sp^{n+1}})\\&+\left(\sum_{j=0}^{\phi(sp^{r+1})-1}w_{k+jp^{n-r}}\zeta_{sp^{n+1}}^{jp^{n-r}}\right)\rho\Bigg),
\end{align*}
which proves our claim. Now, for any $x_k \in \mathfrak{c}, 0 \leq k \leq p^{n-r}-1$, let $x=\sum_{k=0}^{p^{n-r}-1}x_k\zeta_{sp^{n+1}}^k \in \mathfrak{p}$. Then the quadratic form induced by the ideal lattice $\mathfrak{p}$ is given by
\begin{align*}
    &\trace_{L/\mathbb{Q}}(x\overline{x})=\trace_{L/\mathbb{Q}}\left(\sum_{k,l=0}^{p^{n-r}-1}x_k\overline{x_l}\zeta_{sp^{n+1}}^{k-l}\right)\\&=\sum_{i=0: \gcd(i,s)=\gcd(i,p)=1}^{sp^{n+1}-1}\sum_{k,l=0}^{p^{n-r}-1}\sigma_i(x_k\overline{x_l}\zeta_{sp^{n+1}}^{k-l})
    \\&=\sum_{i=0: \gcd(i,s)=\gcd(i,p)=1}^{sp^{r+1}-1}\sum_{j=0}^{p^{n-r}-1}\sum_{k,l=0}^{p^{n-r}-1}\sigma_{i+jsp^{r+1}}(x_k\overline{x_l})\zeta_{sp^{n+1}}^{(i+jsp^{r+1})(k-l)}
    \\&=\sum_{i=0: \gcd(i,s)=\gcd(i,p)=1}^{sp^{r+1}-1}\sum_{j=0}^{p^{n-r}-1}\sum_{k,l=0}^{p^{n-r}-1}\sigma_i(x_k\overline{x_l})\zeta_{sp^{n+1}}^{(i+jsp^{r+1})(k-l)},
\end{align*}
but note that we have
\begin{align*}
    \sum_{j=0}^{p^{n-r}-1}\zeta_{sp^{n+1}}^{(i+jsp^{r+1})(k-l)}=\sum_{j=0}^{p^{n-r}-1}\zeta_{sp^{n+1}}^{i(k-l)}\zeta_{p^{n-r}}^{j(k-l)}=
    \begin{cases}
    p^{n-r} \hspace{2mm} &\text{if} \hspace{1mm} k=l,\\
    0 \hspace{2mm} &\text{otherwise.}
    \end{cases}
\end{align*}
Hence
\begin{align*}
    \trace_{L/\mathbb{Q}}(x\overline{x})&=p^{n-r}\sum_{i=0: \gcd(i,s)=\gcd(i,p)=1}^{sp^{r+1}-1}\sum_{k=0}^{p^{n-r}-1}\sigma_i(x_k\overline{x_k})\\&=p^{n-r}\sum_{k=0}^{p^{n-r}-1}\trace_{K/\mathbb{Q}}(x_k\overline{x_k}),
\end{align*}
and so $\lambda_1(\mathfrak{p})=\lambda_1(\mathfrak{c})$, as required. The algorithm below summarises how to find the shortest nonzero vector in a prime ideal lattice $\mathfrak{p}$. The most time-consuming step in the algorithm below is Step 2, and all other steps may be performed in $\text{poly}(\phi(sp^{n+1}),\log_2 \rho)$ time.
\end{proof}
\begin{algorithm}
\SetKwInOut{Input}{input}\SetKwInOut{Output}{output} \Input{A prime ideal $\mathfrak{p}=\langle \rho, f(\zeta_{sp^{n+1}}) \rangle$ in $\mathbb{Z}[\zeta_{sp^{n+1}}]$, where $\gcd(p,\rho)=\gcd(\rho,s)=1$ and $\rho^{\phi(s)} \equiv \pm 1 \mod p$.} \Output{A shortest vector in the corresponding prime ideal lattice.} \BlankLine 
\nl Compute the ideal $\mathfrak{c}$ generated by $\rho$ and $f(\zeta_{sp^{n+1}})$ in $\mathcal{O}_K$ where $K=\mathbb{Q}(\zeta_{sp^{n+1}}^{p^{n-r}})$.
\\
\nl Find a shortest vector $v$ in the $\phi(sp^{r+1})$-dimensional lattice $\mathfrak{c}$.
\\
\nl Output v.
\caption{SVP algorithm for prime ideal lattices of $\mathbb{Z}[\zeta_{sp^{n+1}}]$}
\end{algorithm}
\subsection{Some Special Prime Ideals of Cyclotomic Rings}
\begin{theorem}
Let $L=\mathbb{Q}(\zeta_{s2^{n+1}})$ be a cyclotomic field for some positive odd integer $s \geq 3$ and some integer $n \geq 0$. Let $\mathfrak{p}$ denote a prime ideal lying over a positive rational odd prime $\rho$ such that $\rho^{\phi(s)} \equiv 3 \mod 4$. Then, given an oracle that can solve SVP for $\phi(s)$-dimensional lattices, a shortest nonzero vector in $\mathfrak{p}$ can be found in $\text{poly}(\phi(s2^{n+1}),\log_2\rho)$ time with the canonical embedding.
\end{theorem}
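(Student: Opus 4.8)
The plan is to deduce this as the instance $A=1$ of Theorem~\ref{s2svp}, so that the entire argument reduces to one elementary $2$-adic computation feeding into that theorem. Throughout I take $\rho$ coprime to $s$, so that $\rho$ is unramified in $\mathcal{O}_L=\mathbb{Z}[\zeta_{s2^{n+1}}]$ and the factorisation and decomposition-group machinery of Theorems~\ref{s2factorisation} and~\ref{s2svp} applies (this is the standing assumption of those results). Writing $\rho^{\phi(s)}=m2^{A}+1$ with $m$ odd and $A\ge 1$, as in Theorem~\ref{s2svp}, the hypothesis $\rho^{\phi(s)}\equiv 3\bmod 4$ says precisely that $\rho^{\phi(s)}-1\equiv 2\bmod 4$; since $\rho^{\phi(s)}-1=m2^{A}$ with $m$ odd, its $2$-adic valuation is $1$, and therefore $A=1$.

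With $A=1$ we obtain $r=\min\{A-1,n\}=\min\{0,n\}=0$ for every $n\ge 0$. I then invoke Theorem~\ref{s2svp} directly: it produces a shortest nonzero vector of $\mathfrak{p}$ in $\text{poly}(\phi(s2^{n+1}),\log_2\rho)$ time using an oracle for SVP in $\phi(s2^{r+1})$-dimensional lattices. It remains only to simplify the oracle dimension: because $s$ is odd we have $\gcd(2,s)=1$, so by multiplicativity of $\phi$, $\phi(s2^{r+1})=\phi(2s)=\phi(2)\phi(s)=\phi(s)$, which is exactly the bound in the statement.

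Unpacking Theorem~\ref{s2svp} in this special case for concreteness: the relevant subgroup $H$ of the decomposition group of $\mathfrak{p}$ has fixed field $K=\mathbb{Q}(\zeta_{s2^{n+1}}^{2^{n}})=\mathbb{Q}(\zeta_{2s})$, of degree $\phi(2s)=\phi(s)$ over $\mathbb{Q}$; one writes $\mathfrak{p}=\bigoplus_{k=0}^{2^{n}-1}\zeta_{s2^{n+1}}^{k}\,\mathfrak{c}$ with $\mathfrak{c}=\mathfrak{p}\cap\mathcal{O}_{K}$, and the trace-form identity established there gives $\lambda_1(\mathfrak{p})=\lambda_1(\mathfrak{c})$, so that a shortest vector of the $\phi(s)$-dimensional lattice $\mathfrak{c}$, obtained from the oracle, lifts coordinatewise to a shortest vector of $\mathfrak{p}$; computing $\mathfrak{c}$ from a $\mathbb{Z}$-basis of $\mathcal{O}_{K}$ and carrying out the lift are polynomial in $\phi(s2^{n+1})$ and $\log_2\rho$. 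I do not expect any genuine obstacle: all of the substantive content lies in the Wang--Wang factorisation result (Theorem~\ref{s2factorisation}) and in Theorem~\ref{s2svp}, and the only new point is the observation that the arithmetic condition on $\rho$ forces $A=1$, hence $r=0$.
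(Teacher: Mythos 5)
Your proposal is correct and follows essentially the same route as the paper: the paper's own proof likewise observes that $\rho^{\phi(s)}\equiv 3\bmod 4$ forces $\rho^{\phi(s)}-1$ to be twice an odd number, i.e.\ $A=1$ and hence $r=0$, and then invokes Theorem~\ref{s2svp} to get oracle dimension $\phi(2s)=\phi(s)$. Your write-up is if anything slightly more careful, since you make explicit the standing assumption $\gcd(\rho,s)=1$ needed for Theorem~\ref{s2svp} to apply.
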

\begin{proof}
For some integer $N>1$, we must have $\rho^{\phi(s)} \equiv 2l+1 \mod 2^N$, and so for some integer $k$, we have $\rho^{\phi(s)} = 1+2l+2^Nk=1+2(2^{N-1}k+l)$. Since $2^{N-1}k+l$ is an odd integer and $N$ is taken totally arbitrarily, the claim holds by Theorem \ref{s2svp}.
\end{proof}
\begin{theorem}
Let $L=\mathbb{Q}(\zeta_{sp^{n+1}})$ be a cyclotomic field for some positive integer $s$ such that $\gcd(s,p)=1$, an odd prime $p$ and some integer $n \geq 0$. Let $\mathfrak{p}$ denote a prime ideal lying over a positive rational odd prime $\rho$ such that $\rho^{\phi(s)} =lp +a $ for some integers $l,a$, $\gcd(l,p)=\gcd(a,p)=1$. Then, given an oracle that can solve SVP for $(p-1)\phi(s)$-dimensional lattices, a shortest nonzero vector in $\mathfrak{p}$ can be found in $\text{poly}(\phi(sp^{n+1}),\log_2 \rho)$ time with the canonical embedding.
\end{theorem}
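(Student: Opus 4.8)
The plan is to obtain this theorem as an immediate specialisation of Theorem~\ref{spsvp}, in exactly the same way that the preceding special-prime theorem for conductor $s2^{n+1}$ was deduced from Theorem~\ref{s2svp}. All of the real work (the decomposition-group construction, the claim $\mathfrak{p}=\bigoplus_{k}\zeta_{sp^{n+1}}^k\mathfrak{c}$, and the computation $\trace_{L/\mathbb{Q}}(x\overline{x})=p^{n-r}\sum_k\trace_{K/\mathbb{Q}}(x_k\overline{x_k})$ giving $\lambda_1(\mathfrak{p})=\lambda_1(\mathfrak{c})$) is already done inside Theorem~\ref{spsvp}; what remains is purely a bookkeeping check on the parameter $A$.

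Concretely, I would first recall the shape of the hypothesis in Theorem~\ref{spsvp}: one writes $\rho^{\phi(s)}=mp^A+a$ with $\gcd(m,p)=\gcd(a,p)=1$ and $A\geq 1$, sets $r=\min\{A-1,n\}$, and then an SVP oracle in dimension $\phi(sp^{r+1})$ suffices. The key observation is that the hypothesis of the present theorem, $\rho^{\phi(s)}=lp+a$ with $\gcd(l,p)=\gcd(a,p)=1$, is literally an instance of that representation with $m=l$ and $A=1$; moreover $\gcd(l,p)=1$ forces $p\nmid(\rho^{\phi(s)}-a)/p$, so $A=1$ is genuinely the correct $p$-adic valuation and there is no ambiguity. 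Hence $r=\min\{A-1,n\}=\min\{0,n\}=0$.

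Substituting $r=0$ into Theorem~\ref{spsvp}, the oracle dimension becomes $\phi(sp^{0+1})=\phi(sp)$, and since $\gcd(s,p)=1$ this equals $\phi(s)\phi(p)=(p-1)\phi(s)$, the dimension claimed; equivalently, the decomposition field is $K=\mathbb{Q}(\zeta_{sp^{n+1}}^{p^{n}})=\mathbb{Q}(\zeta_{sp})$ of degree $(p-1)\phi(s)$ over $\mathbb{Q}$. The running time $\text{poly}(\phi(sp^{n+1}),\log_2\rho)$ is inherited verbatim from Theorem~\ref{spsvp} (when $n=0$ the statement is vacuous in the same sense noted there, since $\mathbb{Q}(\zeta_{sp})$ already has degree $(p-1)\phi(s)$ and the oracle applies directly to $\mathfrak{p}$).

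I do not expect any genuine obstacle: the only point requiring care is ensuring the hypothesis really pins $A$ down to $1$ rather than permitting a higher power of $p$ to be absorbed into $a$, and this is precisely what $\gcd(l,p)=1$ secures — the exact analogue of the ``$2^{N-1}k+l$ is odd'' step used for the conductor-$s2^{n+1}$ special-prime theorem.
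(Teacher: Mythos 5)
Your proposal is correct and follows essentially the same route as the paper: the paper likewise deduces this result by observing that $\rho^{\phi(s)}=lp+a$ with $\gcd(l,p)=\gcd(a,p)=1$ forces the representation $\rho^{\phi(s)}=mp^{A}+a$ of Theorem~\ref{spsvp} to have $A=1$, hence $r=0$ and oracle dimension $\phi(sp)=(p-1)\phi(s)$. Your version is if anything slightly cleaner in pinning down $A=1$ directly via the $p$-adic valuation rather than the paper's detour through an arbitrary modulus $p^{N}$.
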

\begin{proof}
For some integer $N>1$, we must have $\rho^{\phi(s)} \equiv lp +a \mod p^N$, and so for some integer $k$, we have $\rho^{\phi(s)}=m a+pl+p^{N}k= a+p(p^{N-1}k+l)$. Since $\gcd(p^{N-1}k+l,p)=1$ and $N$ is taken totally arbitrarily, the claim holds by Theorem \ref{spsvp}.
\end{proof}
\section{General Ideals of Cyclotomic Rings}
\subsection{The Cyclotomic Field $L=\mathbb{Q}(\zeta_{s2^{n+1}})$}
As before, we set $s$ to be some odd integer greater than or equal to $3$.
\begin{theorem}
Let $\mathcal{I}$ be a nonzero ideal of $\mathbb{Z}[\zeta_{s2^{n+1}}]$ with prime factorisation
\begin{align*}
    \mathcal{I}=\mathfrak{p}_1\mathfrak{p}_2\dots \mathfrak{p}_t,
\end{align*}
where $\mathfrak{p}_i=(f_i(\zeta_{s2^{n+1}}),\rho_i)$ for rational primes $\rho_i$ are (not necessarily distinct) prime ideals. If $\rho_i$ is odd, write $\rho_i^{\phi(s)}=m_i2^{A_i}+1$, for some integer $m_i$ such that $\gcd(m_i,2)=1$ and let $r=\max\{r_i\}$, where
\begin{align*}
    r_i=
    \begin{cases}
    \min\{A_i-1,n\}, \hspace{0.5mm} &\text{if} \hspace{1mm} \rho_i \equiv 1 \mod 2,\\
    n \hspace{0.5mm} &\text{if} \rho_i=2.
    \end{cases}
\end{align*}
Then SVP in the lattice generated by $\mathcal{I}$ can be solved via solving SVP in a $\phi(s2^{r+1})$-dimensional lattice.
\end{theorem}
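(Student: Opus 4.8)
The plan is to reduce the problem to the single–prime case of Theorem~\ref{s2svp}, applied simultaneously to all the factors $\mathfrak{p}_i$ after first pushing each of them down to the common level $r$. First note that if some $\rho_i=2$ then $r_i=n$, forcing $r=n$ and $\phi(s2^{r+1})=\phi(s2^{n+1})$, so the statement is vacuous; the same holds whenever $r=n$ for any other reason. Hence assume $r<n$. Then every relevant $\rho_i$ is odd (and coprime to $s$, as in Theorems~\ref{s2factorisation}--\ref{s2svp}) and, since $r_i\le r<n$, we must have $r_i=\min\{A_i-1,n\}=A_i-1$. Applying Theorem~\ref{s2factorisation} with $q=\rho_i$, $A=A_i$, once at level $n$ and once at level $r$ (both admissible since $r\ge A_i-1$), each irreducible factor $f_i(x)$ of $\Phi_{s2^{n+1}}(x)$ over $\mathbb{F}_{\rho_i}$ can be written as $f_i(x)=g_i(x^{2^{n-r_i}})$ with $g_i$ an irreducible factor of $\Phi_{s2^{r_i+1}}(x)$, and $h_i(x):=g_i(x^{2^{r-r_i}})$ is an irreducible factor of $\Phi_{s2^{r+1}}(x)$ over $\mathbb{F}_{\rho_i}$ satisfying $f_i(x)=h_i(x^{2^{n-r}})$. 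Choosing integer lifts compatibly gives $\mathfrak{p}_i=\langle\rho_i,f_i(\zeta_{s2^{n+1}})\rangle=\langle\rho_i,h_i(\zeta_{s2^{n+1}}^{2^{n-r}})\rangle=\langle\rho_i,h_i(\zeta_{s2^{r+1}})\rangle$, a prime ideal of $\mathcal{O}_L$ generated by elements of $\mathcal{O}_K$, where $K=\mathbb{Q}(\zeta_{s2^{r+1}})=\mathbb{Q}(\zeta_{s2^{n+1}}^{2^{n-r}})$.

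Next, set $\mathfrak{c}_i=\langle\rho_i,h_i(\zeta_{s2^{r+1}})\rangle$ as an ideal of $\mathcal{O}_K$; since $h_i$ is irreducible over $\mathbb{F}_{\rho_i}$, divides $\Phi_{s2^{r+1}}$, and $\rho_i$ is coprime to $s2^{r+1}$, this is a prime ideal of $\mathcal{O}_K$, and $\mathfrak{c}_i\mathcal{O}_L=\mathfrak{p}_i$. Put $\mathfrak{c}=\mathfrak{c}_1\cdots\mathfrak{c}_t$, so that $\mathcal{I}=\mathfrak{c}\,\mathcal{O}_L$ and hence $\mathfrak{c}=\mathcal{I}\cap\mathcal{O}_K$ by faithful flatness. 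Since $\mathcal{O}_L=\mathbb{Z}[\zeta_{s2^{n+1}}]$ is free over $\mathcal{O}_K=\mathbb{Z}[\zeta_{s2^{n+1}}^{2^{n-r}}]$ with basis $\{1,\zeta_{s2^{n+1}},\dots,\zeta_{s2^{n+1}}^{2^{n-r}-1}\}$ (the minimal polynomial of $\zeta_{s2^{n+1}}$ over $K$ being $X^{2^{n-r}}-\zeta_{s2^{r+1}}$, and the monomials $\zeta_{s2^{n+1}}^{k+j2^{n-r}}$ running exactly over the power basis of $\mathcal{O}_L$), this yields
\begin{align*}
\mathcal{I}=\mathfrak{c}\,\mathcal{O}_L=\bigoplus_{k=0}^{2^{n-r}-1}\zeta_{s2^{n+1}}^k\,\mathfrak{c};
\end{align*}
equivalently, one obtains this by the same regrouping of the power-basis expansion of an arbitrary element that is carried out in the proof of Theorem~\ref{s2svp}, now with $\mathfrak{c}$ a general ideal rather than a prime.

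It remains to compare shortest vectors. For $x=\sum_{k=0}^{2^{n-r}-1}x_k\zeta_{s2^{n+1}}^k$ with $x_k\in\mathfrak{c}$, the evaluation of $\trace_{L/\mathbb{Q}}(x\overline{x})$ proceeds verbatim as in the proof of Theorem~\ref{s2svp}: it uses only that the $x_k$ lie in $K$, and the cross terms drop because $\sum_{j=0}^{2^{n-r}-1}\zeta_{2^{n-r}}^{j(k-l)}=0$ for $k\neq l$, leaving
\begin{align*}
\trace_{L/\mathbb{Q}}(x\overline{x})=2^{n-r}\sum_{k=0}^{2^{n-r}-1}\trace_{K/\mathbb{Q}}(x_k\overline{x_k}).
\end{align*}
Thus, under the canonical embeddings of $L$ and $K$, $\|x\|^2=2^{n-r}\sum_k\|x_k\|^2$, so every nonzero $x\in\mathcal{I}$ satisfies $\|x\|\ge\sqrt{2^{n-r}}\,\lambda_1(\mathfrak{c})$, with equality attained by $\zeta_{s2^{n+1}}^k v$ for any shortest nonzero $v\in\mathfrak{c}$. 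Hence a shortest nonzero vector of $\mathcal{I}$ is produced by solving SVP in the lattice $\mathfrak{c}$, of dimension $[K:\mathbb{Q}]=\phi(s2^{r+1})$ under the canonical embedding of $K$, followed by the coordinate-repeating lift $\beta'$ of Theorem~\ref{s2svp}, which is efficient. The step requiring the most care is the first one: reducing the individual prime factors to the common level $r$ when the $A_i$ differ, so that each $h_i=g_i(x^{2^{r-r_i}})$ is genuinely an irreducible factor of $\Phi_{s2^{r+1}}$ over $\mathbb{F}_{\rho_i}$ and each $\mathfrak{c}_i$ is prime in $\mathcal{O}_K$; the module decomposition and the trace identity then reduce essentially verbatim to the prime case.
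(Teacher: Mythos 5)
Your proposal is correct, and it reaches the paper's key structural claim $\mathcal{I}=\bigoplus_{k=0}^{2^{n-r}-1}\zeta_{s2^{n+1}}^k\mathfrak{c}$ by a genuinely different route. The paper proves this decomposition by induction on the number $t$ of prime factors: it writes an arbitrary element of $\mathfrak{p}_1\cdots\mathfrak{p}_t$ as a sum of products $xy$ with $x\in\mathfrak{p}_1\cdots\mathfrak{p}_{t-1}$ and $y\in\mathfrak{p}_t$, expands each factor in the basis $\{\zeta_{s2^{n+1}}^k\}_{0\le k<2^{n-r}}$, and regroups exponents modulo $2^{n-r}$ by hand. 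You instead push each $\mathfrak{p}_i$ down to a prime $\mathfrak{c}_i=\langle\rho_i,h_i(\zeta_{s2^{r+1}})\rangle$ of $\mathcal{O}_K$ at the \emph{common} level $r$ (using Theorem~\ref{s2factorisation} twice, at levels $r_i$ and $r$ — a point the paper glosses over when it writes $\sigma_{ks2^{r+1}+1}(f_i(\zeta_{s2^{n+1}}))=g_i(\zeta_{s2^{r+1}})$ with $g_i$ a factor of $\Phi_{s2^{A_i}}$), observe $\mathcal{I}=\mathfrak{c}\mathcal{O}_L$ with $\mathfrak{c}=\prod_i\mathfrak{c}_i$, and read off the direct-sum decomposition from the freeness of $\mathcal{O}_L$ over $\mathcal{O}_K$ on the basis $\{1,\zeta_{s2^{n+1}},\dots,\zeta_{s2^{n+1}}^{2^{n-r}-1}\}$; the identification $\mathfrak{c}=\mathcal{I}\cap\mathcal{O}_K$ then comes for free from faithful flatness (or directly from the $k=0$ component of the decomposition). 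Your route buys a cleaner argument that sidesteps the element-wise induction (and its implicit passage from products $xy$ to sums of such products), at the cost of invoking the multiplicativity of ideal extension and the Kummer--Dedekind description of the $\mathfrak{c}_i$; the paper's induction works purely with generators and never needs to know that $\mathcal{I}\cap\mathcal{O}_K$ factors in $\mathcal{O}_K$. The final trace computation and the conclusion that a shortest vector of $\mathfrak{c}$ lifts to one of $\mathcal{I}$ are the same in both (your explicit $\sqrt{2^{n-r}}$ scaling is the more careful statement of the paper's $\lambda_1(\mathcal{I})=\lambda_1(\mathfrak{c})$). Both arguments share the same implicit hypothesis that each $\rho_i$ is coprime to $s$, which you flag and the paper does not.
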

\begin{proof}
If $r=n$ the theorem vacuously holds, so we assume otherwise. We may assume WLOG that $r=r_1$. Following the notation of Theorem \ref{s2svp}, we denote by
\begin{align*}
    G=\{\sigma_i: 1 \leq i \leq s2^{n+1}-1, \gcd(i,2)=\gcd(i,s)=1\}
\end{align*}
the Galois group of $L$, and consider the subgroup $H=H_1 \times H_2 \times \dots \times H_{2^{n-r-1}}$, where $H_k$ is the cyclic group generated by $\langle \sigma_{ks2^{r+1}+1}\rangle$, which is a subgroup of the decomposition group of every $\mathfrak{p}_i$, since $\sigma_{ks2^{r+1}+1}(\rho_i)=\rho_i$, $\sigma_{ks2^{r+1}+1}(f_i(\zeta_{s2^{n+1}}))=\sigma_{ks2^{r+1}+1}(g_i(\zeta_{s2^{r+1}}))=g_i(\zeta_{s2^{r+1}})=f_i(\zeta_{s2^{n+1}})$, where $g_i(x)$ is an irreducible factor of $\Phi_{s2^{A_i}}(x)$. As shown in Theorem \ref{s2svp}, the fixed field of $H$ is $K=\mathbb{Q}(\zeta_{s2^{n+1}}^{2^{n-r}})$, which has the ring of integers $\mathcal{O}_K=\mathbb{Z}[\zeta_{s2^{n+1}}^{2^{n-r}}]$. Let $\mathfrak{c}= \mathcal{I} \cap \mathcal{O}_K$. We claim that for any $a \in \mathcal{I}$, there exist $a^{(k)} \in \mathfrak{c}$ for $0 \leq k \leq 2^{n-r}-1$ such that
\begin{align*}
    a=\sum_{k=0}^{2^{n-r}-1}\zeta_{s2^{n+1}}^ka^{(k)}.
\end{align*}
We prove the claim via induction. When $t=1$, the claim holds by Theorem \ref{s2svp}, so we assume the claim holds for $t-1$. Letting $\overline{\mathcal{I}}=\mathfrak{p}_1\mathfrak{p}_2\dots \mathfrak{p}_{t-1}$, we have $\mathcal{I}=\mathfrak{p}_t\mathcal{I}$. It suffices to show that for any $xy$, $x \in \overline{\mathcal{I}}, y \in \mathfrak{p}_t$, there exist $b^{(k)} \in \mathcal{I} \cap \mathcal{O}_K$ for $0 \leq k \leq 2^{n-r}-1$ such that $xy=\sum_{k=0}^{2^{n-r}-1}\zeta_{s2^{n+1}}^kb^{(k)}$. By the induction assumption, there exist $x^{(i)} \in \overline{\mathcal{I}} \cap \mathcal{O}_K$, $y^{(j)} \in \mathfrak{p}_t \cap \mathcal{O}_K$, $0 \leq i,j \leq 2^{n-r}-1$ such that $x=\sum_{i=0}^{2^{n-r}-1}\zeta_{s2^{n+1}}^ix^{(i)}$ and $y=\sum_{j=0}^{2^{n-r}-1}\zeta_{s2^{n+1}}^jy^{(j)}$. Hence, we have
\begin{align*}
    xy&=\sum_{i,j=0}^{2^{n-r}-1}\zeta_{s2^{n+1}}^{i+j}x^{(i)}y^{(i)}
    \\&=\sum_{k=0}^{2^{n-r}-1}\zeta_{s2^{n+1}}^k\sum_{i+j=k}x^{(i)}y^{(j)}+\sum_{k=2^{n-r}}^{2\cdot 2^{n-r}-2}\zeta_{s2^{n+1}}^k
    \\&=\sum_{k=0}^{2^{n-r}-1}\zeta_{s2^{n+1}}^k\sum_{i+j=k}x^{(i)}y^{(j)}+\sum_{k=0}^{2^{n-r}-2}\zeta_{s2^{n+1}}^k\sum_{i+j=k+2^{n-r}}\zeta_{s2^{n+1}}^{2^{n-r}}x^{(i)}y^{(j)}
    \\&=\sum_{k=0}^{2^{n-r}-2}\zeta_{s2^{n+1}}^k\left(\sum_{i+j=k}x^{(i)}y^{(j)}+\sum_{i+j=k+2^{n-r}}\zeta_{s2^{n+1}}^{2^{n-r}}x^{(i)}y^{(j)}\right)\\&+\zeta_{s2^{n+1}}^{2^{n-r}-1}\sum_{i+j=2^{n-r}-1}x^{(i)}y^{(j)}.
\end{align*}
By letting \begin{align*}
b^{(k)}=\sum_{i+j=k}x^{(i)}y^{(j)}+\sum_{i+j=k+2^{n-r}}\zeta_{s2^{n+1}}^{2^{n-r}}x^{(i)}y^{(j)}
\end{align*} 
for $0 \leq k \leq 2^{n-r}-2$ and 
\begin{align*}
b^{(2^{n-r}-1)}=\sum_{i+j=2^{n-r}-1}x^{(i)}y^{(j)},
\end{align*} 
we have proven our claim. As in Theorem \ref{s2svp}, we have $\lambda_1(\mathcal{I})=\lambda_1(\mathfrak{c})$, as required.
\end{proof}
The following algorithm may be used to compute the shortest vector in $\mathcal{I}$.
\begin{algorithm}
\SetKwInOut{Input}{input}\SetKwInOut{Output}{output} \Input{An ideal $\mathcal{I}$.} \Output{A shortest vector in the corresponding ideal lattice.} \BlankLine 
\nl \For{$\overline{r}=1$ \KwTo $n$}{
\nl Compute a basis $(b^{(i)})_{0 \leq i < \phi(s2^{\overline{r}+1})}$ of the ideal lattice $\mathfrak{c}=\mathcal{I} \cap \mathcal{O}_K$ where $K=\mathbb{Q}(\zeta_{s2^{n+1}}^{2^{n-\overline{r}}})$. \\
\nl \If{$(\zeta_{s2^{n+1}}^jb^{(i)})_{0 \leq i < \phi(s2^{\overline{r}+1}), 0 \leq j \leq 2^{n-\overline{r}}}$ is exactly a basis of the ideal lattice $\mathcal{I}$}{
\nl Find a shortest vector $v$ in the $\phi(s2^{\overline{r}+1})$-dimensional lattice $\mathfrak{c}$.\\
\nl Output $v$.}
}
\caption{SVP algorithm for general ideal lattices of $\mathbb{Z}[\zeta_{s2^{n+1}}]$}
\end{algorithm}
\\
\subsection{The Cyclotomic Field $L=\mathbb{Q}(\zeta_{sp^{n+1}})$}
As before, $p$ is a positive, odd prime and $s$ is a positive integer such that $\gcd(s,p)=1$.
\begin{theorem}
Let $\mathcal{I}$ be a nonzero ideal of $\mathbb{Z}[\zeta_{sp^{n+1}}]$ with prime factorisation
\begin{align*}
    \mathcal{I}=\mathfrak{p}_1\mathfrak{p}_2\dots \mathfrak{p}_t,
\end{align*}
where $\mathfrak{p}_i=(f_i(\zeta_{sp^{n+1}}),\rho_i)$ for rational primes $\rho_i$ are (not necessarily distinct) prime ideals. If $\rho_i^{\phi(s)} \equiv a \mod p$ for some $\gcd(p,a)=1$, write $\rho_i^{\phi(s)}=m_ip^{A_i}+1$ and let $r=\max\{r_i\}$, where
\begin{align*}
    r_i=
    \begin{cases}
    \min\{A_i-1,n\}, \hspace{0.5mm} &\text{if} \hspace{1mm} \rho_i^{\phi(s)} \equiv a \mod p,\\
    n \hspace{0.5mm} &\text{if $\rho_i=p$}.
    \end{cases}
\end{align*}
Then SVP in the lattice generated by $\mathcal{I}$ can be solved via solving SVP in a $\phi(sp^{r+1})$-dimensional lattice.
\end{theorem}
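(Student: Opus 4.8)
The plan is to transcribe, essentially verbatim, the argument used for conductor $s2^{n+1}$ in the preceding theorem, with the prime $2$ replaced by the odd prime $p$ and base‑$2$ digit expansions replaced by base‑$p$ ones. As there, if $r=n$ the statement is vacuous, so I would assume $r<n$ and, after reindexing, $r=r_1$; in particular every $\rho_i$ is coprime to $p$ (if some $\rho_i=p$ then $r_i=n$, forcing $r=n$), and I will also take $\gcd(\rho_i,s)=1$, so that Theorem~\ref{spfactorisation} is applicable to each $f_i$. Next I would single out the subgroup $H=H_1\times\cdots\times H_{p^{n-r}-1}$ of the Galois group $G=\{\sigma_i:1\le i\le sp^{n+1}-1,\ \gcd(i,p)=\gcd(i,s)=1\}$, where $H_k=\langle\sigma_{ksp^{r+1}+1}\rangle$, and show $H$ lies in the decomposition group of every $\mathfrak{p}_i$. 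For this I need $f_i(x)=g_i(x^{p^{n-r}})$ with $g_i$ an irreducible factor of $\Phi_{sp^{r+1}}(x)$ over $\mathbb{F}_{\rho_i}$: Theorem~\ref{spfactorisation} gives $f_i(x)=\tilde g_i(x^{p^{n-r_i}})$ with $\tilde g_i\mid\Phi_{sp^{r_i+1}}(x)$ irreducible, and since $r_i\le r\le n$ a second application of Theorem~\ref{spfactorisation} (passing from conductor $sp^{r_i+1}$ to $sp^{r+1}$) shows $g_i(y):=\tilde g_i(y^{p^{r-r_i}})$ is still irreducible and divides $\Phi_{sp^{r+1}}(x)$; then $\sigma_{ksp^{r+1}+1}$ fixes $\rho_i$ and fixes $f_i(\zeta_{sp^{n+1}})=g_i(\zeta_{sp^{n+1}}^{p^{n-r}})$ because it fixes $\zeta_{sp^{n+1}}^{p^{n-r}}$. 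Exactly as in Theorem~\ref{spsvp}, the fixed field of $H$ is $K=\mathbb{Q}(\zeta_{sp^{n+1}}^{p^{n-r}})\cong\mathbb{Q}(\zeta_{sp^{r+1}})$, with $\mathcal{O}_K=\mathbb{Z}[\zeta_{sp^{n+1}}^{p^{n-r}}]$ free over $\mathbb{Z}$ on $1,\zeta_{sp^{n+1}}^{p^{n-r}},\dots,\zeta_{sp^{n+1}}^{(\phi(sp^{r+1})-1)p^{n-r}}$, so $\mathcal{O}_L=\bigoplus_{k=0}^{p^{n-r}-1}\zeta_{sp^{n+1}}^k\mathcal{O}_K$.

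The heart of the argument is the claim $\mathcal{I}=\bigoplus_{k=0}^{p^{n-r}-1}\zeta_{sp^{n+1}}^k\mathfrak{c}$ with $\mathfrak{c}=\mathcal{I}\cap\mathcal{O}_K$, i.e. that the $\mathcal{O}_K$-coordinates of any $a\in\mathcal{I}$ in the basis $\{\zeta_{sp^{n+1}}^k\}$ already lie in $\mathcal{I}$. Since this is additive in $a$ and $\mathcal{I}=\mathfrak{p}_1\cdots\mathfrak{p}_t$, it suffices to check it on products $x_1\cdots x_t$ with $x_i\in\mathfrak{p}_i$, which I would do by induction on $t$; the base case $t=1$ is precisely Theorem~\ref{spsvp}. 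For the step, write $\overline{\mathcal{I}}=\mathfrak{p}_1\cdots\mathfrak{p}_{t-1}$ and, for $x\in\overline{\mathcal{I}}$, $y\in\mathfrak{p}_t$, use the inductive hypothesis to write $x=\sum_i\zeta^i x^{(i)}$, $y=\sum_j\zeta^j y^{(j)}$ with $x^{(i)}\in\overline{\mathcal{I}}\cap\mathcal{O}_K$, $y^{(j)}\in\mathfrak{p}_t\cap\mathcal{O}_K$ (here $\zeta=\zeta_{sp^{n+1}}$); multiplying out and reducing the exponents $i+j\in\{0,\dots,2p^{n-r}-2\}$ modulo $p^{n-r}$ via $\zeta^{p^{n-r}}\in\mathcal{O}_K$, the carry contributes at most one factor $\zeta^{p^{n-r}}$, so $b^{(k)}=\sum_{i+j=k}x^{(i)}y^{(j)}+\zeta^{p^{n-r}}\sum_{i+j=k+p^{n-r}}x^{(i)}y^{(j)}$ for $0\le k\le p^{n-r}-2$ and $b^{(p^{n-r}-1)}=\sum_{i+j=p^{n-r}-1}x^{(i)}y^{(j)}$ lie in $\mathcal{O}_K$ and, being sums of products of an element of $\overline{\mathcal{I}}$ with an element of $\mathfrak{p}_t$ (times $\zeta^{p^{n-r}}\in\mathcal{O}_L$), lie in $\overline{\mathcal{I}}\mathfrak{p}_t=\mathcal{I}$, hence in $\mathfrak{c}$.

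Finally, for $x=\sum_k x_k\zeta^k$ with $x_k\in\mathfrak{c}$ I would run the same trace computation as in Theorem~\ref{spsvp}: the geometric sum $\sum_{j=0}^{p^{n-r}-1}\zeta_{p^{n-r}}^{j(k-l)}$ annihilates all cross terms $k\neq l$, leaving $\trace_{L/\mathbb{Q}}(x\overline{x})=p^{n-r}\sum_k\trace_{K/\mathbb{Q}}(x_k\overline{x_k})$; hence a shortest vector of the lattice $\mathcal{I}$ is obtained by placing a shortest vector of $\mathfrak{c}$ in one coordinate block, and SVP for $\mathcal{I}$ reduces to SVP for the $\phi(sp^{r+1})$-dimensional lattice $\mathfrak{c}$. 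The accompanying algorithm — iterate $\overline r$ from $1$ to $n$, build $\mathfrak{c}=\mathcal{I}\cap\mathcal{O}_{\mathbb{Q}(\zeta_{sp^{n+1}}^{p^{n-\overline r}})}$, test whether $\{\zeta^j b^{(i)}\}$ is exactly a basis of $\mathcal{I}$, and on success call the oracle on $\mathfrak{c}$ — runs in $\text{poly}(\phi(sp^{n+1}),\log_2\rho_i)$ time apart from the oracle call.

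I expect the main obstacle, exactly as in the $2$-power case, to be the bookkeeping in the inductive step: checking that multiplying two elements already in ``digit form'' over $\mathcal{O}_K$ and re-reducing modulo $\zeta^{p^{n-r}}$ keeps every digit in $\mathfrak{c}=\mathcal{I}\cap\mathcal{O}_K$, in particular that no digit picks up more than one power of $\zeta^{p^{n-r}}$ (which is where $i,j<p^{n-r}$ is used), and that the claim for a general element follows from the claim for products by additivity of the digit decomposition. A secondary point to handle with care is the two-step application of Theorem~\ref{spfactorisation} used to pass from each $r_i$ to the common $r$ while preserving irreducibility of the factor, together with the convention that any $\rho_i=p$ (or $\rho_i\mid s$) is excluded from the non-vacuous case. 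Everything else is a direct transcription of the odd-prime-power and $s2^{n+1}$ arguments already carried out above.
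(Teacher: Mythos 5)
Your proposal is correct and follows essentially the same route as the paper's proof: the same subgroup $H$ of the Galois group, the same fixed field $K=\mathbb{Q}(\zeta_{sp^{n+1}}^{p^{n-r}})$, the same induction on the number of prime factors with the identical digit-recombination formulas for $b^{(k)}$, and the same trace computation giving $\lambda_1(\mathcal{I})=\lambda_1(\mathfrak{c})$. The extra care you take (aligning each $r_i$ to the common $r$ via a second application of Theorem~\ref{spfactorisation}, and noting that additivity reduces the claim to products $x_1\cdots x_t$) only makes explicit steps the paper leaves implicit.
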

\begin{proof}
If $r=n$ the theorem vacuously holds, so we assume otherwise. We may assume WLOG that $r=r_1$. Following the notation of Theorem \ref{spsvp}, we denote by
\begin{align*}
    G=\{\sigma_i: 1 \leq i \leq sp^{n+1}-1, \gcd(i,p)=\gcd(i,s)=1\}
\end{align*}
the Galois group of $L$, and consider the subgroup $H=H_1 \times H_2 \times \dots \times H_{p^{n-r-1}}$, where $H_k$ is the cyclic group generated by $\langle \sigma_{ksp^{r+1}+1}\rangle$, which is a subgroup of the decomposition group of every $\mathfrak{p}_i$, since $\sigma_{ksp^{r+1}+1}(\rho_i)=\rho_i$, $\sigma_{ksp^{r+1}+1}(f_i(\zeta_{sp^{n+1}}))=\sigma_{ksp^{r+1}+1}(g_i(\zeta_{sp^{r+1}}))=g_i(\zeta_{sp^{r+1}})=f_i(\zeta_{sp^{n+1}})$, where $g_i(x)$ is an irreducible factor of $\Phi_{sp^{A_i}}(x)$. As shown in Theorem \ref{spsvp}, the fixed field of $H$ is $K=\mathbb{Q}(\zeta_{sp^{n+1}}^{p^{n-r}})$, which has ring of integers $\mathcal{O}_K=\mathbb{Z}[\zeta_{sp^{n+1}}^{p^{n-r}}]$. Let $\mathfrak{c}= \mathcal{I} \cap \mathcal{O}_K$. We claim that for any $a \in \mathcal{I}$, there exist $a^{(k)} \in \mathfrak{c}$ for $0 \leq k \leq p^{n-r}-1$ such that
\begin{align*}
    a=\sum_{k=0}^{p^{n-r}-1}\zeta_{sp^{n+1}}^ka^{(k)}.
\end{align*}
We prove the claim via induction. When $t=1$, the claim holds by Theorem \ref{spsvp}, so we assume the claim holds for $t-1$. Letting $\overline{\mathcal{I}}=\mathfrak{p}_1\mathfrak{p}_2\dots \mathfrak{p}_{t-1}$, we have $\mathcal{I}=\mathfrak{p}_t\mathcal{I}$. It suffices to show that for any $xy$, $x \in \overline{\mathcal{I}}, y \in \mathfrak{p}_t$, there exist $b^{(k)} \in \mathcal{I} \cap \mathcal{O}_K$ for $0 \leq k \leq p^{n-r}-1$ such that $xy=\sum_{k=0}^{p^{n-r}-1}\zeta_{sp^{n+1}}^kb^{(k)}$. By the induction assumption, there exist $x^{(i)} \in \overline{\mathcal{I}} \cap \mathcal{O}_K$, $y^{(j)} \in \mathfrak{p}_t \cap \mathcal{O}_K$, $0 \leq i,j \leq p^{n-r}-1$ such that $x=\sum_{i=0}^{p^{n-r}-1}\zeta_{sp^{n+1}}^ix^{(i)}$ and $y=\sum_{j=0}^{p^{n-r}-1}\zeta_{sp^{n+1}}^jy^{(j)}$. Hence, we have
\begin{align*}
    xy&=\sum_{i,j=0}^{p^{n-r}-1}\zeta_{sp^{n+1}}^{i+j}x^{(i)}y^{(i)}
    \\&=\sum_{k=0}^{p^{n-r}-1}\zeta_{sp^{n+1}}^k\sum_{i+j=k}x^{(i)}y^{(j)}+\sum_{k=p^{n-r}}^{2 p^{n-r}-2}\zeta_{sp^{n+1}}^k
    \\&=\sum_{k=0}^{p^{n-r}-1}\zeta_{sp^{n+1}}^k\sum_{i+j=k}x^{(i)}y^{(j)}+\sum_{k=0}^{p^{n-r}-2}\zeta_{sp^{n+1}}^k\sum_{i+j=k+p^{n-r}}\zeta_{sp^{n+1}}^{p^{n-r}}x^{(i)}y^{(j)}
    \\&=\sum_{k=0}^{p^{n-r}-2}\zeta_{sp^{n+1}}^k\left(\sum_{i+j=k}x^{(i)}y^{(j)}+\sum_{i+j=k+p^{n-r}}\zeta_{sp^{n+1}}^{p^{n-r}}x^{(i)}y^{(j)}\right)\\&+\zeta_{sp^{n+1}}^{p^{n-r}-1}\sum_{i+j=p^{n-r}-1}x^{(i)}y^{(j)}.
\end{align*}
By letting \begin{align*}
b^{(k)}=\sum_{i+j=k}x^{(i)}y^{(j)}+\sum_{i+j=k+p^{n-r}}\zeta_{sp^{n+1}}^{p^{n-r}}x^{(i)}y^{(j)}
\end{align*} 
for $0 \leq k \leq p^{n-r}-2$ and 
\begin{align*}
b^{(p^{n-r}-1)}=\sum_{i+j=p^{n-r}-1}x^{(i)}y^{(j)},
\end{align*} 
we have proven our claim. As in Theorem \ref{spsvp}, we have $\lambda_1(\mathcal{I})=\lambda_1(\mathfrak{c})$, as required.
\end{proof}
Algorithm $4$ may be used to compute the shortest vector in $\mathcal{I}$.
\begin{algorithm}
\SetKwInOut{Input}{input}\SetKwInOut{Output}{output} \Input{An ideal $\mathcal{I}$.} \Output{A shortest vector in the corresponding ideal lattice.} \BlankLine 
\nl \For{$\overline{r}=1$ \KwTo $n$}{
\nl Compute a basis $(b^{(i)})_{0 \leq i < \phi(sp^{\overline{r}+1})}$ of the ideal lattice $\mathfrak{c}=\mathcal{I} \cap \mathcal{O}_K$ where $K=\mathbb{Q}(\zeta_{sp^{n+1}}^{2^{n-\overline{r}}})$. \\
\nl \If{$(\zeta_{sp^{n+1}}^jb^{(i)})_{0 \leq i < \phi(sp^{\overline{r}+1}), 0 \leq j \leq p^{n-\overline{r}}}$ is exactly a basis of the ideal lattice $\mathcal{I}$}{
\nl Find a shortest vector $v$ in the $\phi(sp^{\overline{r}+1})$-dimensional lattice $\mathfrak{c}$.\\
\nl Output $v$.}
}
\caption{SVP algorithm for general ideal lattices of $\mathbb{Z}[\zeta_{sp^{n+1}}]$}
\end{algorithm}
\section{Modules over Cyclotomic Rings}
Throughout this section, we use the same notation as in section 3.1. We let $L=\mathbb{Q}(\zeta_N)$ be the cyclotomic field of conductor $N$, where $N$ is of the form $sq^{n+1}$ for some positive integer $s$ and some $q$ where $q$ is $2$ or an odd prime, $\gcd(s,q)=1$. We take a module $M$ with pseudo-basis $\langle \mathcal{I}_k,\mathbf{b}_k \rangle_{k=1}^d$. As in section 3.1, we associate to each pseudo-ideal $\mathfrak{I}_k\mathbf{b}_k$ the decomposition group $\Delta_{\mathcal{I}_k}$ and the decomposition field $K_{\mathcal{I}_k}$. As we showed in Lemma \ref{separate}, we may alternatively represent the module $M$ using the pseudo-basis $\langle \alpha_k \mathcal{I}_k,\mathbf{b}_k^\prime \rangle_{k=1}^d$ where $\mathbf{b}_k^\prime \in K_{\mathcal{I}_k}^D$ and $\alpha_k\mathcal{I}_k$ are fractional ideals. Let $\mathcal{J}_k=Q_k\alpha_k\mathcal{I}_k$ for some rational integer $Q_k$ such that $\mathcal{J}_k$ is an ideal of $\mathcal{O}_L$. Denote by $\Delta_{\mathcal{J}_k}$ the decomposition group of $\mathcal{J}_k$ and by $r_k$ the maximum integer such that all the embeddings that fix $K_{k}=\mathbb{Q}(\zeta_{sq^{n+1}}^{q^{n-r_k}})$ also fix $K_{\mathcal{I}_j}$ for $1 \leq j \leq d$ (that is, they must form a subgroup of $\Delta_{\mathcal{I}_j}$), and such that (as in section 5) we may express $\mathcal{J}_k$ as
\begin{align*}
    \mathcal{J}_k=\bigoplus_{i=0}^{q^{n-r_k}-1}\mathfrak{c}_k\zeta_{sq^{n+1}}^i,
\end{align*}
where $\mathfrak{c}_k=\mathcal{J}_k \cap \mathcal{O}_{K_k}$. Then we may express $M$ by
\begin{align*}
    M=\bigoplus_{k=1}^d\mathcal{I}_k\mathbf{b}_k=\bigoplus_{k=1}^d\frac{1}{Q_k}\mathbf{b}_k^\prime\bigoplus_{i=0}^{q^{n-r_k}-1}\mathfrak{c}_k\zeta_{sq^{n+1}}^i.
\end{align*}
Since each $r_k$ are chosen so that the embeddings that fix the field $K_k$ also fix $K_{\mathcal{I}_j}$ for $1 \leq j \leq d$ and under the canonical embedding of ideal lattices the coefficients next to the roots of unity in the expression above may be treated as orthogonal components (as we have already shown), under the canonical embedding of $M$, the terms multiplied by the roots of unity in the expression above may be treated as orthogonal components, and so the shortest vector of the submodule
\begin{align*}
    \mathcal{M}=\bigoplus_{k=1}^d\frac{1}{Q_k}\mathfrak{c}_k\mathbf{b}_k^\prime
\end{align*}
is also the shortest vector in the module $M$. This simplifies SVP in module lattices, as the original module generated a lattice of dimension $d\phi(sq^{n+1})$ under the canonical embedding, whilst the submodule above generates a lattice of dimension at most $d\phi(sq^{r+1})$ where $r=\max_k\{r_k\} \leq n$. \\ The following algorithm explains how the SVP is implemented in module lattices over cyclotomic rings, and uses notation and conventions that we have discussed in this section.
\begin{algorithm}
\SetKwInOut{Input}{input}\SetKwInOut{Output}{output} \Input{A module $M$ with pseudo-basis $\langle \mathcal{I}_k,\mathbf{b}_k \rangle_{k=1}^d$.} \Output{A shortest vector in the corresponding module lattice.} \BlankLine 
\nl \For{$1 \leq i \leq d$}{
\nl Compute the decomposition group $\Delta_{\mathcal{I}_i}$ of the pseudo-ideal $\mathcal{I}_i\mathbf{b}_i$ and decomposition field $K_{\mathcal{I}_i}$. \\
\nl Set $x_i=\norm_{L/K_{\mathcal{I}_i}}(\mathcal{I}_i)$.\\
\nl \For{$\sigma \in \Delta_{\mathcal{I}_i}$}{
\nl Find $x_{\sigma,i} \in \mathcal{I}_i$ that satisfies $x_{\sigma,i}\mathbf{b}_i=\sigma(x_i\mathbf{b}_i)$}
\nl Set $\alpha_i=\left(\sum_{\sigma \in \Delta_{\mathcal{I}_i}} x_{i,\sigma}\right)^{-1}$, $\mathbf{b}_i^\prime=\alpha_i^{-1}\mathbf{b}_i$\\
\nl Set $Q_i$ to be smallest integer such that $Q_i\alpha_i\mathcal{I}_i \subseteq \mathcal{O}_L$, set $\mathcal{J}_i=Q_i\alpha_i\mathcal{I}_i$
}
\nl Compute the field $K=\mathbb{Q}(\zeta_{sq^{n+1}}^{q^{n-t}})$ such that $K$ is a subfield of the compositum of all $K_{\mathcal{I}_i}$, $1 \leq i \leq d$.\\
\nl \For{$1 \leq i \leq d$}{
\For{$1 \leq \overline{r}_i \leq n$}{
\nl Compute a basis $(b_i^{(j)})_{0 \leq j \leq \phi(sq^{\overline{r}_i+1})}$ of the ideal $\mathfrak{c}_i=\mathcal{J}_i \cap \mathcal{O}_{K_i}$ where $K_i=\mathbb{Q}(\zeta_{sq^{n+1}}^{q^{n-\overline{r}_i}})$ \\
\If{$(\zeta_{sq^{n+1}}^kb_i^{(j)})_{0 \leq j \leq \phi(sq^{\overline{r}_i+1}),0 \leq k \leq q^{n-\overline{r}_i}}$ is exactly a basis of $\mathcal{J}_i$ and $K_i \subseteq K$}{
Set $\mathfrak{c}_i=\mathcal{J}_i \cap K_i$ and break
}
}
}
\nl Find the shortest vector $\mathbf{v}$ in the lattice generated by the module $\mathcal{M}=\bigoplus_{i=1}^d \frac{1}{Q_i}\mathfrak{c}_i \mathbf{b}_i^{\prime}$. \\
\nl Output $\mathbf{v}$.
\caption{SVP algorithm for a module lattice $M$ over a cyclotomic ring.}
\end{algorithm}
\section{SVP Average-Case Hardness}
We fix a large $M$, and select a prime ideal uniformly randomly from the set
\begin{align*}
    \{\mathfrak{p} \hspace{2mm} \text{a prime ideal}: N(\mathfrak{p})<M\}.
\end{align*}
Our method of reduction to a $\phi(s2^{r+1})$- or $\phi(sp^{r+1})$-dimensional sublattice $\mathfrak{c}$ only works when $p$ does not split completely in $L$, or equivalently, when $N(\mathfrak{p})=\rho$. By Chebotarev's density theorem, the number of primes less than $M$ that split completely in $L=\mathbb{Q}(\zeta_m)$ is approximately $\frac{M}{\phi(m)\log(M)}$, and hence there are $\frac{M}{\log(M)}$ prime ideals lying above those primes for which our reduction method cannot be applied. Now, if our algorithm is to provide a reduction, we need $N(\mathfrak{p})=\rho^f<M$ for some positive integer $m$, and so the prime ideal must lie over a rational prime $\rho$ such that $\rho<\sqrt{M}$. Hence, there are at most $\sqrt{M}$ of such primes, and as such, there are at most:
\begin{itemize}
    \item $\phi(s)2^{n-1}\sqrt{M}$ when $L=\mathbb{Q}(\zeta_{s2^{n+1}})$ for odd, positive integer $s \geq 3$,
    \item $\phi(s)(p-1)p^{n-1}\sqrt{M}$ when $L=\mathbb{Q}(\zeta_{sp^{n+1}})$ for odd positive integer $s$ and odd prime $p$, $\gcd(s,p)=1$.
\end{itemize}
Then, for the relevant factor $\alpha(L) \sqrt{M}$ listed above, the density of easy instances for our algorithm is at most $\frac{\alpha(L) \log(M)}{\sqrt{M}}$, which goes to zero as $M$ tends to infinity for all considered fields.
\\ \textbf{General Ideals and Modules:} For any of the distributions covered, we can make similar assertions about the density of easy cases in general ideals and modules. If the probability of choosing an easily solvable prime ideal lattice in a given distribution is $P$, then given a general ideal $\mathcal{I}$ that has $g$ distinct factors, the probability that the ideal is easily solvable is $P^g$, since we require that all the prime ideal factors are individually easy cases.
\\
The module case is slightly trickier to assess, given that, as far as the authors are aware, the definition of a decomposition group for modules is novel. However, if we take a pseudo-ideal $\mathcal{I}_k\mathbf{b}_k$ and let $\mathbf{b}_k=(b_{1,k},\dots,b_{D,k})$ for some $b_{i,K} \in \mathcal{O}_L$, then the decomposition field $K_k$ of the pseudo-ideal must contain the compositum of the decomposition fields of the principal ideals $\mathcal{I}_kb_{i,k}$ for $1 \leq i \leq D$, and so the decomposition group $\Delta_k$ must contain the subgroup of $\text{Gal}(L/\mathbb{Q})$ that fixes the compositum of these fields. If we have a $d \times D$ matrix defining the module, and we say that each of these ideals factorises into at most $g$ distinct factors, the probability of picking a module for which SVP is easily solvable is at least $P^{dDg}$. However, it is worth noting that this is an extremely crude lower bound as we have used very little information about the module as a structure. It is possible that further research into the decomposition groups of modules could suggest that the probability of picking an easily solvable module is much higher.
\section{Concluding Remarks}
In this paper, we have successfully generalised methods pioneered by Pan et. al. in \cite{idealrandomprime}. First, we showed that a solution to Hermite SVP for a general ideal lattice may be yielded by solving the Hermite-SVP with a smaller factor in a subideal by exploiting the decomposition group of the ideal. Moreover, we showed that a similar argument may be made for module lattices defined over the ring of integers of a field that is Galois over $\mathbb{Q}$, and present two methods by which we may approach the module case. For ideals of cyclotomic rings, we generalised Pan et. al.'s results to construct an efficient SVP algorithm for ideals of cyclotomic rings of arbitrary conductor, and showed that there exists certain classes of ideal lattices whose structure is significantly weaker to our algorithm than others. Moreover, we proved that our method may be generalised to the case of modules over arbitrary cyclotomic rings, which may have consequences for the security of MLWE. However, unlike the case for ideal lattices, the problem remains open to identify classes of modules over cyclotomic rings which may be weaker under attack by our algorithm.

\newpage
\appendix
\section{Supplementary Material}
\subsection{Proof of Theorem 8}
\begin{proof}
Let $f(x)$ be any irreducible factor of $\Phi_{p^As}(x)$. We take $t$ in Lemma \ref{tmonic} to be $p^{n-A+1}$ and check the conditions of Lemma \ref{tmonic} term by term. By Lemmas \ref{factornumber} and \ref{order}, $f(x)$ has order $p^As$ and degree $m$, where $m$ is the least positive integer such that $q^m \equiv 1 \mod sp^A$. We claim that the prime divisor $p$ of $t$ divides $e=sp^A$ but not $\frac{q^m-1}{e}$. Clearly $p$ divides $sp^A$, so we only need to show the latter. We proceed via induction, and assume first that $A=1$. Note that $q^{\phi(s)}=1 \mod s$ (a consequence of Euler's theorem), and so $q^{\phi(s)m^{\prime}} \equiv 1 \mod sp$, where $m^{\prime}$ is the smallest integer so that $q^{\phi(s)m^{\prime}} \equiv 1 \mod p$, and so $m \mid \phi(s)m^{\prime}$. If a contradiction to our claim were to hold, then we would have
\begin{align*}
    q^{\phi(s)m^{\prime}}-1 \equiv 0 \mod p^2.
\end{align*}
The above may be represented in terms of its decomposition into cyclotomic polynomials:
\begin{align*}
    q^{\phi(s)m^{\prime}}-1 \equiv \prod_{d \mid m^{\prime}} \Phi_d(q^{\phi(s)}) \mod p^2.
\end{align*}
By Lemma \ref{factornumber}, each cyclotomic polynomial factors into distinct irreducible factors mod $p$. Since $q^{\phi(s)} \equiv a \mod p$, the above can only be zero if the factor $(q^{\phi(s)}-a)$ occurs in the factorisation of two distinct cyclotomic polynomials, which implies that there is at least one polynomial $\Phi_k(x)$ containing the factor $(x-a)$ such that $k<m^{\prime}$. Then we must have
\begin{align*}
    q^{\phi(s)k}-1 \equiv \prod_{d \mid k} \Phi_{d}(q^{\phi(s)}) \mod p \equiv 0 \mod p,
\end{align*}
which is a contradiction since $m^{\prime}$ was defined to be the minimum integer such that $q^{\phi(s)m^{\prime}} \equiv 1 \mod p$. Then we must have $p \nmid \frac{q^{m^{\prime}\phi(s)}-1}{p}$ and since $m \mid m^{\prime} \phi(s)$, $p \nmid \frac{q^m-1}{e}$. Now, assume that $p \nmid \frac{q^{m^{\prime}\phi(s)p^{A-1}}-1}{p^A}$ for some $A \geq 1$, and as before we let $m^{\prime}$ be the minimum integer such that $q^{\phi(s)m^{\prime}} \equiv 1 \mod p$, so we must have that $m \mid m^{\prime}\phi(s)p^{A-1}$. By Lemma \ref{pqdivide}, since $p^A \mid q^{\phi(s)m^{\prime}p^{A-1}}-1$ and $p \nmid \frac{q^{\phi(s)m^{\prime}p^{A-1}}-1}{p^A}$, we must have $p^{A+1} \mid q^{\phi(s)m^{\prime}p^{A}}-1$ and $p \nmid \frac{q^{\phi(s)m^{\prime}p^{A}}-1}{p^{A+1}}$, and since $m \mid \phi(s)m^{\prime}p^{A}$, it holds that $p \nmid \frac{q^{\phi(s)m}-1}{e}$ as required.
Moreover, $p$ is an odd prime so $4 \nmid t$, hence all the requirements in Lemma \ref{tmonic} have been fulfilled. Therefore it follows that $f(x^{p^{n-A+1}})$ is also irreducible over $\mathbb{F}_q$. By this method we can get $\frac{\phi(p^As)}{m}$ irreducible factors of $\Phi_{p^{n+1}s}(x)$. Now, suppose that $\Phi_{p^{n+1}s}(x)$ factors into $\frac{\phi(p^{n+1}s)}{M}= \frac{(p-1)p^n\phi(s)}{M}$ distinct factors, where $M$ is the least integer such that $q^M \equiv 1 \mod p^{n+1}s$. By Lemma \ref{pqdivide}, we must have $M=mp^{n-A+1}$ and hence
\begin{align*}
    \frac{\phi(p^{n+1}s)}{M}=\frac{(p-1)p^n\phi(s)}{mp^{n-A+1}}=\frac{(p-1)p^{A-1}\phi(s)}{m}.
\end{align*}
Hence in this way we may obtain all the irreducible factors of $\Phi_{p^{n+1}s}(x)$.
\end{proof}
\end{document}